\renewcommand{\footnotetextcopyrightpermission}[1]{} %
\renewcommand{\footnotetextauthorsaddresses}[1]{} %
\newtheorem{theorem}{Theorem}[section]
\newtheorem{lemma}[theorem]{Lemma}
\newtheorem{definition}[theorem]{Definition}
\newtheorem{claim}{Claim}
\newtheorem{fact}{Fact}
\newtheorem{corollary}[theorem]{Corollary}
\newtheorem{observation}[theorem]{Observation}
\newtheorem{remark}{Remark}
\newtheorem{example}{Example}[section]
\renewcommand{\vec}[1]{\mathbf{#1}}
\newcommand{\ignore}[1]{}
\title{Pool Formation in Oceanic Games: Shapley Value and Proportional Sharing
}
\begin{document}

\author{Aggelos Kiayias}\affiliation{\institution{University of Edinburgh \& Input Output Global (IOG)} \country{UK}}
\author{Elias Koutsoupias} \affiliation{\institution{University of Oxford} \country{UK}}
\author{Evangelos Markakis} \affiliation{\institution{Athens University of Economics and Business \& Input Output Global (IOG)}\country{Greece}}
\author{Panagiotis Tsamopoulos} \affiliation{\institution{Athens University of Economics and Business}\country{Greece}}

\begin{abstract}
We study a game-theoretic model for pool formation in Proof of Stake blockchain protocols. In such systems, stakeholders can form pools as a means of obtaining regular rewards from participation in ledger maintenance, with the power of each pool being dependent on its collective stake.
 The question we are interested in is the design of mechanisms, ``reward sharing schemes,'' that suitably split rewards among pool members  and achieve favorable properties in the resulting pool configuration. 
 With this in mind, we initiate a non-cooperative game-theoretic analysis of the well known Shapley value scheme from cooperative game theory into the context of blockchains. In particular, we focus on the {\it oceanic} model of games, proposed by Milnor and Shapley (1978), which is suitable for populations where a small set of large players coexists with a big mass of rather small, negligible players. This provides an appropriate level of abstraction for pool formation processes that occur among the stakeholders of a blockchain.
 We provide comparisons between the Shapley mechanism and the more standard proportional scheme, in terms of attained decentralization, via a Price of Stability analysis and in terms of  
   susceptibility to Sybil attacks, i.e., the strategic splitting of a players' stake with the intention of participating in multiple pools for increased profit.
 Interestingly, while the widely deployed proportional scheme appears to have certain advantages, the Shapley value scheme,  which rewards higher the most pivotal players, emerges as a competitive alternative, by being able to bypass some of the downsides of proportional sharing in terms of Sybil attack susceptibility, while also not being far from optimal guarantees w.r.t. decentralization.
 Finally, we also complement our study with some variations of proportional sharing, where the profit is split in proportion to a superadditive or a subadditive function  of the stake, showing that our results for the Shapley value scheme are maintained in comparison to these functions as well. 
\end{abstract}

\maketitle

\newpage

\section{Introduction}
\label{sec:intro}

Permissionless blockchain protocols base participation on the resources that parties possess such as their {\em   computational power} (as in Bitcoin, \cite{Nakamoto2008}), or their {\em stake}  in the system, measured by the number of digital coins they own (as in Ethereum \cite{Ethereum}, see e.g.,  \cite{casper-incentives}). This resource-based operation opens up the possibility for pooling resources together and having multiple resource holders engage as a single entity in protocol operation. Pooling resources can have both positive effects for the system, such as reducing the variance of rewards awarded to participants, as well as negative ones, leading to centralization with a handful of large pools controlling the protocol. 
Importantly, viewing the blockchain protocol as a mechanism, the question that arises is what are the  objectives of this mechanism design problem and how it is possible to realize them. 

As a running paradigm, we will use throughout our work the pool formation process in Proof of Stake blockchains. In such systems, stakeholders are attracted to coalitions as a means of obtaining rewards from participation in block production. The power of such a (stake) pool is then dependent on the collective stake (up to some threshold) that its members possess, which affects the probability that a pool operator is selected as a validator or block producer. Creating or joining a pool can be done either explicitly via options that the protocol itself provides (referred to as {\it onchain pooling}, for example, in the Cardano blockchain) or by agreements among the stakeholders via smart contracts or even via informal agreements. In fact for the Ethereum protocol, there even exist platforms\footnote{To give an idea of their volume and popularity, their total market capitalization is currently over 30 billion USD.} (e.g. Rocketpool or Lido) that facilitate the grouping of stakeholders into pools. Essentially these platforms match smaller stakeholders to larger ones, so as to collectively accrue the amount of 32 ETH, required for a block validator.

Once a pool forms, the central question of interest, which is the focus of our work as well, is how should the pool members split the received rewards among themselves. This has led to an interesting and currently active research agenda, with several (and sometimes conflicting) desiderata that one would like to satisfy. 
Among these, a significant desideratum in blockchain communities is to promote decentralization. Viewed as a game, this means that we would like the reward scheme to induce equilibria with a relatively high number of pools. At the same time, we would also like to achieve resilience to Sybil attacks \cite{douceur02}, since the pseudonymity of blockchain systems may allow seemingly independent pools controlled by the same entity. Although completely eliminating this may be too much to hope for, cf. \cite{kwon2019impossibility}, one could aim for mechanisms that attempt to discourage Sybil behavior.  

If we look at current implementations, both in onchain pooling and in mediating platforms, 
the deployed reward scheme is typically very simple: the pool operator 
may keep a certain amount (accounting for a profit margin and/or operational costs) and the remaining amount is split in a proportional manner, based on stake contribution. This puts forth the natural question of whether the proportional scheme is the best thing to do.
To answer this, it seems reasonable to also consider alternative candidate solutions from economic theory. Concepts from cooperative game theory, like the core \cite{gillies1953some} or the Shapley value \cite{Shapley53}, to name a few, have been extensively studied %
in almost any setting that involves cost or profit sharing.
Despite however the popularity of such notions within game theory and economics, there have been surprisingly very few attempts to incorporate them in models tailored to blockchain protocols, such as \cite{LB+15,CSSZ20} (discussed further in our related work section). More generally, there is only a limited number of works that provide comparisons among different reward schemes, with some notable exceptions being \cite{brunjes2020reward,chen2019axiomatic,CHP22}.

\subsection{Our contribution}
The main question we are interested in is:

\vspace{6pt}
    \noindent\emph{How does the choice of a reward sharing mechanism within a pool affect performance w.r.t. decentralization, and w.r.t. resilience to Sybil attacks?} 
\vspace{6pt}

To study this question, we consider the model of \emph{oceanic games}, originally introduced by \cite{MS78} in the context of cooperative games. This model captures populations where (atomic) players with relatively large stakes coexist with smaller non-atomic players of negligible stake (the ocean). We believe this model is particularly suitable for certain blockchain environments, as also advocated in \cite{LLP19}. Empirical studies have shown that the stake distribution in many blockchain systems follows a power-law distribution. %
In such systems, a small number of participants hold a significant portion of the total stake, while the majority of users control only a small fraction. This aligns well with the oceanic game framework, where large stakeholders influence the system alongside a large number of small participants.

Within this model, we study and compare two reward schemes; the first one is based on the \emph{Shapley value}, as one of the most prominent concepts of profit sharing from game theory. The second one is \emph{proportional sharing}, used extensively in practice.

We analyze the corresponding \emph{non-cooperative games}, induced by these reward schemes, where each player chooses which pool to join.
We quantify the attained decentralization through a \emph{Price of Stability} analysis on the set of Nash equilibria. The Price of Stability measures the quality of the \emph{best} Nash equilibrium by comparing it to the ideal non-selfish solution. We employ the Price of Stability rather than the more pessimistic Price of Anarchy, as it is a more appropriate measure in our setting (where, as we will exhibit, some bad but quite unnatural equilibria may arise). The Price of Stability corresponds to an equilibrium that maximizes social welfare, and players will likely attempt to converge to it (or near it). Conversely, the Price of Anarchy corresponds to an equilibrium that minimizes welfare, and it is improbable that players would select this equilibrium.
With this in mind, we show that the proportional scheme always has an optimal equilibrium (i.e. the Price of Stability equals 1). Not far from this, the Shapley value scheme attains a constant Price of Stability bounded by 4/3. We complement this with an analysis for the purely atomic model, where the Shapley scheme deteriorates, albeit not by a lot, achieving a bound 
of $2$. 

We also consider the resiliency of such schemes against \emph{Sybil attacks}, where the players can split their stake and contribute to multiple pools. Although none of the schemes can completely avoid Sybil attacks, we highlight a potential drawback of the proportional scheme, which can be avoided by the use of the Shapley value. In particular, we prove that under the equilibria of the Shapley scheme identified in the previous sections, the players have no incentive to perform Sybil attacks. On the contrary this is not true for proportional sharing,
which is demonstrated with rather simple examples. %

Finally, we include a study of some non-linear variations of the proportional scheme. Specifically, we consider the \emph{proportional to square roots} and \emph{proportional to squares} mechanisms, which use subadditive and superadditive reward functions, respectively. We exhibit that both have their drawbacks, regarding our desiderata. Namely the proportional to squares scheme has unbounded Price of Stability whereas the proportional to square roots scheme is more vulnerable to Sybil attacks.

Overall, while our results reveal certain advantages of the proportional scheme, 
they also discover important drawbacks (cf. Sybil attacks above). At the same time, the Shapley value
emerges as a competitive alternative, that can bypass some of the negative aspects of proportional splitting,
while maintaining  comparable guarantees w.r.t. the Price of Stability for decentralization. 

\subsection{Related Work}

In terms of the model we use, oceanic games were introduced by \cite{MS78}. In their work, they define the Shapley value and study its properties for a class of cooperative games, namely weighted voting games, when the set of players contains both atomic and non-atomic players. Especially for blockchain systems, the work of \cite{LLP19} revisited and advocated the use of oceanic games for mining. They also studied further aspects concerning the stability of the grand coalition of all players. In our work we consider a non-cooperative model of oceanic games, where we care to evaluate the Nash equilibrium outcomes (i.e., partitions into pools).

Our work is related to the literature of weighted voting games in social choice theory. This is a class of cooperative games, where the value of a coalition is threshold-based, and reward sharing is viewed as determining the voting power of each player. The main solutions that have been proposed for such games are the core, the Shapley value and the Banzhaf index, and we refer to \cite{CW16} for an overview of results. The main difference from our model is that we have a non-cooperative game where each 
strategy profile essentially induces a collection of weighted voting games, one per pool.

The use of solution concepts from cooperative game theory in blockchain applications has been limited so far. In \cite{LB+15} the notion of the core was studied for a model of mining in Bitcoin. They obtained a negative result that the core is empty and hence there is always a motive for some players to deviate. The Shapley value as a reward scheme in mining pool games has also been proposed in \cite{CSSZ20}. The focus of that work was however on its computational aspects and no game-theoretic analysis was given. 

Some closely related works, along a similar spirit are \cite{brunjes2020reward} and \cite{marmolejo}, which present different reward sharing schemes for the Proof of Stake setting. 
Our difference is that they focus on how the system should allocate rewards to the pools, which are then split in a proportional manner among pool members (after the operator takes a cut). In our case we study an orthogonal question of  how the pool should allocate the received rewards among its members. 
Hence it can be seen as  complementary direction to \cite{brunjes2020reward}.

There are also several papers discussing decentralization from various angles. The work of \cite{AH22} introduces a model where the utility function of the players is decentralization conscious, in terms of effort exerted. In \cite{BGR24} a different model is analyzed where effort corresponds to committed stake, and is related to Tullock contests. Another attempt more tailored to Proof of Work systems is presented in \cite{AW22}, which highlights the heterogeneity of the costs invested in hardware. The main difference of all these papers with our work is that they do not study explicitly any pool formation process, as their models focus on the individual effort or investment on resources at equilibrium.  

One of the few works that formally studies the properties of proportional sharing for blockchain protocols is  \cite{chen2019axiomatic}. This is an axiomatic study, tailored for Proof of Work protocols, and demonstrates why the proportional rule can have favorable performance (but in a model without pool formation). Other variants that are studied include the proportional to square roots and the proportional to squares schemes that we also consider. A follow up work with a further axiomatic study is provided in \cite{CHP22}. These models 
however do not consider any pool formation aspects, and there is no discussion on the performance of these schemes w.r.t. decentralization. 

Finally, Sybil attacks have been a major concern for any environment where identity cannot be traced effectively \cite{douceur02}. In the context of auctions, it is often referred to as ``false-name'' bidding 
\cite{AM01}. It is natural that blockchains also offer opportunities for such manipulations, and general resilience against Sybil attacks seems a utopia \cite{kwon2019impossibility}. Nevertheless some positive results have also been obtained towards having reward schemes that can potentially deter such behaviors, e.g. \cite{brunjes2020reward}.

\section{Model and definitions}

\subsection{The oceanic model with atomic and non-atomic players}

We consider a population of stakeholders, say $N$, in a Proof of Stake protocol, who will be referred to from now on as the players of the underlying game. We will mostly focus on the model of oceanic games, as defined by \citet{MS78}, where the population is split into two types of players. Namely, $N = N_a\cup N_s$, 
where $N_a = \{1,\dots, n\}$ is a set of $n$ atomic players. Each $i\in N_a$, possesses stake equal to $a_i$. 
The remaining players are {\it small} players, each with a tiny stake $\epsilon > 0$. We let $\epsilon \rightarrow 0$, and we can view $N_s$ as a continuum of infinitesimally small, non-atomic players (i.e., the ocean). We denote by $L$ the total stake possessed by the ocean, which equals the measure of $N_s$. For convenience we imagine $N_s$ as arranged in the interval $[0, L]$. With this in mind, any subinterval $I\subseteq [0, L]$ corresponds to a mass of players possessing stake equal to the length of $I$. 
This model shares some characteristics with the atomic and non-atomic models in congestion/routing games~\cite{roughgarden16}.

\noindent {\bf Available strategies.} Every player is considering either to operate a pool or join the pool of some other player. The corresponding game is as follows:
every player $i$ needs to choose some index $j\in N$. If she chooses herself, i.e., $j=i$, it means that she is starting her own pool, otherwise she joins someone else's pool.
A strategy profile $\vec{x}$ specifies a choice of strategy $x_i$ for each player $i\in N$. We say that a strategy profile is {\it valid} if whenever some player $i$ chooses $x_i=j$, with $i\neq j$, we also have $x_j = j$, i.e., users who do not operate a pool do choose a valid pool to join. 
Technically, non-valid profiles could also arise in this model, but we will exclude them from our equilibrium analysis, as they lack a natural real life interpretation. %

\noindent {\bf Pool rewards.} A valid strategy profile induces a partition of the players into disjoint sets (pools), $\Pi = (S_1,\dots S_k)$, for some $k$, so that $\cup_j S_j = N$. Given such a partition, each pool $S_j$ receives a reward 
$\rho(S_j)$, determined by the execution of the blockchain protocol. 
In this work, we consider a simple reward function, as an attempt to model the rationale behind the operation of Ethereum or other protocols with a similar design. 
In Ethereum, a stake holder needs to have 32 ETH 
in order to register as a validator and claim rewards. This gives rise to a threshold-based scheme, where a pool can obtain rewards when its total stake exceeds a given threshold $h$.
To define this formally, for a given pool $S\subseteq N$, let $m(S)$ be the total stake that it possesses. This consists of two terms; the total mass of stake by the non-atomic players, which equals $|S\cap N_s|$, and the stake of the atomic players, which is equal to $\sum_{i\in S\cap N_a} a_i$. Then

\begin{align}
\label{eq:rho}
  \rho(S) &=
                 \begin{cases}
                   1, & \text{if }  |S\cap N_s|+\sum_{i\in S\cap N_a} a_i \geq h \\
                   0, & \text{otherwise}
                 \end{cases}
\end{align}

We say that a pool $S\subseteq N$ is a winning pool if $\rho(S) = 1$.

\begin{remark}
\label{rem:stake}
    For the remainder of the paper, we make the assumption that $a_i< h$, for any atomic player $i\in N_a$. The main reason is that, as also considered in the initial model of oceanic games by \cite{MS78}, we want to focus on players who need to collaborate with other people in order to form a successful pool. If there exist players with $a_i> h$, given our threshold-based reward function $\rho$, intuitively it makes sense for them to break their stake and run their own pools with avatars of size $h$ each, and participate in our game with the remainder $a_i ~mod~ h$. For a further discussion on this, see Section \ref{sec-app:big-players} in the Appendix. 
\end{remark}

\noindent {\bf Reward sharing scheme within a pool.} 
Once the pools are formed, the pool members need to split the obtained reward according to some agreed upon scheme. 

\begin{definition}[Reward sharing schemes]
Given a pool $S\subseteq N$ and its reward $\rho(S)$, a reward sharing scheme\footnote{Given the form of $\rho(S)$, each pool can be seen as a weighted voting game \cite{CW16} and the reward to each pool member can be also interpreted as her power within the pool.} specifies a payment allocation $p_i(S)$ to each pool member $i\in S$, dependent on the stake that $i$ contributes to $S$, so that all the payments sum up to  $\rho(S)$. 
\end{definition}

\begin{remark}
We focus on scenarios where the same reward scheme is applied to all formed pools. This is the case for example in protocols with onchain pooling (e.g., Cardano) but also in many of the platforms that facilitate pooling (e.g., for Ethereum). 
We do not consider secondary effects where pool members may re-negotiate their payments later on, among themselves. 
\end{remark}

In our model, we have not explicitly defined the cost that a pool may bear towards participating in the protocol. We essentially assume that $\rho(S)$ is the reward that is left for being shared, after subtracting the amount that the pool operator may claim to cover her operational costs.

\subsection{Equilibria, Decentralization and Price of Stability}
Under any strategy profile, we define the utility of each player to be the reward that she receives.
For a non-valid profile, the reward of a player who does not end up in a pool is simply equal to 0. 
We are interested in profiles that induce a partition into winning pools, and we will denote by $u_i(\Pi)$ the utility of a player under a formed partition $\Pi = (S_1,\dots, S_k)$. 

\begin{definition}
    Given a game $G$, a partition $\Pi$ into winning pools is a Nash equilibrium if for every player $i\in N$, 
    $u_i(\Pi) \geq u_i(\Pi')$, for any partition $\Pi'$ that arises from $\Pi$ if $i$ moves to a different pool or opens a new pool on her own. 
\end{definition}

A major concern in blockchain design is whether the actual execution of the protocol by selfish entities can result in a decentralized configuration. As a metric for this, we use the number of winning pools that are formed at an equilibrium. Given a game, the ideal scenario is that the players split into pools of size exactly $h$. Since this may not always be possible,
we let $OPT(G)$, for a game $G$, be the maximum number of winning pools that can form over all possible 
partitions, i.e., 

\begin{equation}
\label{eq:OPT}
    OPT(G) = 
    \max \{ t: \exists \text{  $\Pi = (S_1,\dots S_t)$ s.t. } m(S_i) \geq h  ~~\forall i\in [t] \}
\end{equation}

In the purely atomic case, when $N_s=\emptyset$, finding $OPT(G)$ is a dual version of the well known Bin Packing problem, referred to as dual Bin Packing in \cite{AJKL84}, and is easily seen to be intractable.

An interesting question that arises is whether equilibria can lead to an approximate solution. In order to evaluate Nash equilibria, the usual metrics are the Price of Anarchy (PoA) and the Price of Stability (PoS). The Price of Anarchy is not an appropriate metric here as it is unavoidable to have completely centralized equilibria, which are however rather unrealistic. This is captured below.

\begin{claim}
\label{mycl:PoA}
    When $a_i< h$ for all $i\in N_a$, the grand coalition consisting of all players is an equilibrium under any reward scheme, and hence the Price of Anarchy is $\Omega(n)$.
\end{claim}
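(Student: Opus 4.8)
The plan is to prove the two parts separately: first, that the grand coalition $\Pi^{\star}=(N)$ is a Nash equilibrium under every reward scheme, and then that there is a family of $n$-player instances on which this equilibrium lies a factor $\Omega(n)$ away from the optimum. I use throughout the (implicit, standing) assumption that the total stake suffices to form at least one winning pool, i.e.\ $m(N)\ge h$; otherwise no partition into winning pools exists and the statement is vacuous.

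\emph{Step 1: the grand coalition is an equilibrium.} Fix a player $i$. Since $\Pi^{\star}$ consists of a single pool, there is no ``different pool'' for $i$ to move to, so the only deviation in the sense of the Nash equilibrium definition is for $i$ to open a new pool of her own; this turns $\Pi^{\star}$ into $\Pi'=(\{i\},\,N\setminus\{i\})$, and if $i$ is the operator of the grand coalition it leaves $\Pi^{\star}$ unchanged. In $\Pi'$ the pool containing $i$ is the singleton $\{i\}$ (or an infinitesimal interval, if $i$ is non-atomic), of stake $m(\{i\})=a_i<h$ by Remark~\ref{rem:stake} (respectively of stake tending to $0$), hence $\rho(\{i\})=0$ and $u_i(\Pi')=0$. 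On the other hand $u_i(\Pi^{\star})=p_i(N)\ge 0$, since a reward sharing scheme distributes $\rho(N)=1$ via non-negative payments. Thus $u_i(\Pi^{\star})\ge u_i(\Pi')$ for every $i$, so $\Pi^{\star}$ is a Nash equilibrium; the only property of the scheme used is non-negativity of the payments, which is why the conclusion holds for \emph{any} scheme.

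\emph{Step 2: the $\Omega(n)$ bound.} It suffices to exhibit, for each $n$, a game $G_n$ with $OPT(G_n)=\Omega(n)$: the Price of Anarchy of $G_n$ equals $OPT(G_n)$ divided by the least number of winning pools over all equilibria of $G_n$, and this least number is exactly $1$ — it is at least $1$ because every equilibrium is by definition a partition into winning pools, and it is at most $1$ because the grand coalition, which is an equilibrium by Step~1, forms a single winning pool. For $G_n$, take $n$ atomic players of stake $a_i=h/2<h$ each and empty ocean. Pairing the players gives the partition $(\{1,2\},\{3,4\},\dots)$ into $\lfloor n/2\rfloor$ winning pools, so $OPT(G_n)\ge\lfloor n/2\rfloor$, and therefore the Price of Anarchy of $G_n$ is at least $\lfloor n/2\rfloor=\Omega(n)$.

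The argument is short, and the only points that require care are (a) that a unilateral deviation away from $\Pi^{\star}$ can never be a ``join another existing pool'' move — this is what forces the post-deviation reward of the deviator to be $0$ — and (b) the two background assumptions, namely $m(N)\ge h$ (so that $\Pi^{\star}$ is a valid winning partition) and non-negativity of the reward scheme. Both are consistent with the model as set up, so I do not anticipate any genuine obstacle.
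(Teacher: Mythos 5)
Your proof is correct and is exactly the argument the paper intends: the paper states this claim without proof (treating it as immediate), and your two steps — that the only deviation from the grand coalition is opening a losing singleton pool of stake $a_i<h$ worth $0$, plus an instance with $n$ players of stake $h/2$ giving $OPT=\Omega(n)$ — supply the obvious justification. No issues.
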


We will therefore turn our attention to the Price of Stability, defined as the ratio between the quality of the optimal solution and the quality of the best Nash equilibrium.
Given a partition $\Pi$, let $W(\Pi)$ be the number of winning pools that are formed under $\Pi$. 

\begin{definition}
    The Price of Stability (PoS) of a game $G$ is computed as $\min_{\Pi\in NE} \frac{OPT(G)}{W(\Pi)}$, where $NE$ is the set of all equilibrium partitions. 
\end{definition}

The Price of Stability appears to be an appropriate quality benchmark for the equilibria of the games we consider. As the Price of Stability aligns with maximizing social welfare, the dynamics of threshold-based mechanisms would likely guide players towards more decentralized outcomes.

Finally, we note that in the remainder of the paper, any missing proofs can be found in the Appendix.

\section{Analysis of the Shapley value reward sharing scheme}
\label{sec:shapley}

In this section, we study the scheme, where the reward in each winning pool is distributed using the Shapley value of each player. 
The Shapley value was introduced by \citet{Shapley53} as the unique reward allocation rule that satisfies a particular set of axioms in cooperative games. Ever since, there is hardly any economic setting involving the sharing of costs or payoffs where the Shapley value has not been considered. Prominent examples are the design of truthful cost-sharing mechanisms \cite{MS01} and the study of power indices in voting games \cite{CW16}. Furthermore, despite the fact that the exact computation of the Shapley value is a \#P-hard problem \cite{DP94}, there are very efficient sampling-based approximations that work quite well \cite{BMRPRS10,fatima2008linear}. Even further, the Shapley value has been deployed successfully in practice, in machine learning applications, as a way to identify important parameters in the training of neural networks \cite{RW+22}.

\subsection{Definition of the Shapley value in the oceanic model}
\label{subsec:Shapley-def}

The standard definition and use of the Shapley value is for the purely atomic model, when $N_s=\emptyset$. It is very instructive to recall first how the Shapley scheme works there, without any small players.

\noindent {\bf The atomic model.} The rationale for the Shapley value is that within each pool, \emph{each stakeholder gets their expected  marginal contribution} if the members of the pool would arrive in a random order. Hence, one needs to take the average over all permutations. The Shapley value for a player $i$, when she belongs to a pool $S$ is therefore given by the following formula:
\begin{align}
\label{eq:shapley}
    \phi_i(S)=\sum\limits_{T\subseteq S\setminus \{i\}}\frac{|T|!(|S|-|T|-1)!}{|S|!}(\rho(T\cup \{i\})-\rho(T))
\end{align}

This is a valid scheme since it is shown already in Shapley's original work that $\sum_{i\in S} \phi_i(S) = \rho(S)$.

\begin{example}
\label{example-PoS}
Consider a game $G$ with only atomic players, with stake distribution $\vec{s}=(3,1,1,1,1,1)$
and let the threshold be $h=4$. 
The optimal pool formation is to have two pools, each
with total weight $4$. Thus $OPT(G)=2$.
However, this is not a Nash equilibrium,
because the large player prefers to switch to the other pool: the reward of the large player in the pool with one small player (with stake vector $(3,1)$) is $1/2$, while if she joins the other pool with the 4 small players, so that the pool becomes $(3,1,1,1,1)$, it is $3/5$. Therefore, only the grand coalition can be a Nash equilibrium (which is a rather extreme case, since we will see that in bigger games other equilibria also arise). \qed
\end{example}

\noindent {\bf Extension to the oceanic model.}
We discuss now how to compute the Shapley value, when we include non-atomic players. Consider a pool with non-atomic players of total mass $k$ and $t$ large players with stakes $(a_1, \dots, a_t)$; the total stake of the pool is $k + a_1 + \dots + a_t$. Imagine that the small players are arranged in the interval $[0, k]$. Then a random arrival of the players corresponds to placing arrival times $L_{i_1}, L_{i_2},\dots, L_{i_t}$ for the atomic players on the interval $[0, k]$. The arrivals can be viewed as filling in the stake $k + a_1 + \dots + a_t$, say starting at $0$, from left to right, as follows: first, a mass of small players (possibly empty) arrives, filling in the interval $[0, L_{i_1}]$; then the first large player arrives, so that the stake of the current subset of players accrues to $L_{i_1} + a_{i_1}$. This is followed by another mass of small players; then a second large player arrives at $L_{i_2}$, accumulating now a total stake of $L_{i_2} + a_{i_1}+ a_{i_2}$; and so on. 
For random arrivals, the points $L_{i_1}, L_{i_2},\dots, L_{i_t}$ are \emph{uniformly distributed within the interval of length $k$}. The reward of each large player $i$ is then the probability that the threshold $h$ falls within the interval that makes player $i$ pivotal. 
To be more precise, for a given arrival order, and for an atomic player $i$, let $P(i)$ be the set of other atomic players preceding
$i$ in the order. Then, if $L_i$ is the random  arrival time of player $i$, the Shapley value is given as follows:
\begin{equation}
\label{eq:shapley-oceanic}
    \phi_i(S) = Pr\left[\sum_{j\in P(i)} a_j + L_i   < h \leq  \sum_{j\in P(i)} a_j + L_i + a_i\right]
\end{equation} 

After computing the reward of all large players, the remaining reward is distributed evenly among the small players. Hence, for the small players, all that matters  is the reward per unit of stake, which is what we will consider in the game-theoretic analysis that will follow. 

\subsection{The Price of Stability in the oceanic model}

For a pool with many large players along with a mass of small players, the exact formula for the Shapley reward might be too complex to derive analytically. For our study on the Price of Stability however, it will suffice to only consider equilibria in which every pool has at most one large player. Hence, when arguing about equilibria and deviations to other pools, it is sufficient to be able to compute the rewards of pools with at most 2 large players.
This is done in the following lemma, which we will use repeatedly. 

\begin{lemma} \label{lemma:nonatomic-rewards}
  Consider a pool with a mass of non-atomic players equal to $k$ and total
  stake at least $h$.  
  \begin{itemize}
      \item If $k\leq h$ and the pool has a single large player with
  stake $a$, the Shapley value of this player is  $(k+ a-h)/k$. 
  \item If $k > h$ and the pool has a single large player with  stake $a$, the Shapley value of this player is  $a/k$. 
  \item If $k\leq h$ and the pool has two
  large players with stakes $a_1,a_2$, respectively, the Shapley value of the player with stake $a_1$
  is equal to (with an analogous formula for the second player)
  \[
   \frac{(h-a_2)^2-(\max(0, h-a_1-a_2))^2+ (\max(0, a_1-h+k))^2}{2k^2}.
 \]
\end{itemize}
 
In all 3 cases above, the remaining reward is distributed equally to the non-atomic
 players.
\end{lemma}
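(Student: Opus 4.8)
The plan is to evaluate the pivotality‑probability formula~\eqref{eq:shapley-oceanic} directly in each of the three regimes. Write $L_i$ for the (uniform) arrival time of a large player $i$ inside the interval $[0,k]$ occupied by the ocean; when the pool contains two large players, $L_1$ and $L_2$ are independent and uniform on $[0,k]$, and the arrival order between them is decided by which of $L_1,L_2$ is smaller, each order having probability $1/2$. For the non‑atomic players, the leftover $\rho(S)=1$ minus the large players' Shapley values is shared uniformly by the ocean; this is both the prescription of the scheme and, by symmetry together with efficiency ($\sum_{i\in S}\phi_i(S)=\rho(S)$), the genuine Shapley value of each infinitesimal player. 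So the whole lemma reduces to computing the large players' Shapley values.

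\emph{Single large player.} Here $P(i)=\emptyset$, so~\eqref{eq:shapley-oceanic} reads $\phi_i(S)=\Pr[\,h-a\le L_i<h\,]$, the chance that the uniform point $L_i$ falls in $[h-a,h)\cap[0,k]$. The hypotheses pin down this intersection: $a<h$ (Remark~\ref{rem:stake}) gives $h-a>0$, and the total‑stake assumption $k+a\ge h$ gives $h-a\le k$. If $k\le h$ then the right endpoint $h$ is beyond $k$, so the event is just $\{L_i\ge h-a\}$ and $\phi_i(S)=(k-(h-a))/k=(k+a-h)/k$; if $k>h$ then $[h-a,h)$ lies entirely inside $[0,k)$ and $\phi_i(S)=a/k$. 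This settles the first two bullets.

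\emph{Two large players.} I would condition on the order of the two arrivals. If player $1$ arrives first, $P(1)=\emptyset$ and --- using $k\le h$ so that $L_1<h$ holds almost surely --- player $1$ is pivotal iff $L_1\ge h-a_1$; integrating this against the constraint $L_1<L_2$ over $[0,k]^2$ contributes $\bigl(\max(0,\,a_1-h+k)\bigr)^2/(2k^2)$. If player $2$ arrives first, $P(1)=\{2\}$ and player $1$ is pivotal iff $h-a_1-a_2\le L_1<h-a_2$; integrating against $L_2<L_1$ contributes $\bigl((\min(k,\,h-a_2))^2-(\max(0,\,h-a_1-a_2))^2\bigr)/(2k^2)$. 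Adding the two contributions gives the displayed expression, and the formula for player $2$ follows by swapping $a_1$ and $a_2$. The $\max$/$\min$ clamps arise solely from clipping the feasible rectangle to $[0,k]^2$ and are resolved using $a_1,a_2<h$, $k\le h$ and $m(S)\ge h$; in particular the bare $(h-a_2)^2$ in the statement reflects the regime $h-a_2\le k$, which is presumably the relevant one when the lemma is applied in the equilibrium analysis (a large player joining a pool already winning with a single large player of stake $a_2$ and mass $k$, so that $k\ge h-a_2$).

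The only real obstacle is the two‑player case: one must keep track, in each arrival order, of which of the two inequalities in~\eqref{eq:shapley-oceanic} is automatically satisfied and of how the feasible region clips against $[0,k]^2$ --- this bookkeeping is where errors creep in, and it is exactly where all three hypotheses $k\le h$, $a_i<h$, $m(S)\ge h$ are needed. A useful consistency check at the end is to confirm $\phi_1+\phi_2\le 1$, so that a nonnegative remainder is indeed left for the ocean.
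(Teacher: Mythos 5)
Your proposal is correct and follows essentially the same route as the paper's proof: in every case one evaluates the pivotality probability of formula~\eqref{eq:shapley-oceanic} geometrically over $[0,k]$ (resp.\ over $[0,k]^2$ split by arrival order, as in Figure~\ref{fig:2largeplayers}). Your extra care with the $\min(k,\,h-a_2)$ clamp is a minor genuine refinement --- the paper's bare $(h-a_2)^2$ implicitly assumes $k\ge h-a_2$, which, as you correctly note, holds in every application of the lemma (the pool joined was already winning with the single large player $a_2$).
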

    
\begin{proof}
Consider a pool with a single large player and with $k\leq h$. We apply the procedure described in Section \ref{subsec:Shapley-def} on how to determine the reward of the large players. If $L_1$ is the random arrival of the single player within the interval $[0, k]$, then the player is pivotal, only when $L_1\in [h-a,k]$. The probability of this event
  is $(k-(h-a))/k$, which proves the first part of the lemma.

Suppose now that $k> h$. In that case, the large player is pivotal, only when $L_1\in [h-a, h)$, and this even occurs with probability $a/k$.

Finally, for the last part of the lemma, the formula for a pool of two large players,  is a bit more
  complicated. If $L_1, L_2$ are their random arrival times, we distinguish two cases: $L_1\leq L_2$ and $L_2<L_1$. In
  the first case, player 1 arrives before the second player, and hence she receives a reward when
  $L_1\in [h-a_1,k]$. In the second case, player 2 arrives before player 1, which means that for player 1 to be pivotal, it should hold that $L_1 + a_2\in[h-a_1, h]$. This implies that
  $L_1\in [h-a_1-a_2, h-a_2] \cap [0,k]$. The ranges for the values of $L_1, L_2$, under which player 1 is pivotal, are depicted in
  Figure~\ref{fig:2largeplayers}. The probability that a random point
  $(L_1,L_2)$ falls into these areas gives the reward of player 1.

\begin{figure}
  \centering
  \scriptsize
  \begin{tikzpicture}[scale=4]
    \draw (0,0) rectangle (1,1);

    \draw[->] (0,0) -- (1.1,0) node[right] {$L_1$};
    \draw[->] (0,0) -- (0,1.1) node[above] {$L_2$};

    \foreach \x/\xlabel in {0/0, 0.25/$h-a_1-a_2$, 0.85/$h-a_2$, 0.6/$h-a_1$, 1/$k$} {
        \draw (\x,0) -- (\x,-0.02) node[below] {\xlabel};
        \draw (0,\x) -- (-0.02,\x) node[left] {\xlabel};
    }

    \fill[gray!50] (0.25,0) -- (0.85,0) -- (0.85,0.85) -- (0.25,0.25) -- cycle;
    \fill[gray!50] (0.6,1) -- (1,1) -- (0.6,0.6) -- cycle;

    \draw[thick] (0,0) -- (1,1);

\end{tikzpicture}
\caption{\footnotesize Calculation of the Shapley value for the first player in a winning pool of
  two large players with stakes $a_1$, $a_2$ and non-atomic mass of
  $k\leq h$. It is easy to verify that player 1 is
  pivotal when $(L_1, L_2)$ is in the gray areas. The gray triangle shows the pivotal cases when $L_1 \leq L_2$, whereas the trapezoid corresponds to the cases with $L_1\geq L_2$. The reward of player 1 is then the gray area (equal to
  $((h-a_2)^2-\max(0, h-a_1-a_2)^2+\max(0, a_1-h+k)^2)/2$) divided by the
  total area (equal to $k^2$). When $h-a_1>k$, the gray triangle
  disappears. Similarly, when $h-a_1-a_2<0$, the gray trapezoid becomes
  a triangle.}
  \label{fig:2largeplayers}
\end{figure}
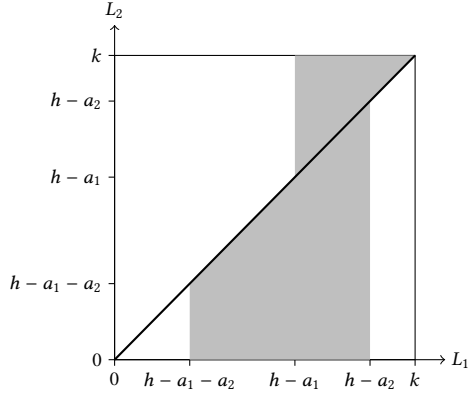

\end{proof}

We exhibit below an example to demonstrate 
how we can construct good equilibria in this game.

\begin{example}
\label{example:non-atomic}
Fix the threshold at $h=3$ and consider a population where all large players have stake $a=2$. Suppose also that there is enough mass of small players, so that we can create a partition into two types of pools. The first type has a single large player and non-atomic players
with total mass $k=2$ (hence each such pool has total stake equal to $4$). The second type has only non-atomic players with mass
$l=4$. Let us argue that this partition is a Nash equilibrium.

The reward of each large player, as given by
Lemma~\ref{lemma:nonatomic-rewards}, is $(a-h+k)/k=1/2$. If the player
switches to a pool with only non-atomic players, her reward (by applying now the second part of Lemma \ref{lemma:nonatomic-rewards}) will be
$a/l=1/2$, so the player has no reason to switch. 
Also, if the player
switches to a pool with another large player, her rewards will be at
most $1/2$, since she will share the reward with another identical
large player, and there is also a mass of non-atomic players, who also receive some part of the profit. Furthermore, we can verify that for the non-atomic players, the reward per unit of stake is the same, and equal to $1/4$, in both types
of pools. Therefore this partition is a Nash equilibrium. Every pool at this equilibrium has a total stake equal to 4, which shows that for this instance the PoS
is at most $4/h=4/3$. 
\qed
\end{example}

The previous example also leads to a lower bound on the Price of Stability.

\begin{claim}
\label{cl:lower-bound-oceanic}
    The Price of Stability in the oceanic model, is at least $\frac{4}{3}$.
\end{claim}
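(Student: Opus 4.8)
The plan is to parametrise the instance of Example~\ref{example:non-atomic} and show that \emph{every} Nash equilibrium of it falls short of the optimum by exactly a factor $4/3$. Fix $h=3$, take $n$ atomic players each of stake $a=2$, and let the ocean have mass $L=2n+4m$, where $m\ge 1$ is an integer chosen so that $n+4m\equiv 0\pmod 3$ (such $m$ always exists). First I would compute $OPT$: pairing each large player with a unit mass of ocean gives $n$ winning pools of total stake exactly $h=3$, and the remaining ocean of mass $L-n=n+4m$ splits into $(n+4m)/3$ ocean-only pools of stake $3$; since the total stake is $2n+L=4n+4m$ and every winning pool has stake at least $h=3$, this is optimal, so $OPT=(4n+4m)/3$. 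Example~\ref{example:non-atomic} already supplies a Nash equilibrium of this instance, namely $n$ pools of the form (one large player plus ocean mass $2$) together with $m$ ocean-only pools of mass $4$, in which exactly $W=n+m$ winning pools form.

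The core of the argument is to prove that at \emph{any} equilibrium every winning pool has total stake at least $4$. Granting this, each equilibrium has $W\le (4n+4m)/4=n+m$, hence $OPT/W\ge \tfrac{(4n+4m)/3}{n+m}=\tfrac43$; together with the equilibrium above (which attains $W=n+m$) this yields a Price of Stability of exactly $4/3$ for the instance, which proves the claim. To establish the stake bound I would case on pool composition. A pool with at least two large players has stake at least $2a=4$. For a pool with a single large player and ocean mass $k$: if $k>h$ the stake is $2+k>5$; if $k\le h$, then by Lemma~\ref{lemma:nonatomic-rewards} the large player's reward is $(k-1)/k$ and its ocean members receive $1/k^2$ per unit, and (using that all small players get the same per-unit reward at equilibrium) the deviation in which the large player moves into an ocean-only pool — whose mass must then equal $k^2$ — would pay her $2/k^2$, so equilibrium forces $(k-1)/k\ge 2/k^2$, i.e.\ $k\ge 2$; a similar comparison (via the ``move into an ocean-only pool'' deviation for the large players of the other pool types) shows every ocean-only pool has mass at least $4$. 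The one remaining case is an equilibrium with \emph{no} ocean-only pool, which I would rule out by counting: if some single-large-player pool had ocean mass $k_0<2$, then the equal-per-unit-reward condition forces every single-large-player pool to have ocean mass exactly $k_0$, and the no-deviation condition for a large player moving from a multi-large-player pool into such a pool (via the two-large-player formula of Lemma~\ref{lemma:nonatomic-rewards}, valid since $1<k_0<2$) forces every multi-large-player pool to carry ocean mass less than $2$; writing $p$ and $q$ for the numbers of single- and multi-large-player pools, one has $p+2q\le n$ (as each multi-pool holds $\ge 2$ large players), whereas the total ocean mass $L$ is strictly less than $2p+2q\le 2n-2q\le 2n$, contradicting $L=2n+4m>2n$.

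I expect this last case to be the main obstacle: it is where the two-large-player Shapley formula is genuinely used and where the counting has to be set up carefully, including the observation that a multi-large-player pool carrying ocean is still subject to the equal-rate constraint, and that pools of three or more large players need only the crude bound stake $\ge 2a=4$. By contrast the cases with an ocean-only pool reduce to single-player deviation inequalities that follow directly from Lemma~\ref{lemma:nonatomic-rewards}. One boundary value to dispatch along the way is $k_0=1$, where the lone large player's reward is $0$ and (since $L>1$, so some other pool exists and is winning) she has an immediate profitable deviation, so no such configuration can be an equilibrium.
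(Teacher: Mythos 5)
Your overall strategy differs from the paper's: you take $n$ large players and try to show that \emph{every} equilibrium pool has stake at least $4$, whereas the paper uses an instance with a \emph{single} large player of stake $2$ (plus ocean $4t+2$), so that every equilibrium consists of one pool containing that player and a collection of identical ocean-only pools; the contradiction then needs only the single-large-player rewards of Lemma~\ref{lemma:nonatomic-rewards} and the equal-rate condition $l=k^2$ combined with $k>2$. Your main cases (single-large-player pool with $k\le h$, ocean-only pools, and the $k>h$ and two-large-player stake bounds) are correct and match the paper's calculations where they overlap.

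The genuine gap is in your ``no ocean-only pool'' sub-case, specifically the assertion that the no-deviation condition for a large player leaving a multi-large-player pool and joining a single-large-player pool of ocean mass $k_0\in(1,2)$ ``forces every multi-large-player pool to carry ocean mass less than $2$.'' It does not. For a pool of two stake-$2$ players with ocean mass $\kappa\le h$, each large player's reward is $g(\kappa)=\bigl(1+\max(0,\kappa-1)^2\bigr)/(2\kappa^2)$, and $g$ is decreasing on $(1,2)$ but \emph{increasing} on $(2,3]$; e.g.\ $g(3)=5/18=g(1.5)$, so the deviation inequality $g(\kappa)\ge g(k_0)$ is satisfied at $\kappa=3$ even though $\kappa\not<2$. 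Moreover, for pools with three or more large players the paper gives you no Shapley formula at all, so the comparison cannot even be set up from Lemma~\ref{lemma:nonatomic-rewards}. The counting \emph{can} be repaired: the ocean's equal-rate condition forces any multi-pool carrying ocean $\kappa>0$ to satisfy $\kappa=k_0^2(1-\text{total large-player reward})<k_0^2<4$, and with at least two large players per such pool one still gets total ocean $<2p+4q\le 2n<L$; but that is a different argument from the one you state, and it needs the (easy but unproved) fact that each large player's Shapley value is strictly positive. The cleanest fix is simply to set $n=1$, which eliminates multi-large-player pools entirely and collapses your proof to essentially the paper's.
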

\begin{proof}
    We consider a special case of Example \ref{example:non-atomic}, with $h=3$. The set of players consists of only one atomic player with stake $a=2$ and a mass of small players, equal to $m = 4t+2$, where $t$ is a large integer. The optimal partition for this game would be to have one pool including the atomic player and one unit of non-atomic players, and then have the remaining players form pools of size 3, except the last pool which will have one additional unit. Hence $OPT(G) = 1 + 4t/3$ (assuming that $4t$ is divisible by 3). By the construction of Example \ref{example:non-atomic} above, we know that there exists an equilibrium in which a mass of size 2 forms a pool with the large player, and the remainder $4t$ units of small players form $t$ pools of size $4$ each. 
    
    For the sake of contradiction, assume that there is a better equilibrium formation that achieves a ratio better that $\frac{4}{3}$, compared to the optimal partition. In this equilibrium, the large player has to be paired with some mass, say of size $k$, of small players and all other pools have to consist only of small players of mass $l\geq h=3$. The latter pools have to be of the exact same size otherwise the small players would have an incentive to move. In order for this formation to achieve a better than $4/3$ Price of Stability, it has to be the case that $3\leq l<4$. Since, it is an equilibrium it must hold that the large player should not have a motive to move to a pool consisting only of small players, hence by Lemma \ref{lemma:nonatomic-rewards}, we have that $\frac{k-h+a}{k}\geq \frac{a}{l}\Leftrightarrow l(k-h+a)\geq a\cdot k \Rightarrow l(k-1)\geq 2 k \Rightarrow  k>2$, where the implication holds from the fact that $l<4$, $a=2$ and $h=3$. Moreover, no small agent who has been paired with the large player should want to move to a pool of only small players, and vice versa, so that  $\frac{1-\frac{a-h+k}{k}}{k}=\frac{1}{l}\Rightarrow \frac{1}{k^2}=\frac{1}{l}\Leftrightarrow l=k^2$. Combining the facts that $k>2$ and $l=k^2$, we obtain that $l>4$ and thus we arrive to the desired contradiction.

    Therefore, the Price of Stability is lower bounded by
    $\frac{4t/3 + 1}{t+1}$, which tends to 4/3 as $t\rightarrow \infty$.

\end{proof}

Our main result in this section is a matching upper bound, provided that all atomic players have a sufficiently large stake\footnote{The theorem requires large players to hold at least $h/4$ stake, which, coincidentally, is the minimum stake requirement for validators in the Rocketpool platform.} and that the total population of small players is sufficiently large.

\begin{theorem} 
\label{thm:non-atomic}
In the oceanic model, and given any constant $\epsilon>0$, the Price of
Stability for the Shapley scheme, is at most $4/3 + \epsilon$, when all large players have stake in
$(h/4, h)$ and there is a sufficiently large mass of non-atomic players (dependent on $\epsilon$ and the total stake of the large players).
\end{theorem}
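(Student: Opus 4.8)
The plan is to construct, for any $\epsilon>0$ and any sufficiently large non-atomic mass $L$, a Nash equilibrium $\Pi$ in which every pool has total stake at most $(4/3+\epsilon)h$. This suffices: since every winning pool needs stake at least $h$ and the total stake is $m(N)=L+\sum_{i\in N_a}a_i$, we have $OPT(G)\le m(N)/h$, while $W(\Pi)\ge m(N)/((4/3+\epsilon)h)$ because all pools of $\Pi$ are winning and have stake at most $(4/3+\epsilon)h$; hence $OPT(G)/W(\Pi)\le 4/3+\epsilon$. So the whole argument reduces to building and verifying such a $\Pi$.

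The construction follows Example~\ref{example:non-atomic}. Fix a parameter $l$ and form: (i) for each large player $i$, a pool consisting of $i$ together with a mass $k_i:=\sqrt{(h-a_i)\,l}$ of small players; and (ii) $M$ pools consisting only of small players, each of mass exactly $l$, absorbing all the remaining non-atomic mass. Concretely, take $M$ to be the largest integer with $M\cdot\tfrac{4h}{3}+\sum_i\sqrt{(h-a_i)\tfrac{4h}{3}}\le L$, and then let $l$ be the unique solution of $Ml+\sum_i\sqrt{(h-a_i)l}=L$; one checks $l\in\big[\tfrac{4h}{3},\,\tfrac{M+1}{M}\cdot\tfrac{4h}{3}\big]$, which lies inside $\big[\tfrac{4h}{3},(4/3+\epsilon)h\big]$ once $L$ (hence $M$) is large enough relative to $\epsilon$ and the total stake of the large players. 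Largeness of $L$ also guarantees $\sum_i k_i\le L$, so the pairing in (i) is feasible. Since $a_i>h/4$ we have $h^2/(h-a_i)>\tfrac{4h}{3}$, so for $L$ large (hence $l$ close to $\tfrac{4h}{3}$) every $k_i$ satisfies $k_i\le h$; one also verifies $h\le k_i+a_i\le(4/3+\epsilon)h$. Thus $\Pi$ is a valid partition into winning pools with all pool stakes in $[h,(4/3+\epsilon)h]$, and by Lemma~\ref{lemma:nonatomic-rewards} the small players in pool $i$ receive reward rate $(h-a_i)/k_i^2=1/l$ per unit of stake, matching the rate $1/l$ in the all-small pools; so every small player is indifferent across all pools (and a singleton pool gives reward $0$), and has no profitable deviation.

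It remains to rule out large-player deviations. Opening a singleton pool yields reward $0$. If player $i$ moves to an all-small pool she gets $a_i/l$ by Lemma~\ref{lemma:nonatomic-rewards}, whereas her current reward is $1-\sqrt{(h-a_i)/l}$; rearranging, $1-\sqrt{(h-a_i)/l}\ge a_i/l$ is equivalent to $l^2-(a_i+h)l+a_i^2\ge0$, i.e.\ $l\ge l_{\min}(a_i):=\tfrac12\big((a_i+h)+\sqrt{(h-a_i)(h+3a_i)}\big)$. The key computation is that $\max_{0<a<h}l_{\min}(a)=\tfrac{4h}{3}$, attained uniquely at $a=\tfrac{2h}{3}$ (differentiate $l_{\min}$); since $l\ge\tfrac{4h}{3}\ge l_{\min}(a_i)$, this deviation is not profitable. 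Finally, if $i$ moves into the pool of another large player $j$, the new pool has two large players and non-atomic mass $k_j\le h$, so her reward is given by the third case of Lemma~\ref{lemma:nonatomic-rewards}; substituting $k_j^2=(h-a_j)l$ and splitting into cases according to the signs of $h-a_i-a_j$ and $a_i-h+k_j$, one checks this quantity is at most $1-\sqrt{(h-a_i)/l}$. Hence $\Pi$ is a Nash equilibrium, and with the first paragraph the Price of Stability is at most $4/3+\epsilon$.

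I expect the main technical obstacle to be this last verification --- ruling out a large player joining another large player's pool --- which needs a somewhat tedious case analysis of the two-large-player Shapley formula; already the subcase of two identical large players exhibits the essential inequality, reducing (after the substitution $c=\sqrt{(h-a)/h}$) to elementary one-variable inequalities such as $\tfrac34c^2+\tfrac{3}{8c^2}\ge1$. The other non-routine ingredient is the optimization $\max_a l_{\min}(a)=\tfrac{4h}{3}$: this is exactly what forces the bound $4/3$ rather than something larger, and what explains the appearance of the hypothesis $a_i>h/4$ (equivalently, $k_i\le h$, so that the single-large-player formula of Lemma~\ref{lemma:nonatomic-rewards} applies).
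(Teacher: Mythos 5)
Your proposal is correct and follows essentially the same route as the paper: the same equilibrium structure (each large player paired with a non-atomic mass $k_i=\sqrt{(h-a_i)\,l}$ so that small players are indifferent, plus all-small pools of mass $l\approx 4h/3$), the same use of Lemma~\ref{lemma:nonatomic-rewards} for the three deviation checks, and the same identification of $4h/3$ as the critical value (your computation $\max_a l_{\min}(a)=4h/3$ is the paper's observation that the move-to-small constraint becomes $(3k_i/4-1/2)^2\ge 0$ at $l=4/3$). The one step you only sketch --- ruling out a large player joining another large player's pool --- is precisely the step the paper also leaves unproved (Claim~\ref{cl:f(ki,kj)} is supported by a plot with the analytic case analysis deferred to the full version), while your intermediate-value choice of $l\in[\tfrac{4h}{3},(\tfrac43+\epsilon)h]$ handles the leftover-mass issue slightly more explicitly than the paper does.
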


The remainder of this subsection is devoted to the proof of Theorem \ref{thm:non-atomic}. Before we proceed, we comment on the type of equilibria that we construct to establish the proof. Following Example \ref{example:non-atomic}, we show that such games always possess equilibria where each large player forms a pool together with some mass of non-atomic players and with no other large players.
We find such equilibria to be natural for the blockchain scenarios that we study 
(certainly more natural than the grand coalition equilibrium from Claim \ref{mycl:PoA}), 
in the sense that big players tend to avoid each other and form pools with less important users. They provide each big player with {\it skin in the game}, as each of them is the only entity with high enough stake in their pool. 
Furthermore, they reduce the question on the Price of Stability to upper bounding the mass of non-atomic players in each pool, as demonstrated in the sequel. 
Finally, we note that some games may also have other equilibria, with more than one large player in some pools, but more equilibria could only positively (if at all) affect the Price of Stability. 

To prove Theorem \ref{thm:non-atomic}, we start with the following lemma, that provides a characterization for the particular type of equilibria we are after.
\begin{lemma}[Equilibrium conditions]
\label{lem:eq-conditions-oceanic}
    Consider a partition of the players into winning pools, where every large player $i\in N_a$ is in a pool with only small players of mass $k_i\leq h$, with $a_i+k_i\geq h$, and all remaining small players are in pools of total mass equal to $l$, with $l\geq h$. Such a partition is at a Nash equilibrium if and only if the
  following conditions are satisfied for any large players $i$, $j$ :
  \begin{align}
    \frac{h-a_i}{k_i^2} &= \frac{1}{l} \label{eq:non-atomic}\\
    \frac{k_i+a_i-h}{k_i} &\geq \frac{a_i}{l} \label{eq:move-to-small} \\
    \frac{k_i+a_i-h}{k_i}  &\geq
      \frac{(h-a_j)^2 - (\max(0, h-a_i-a_j))^2 + (\max(0, a_i-h+k_j))^2}{2k_j^2} \label{eq:move-to-large}
  \end{align}
\end{lemma}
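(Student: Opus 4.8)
The plan is to establish the ``if and only if'' by translating the Nash equilibrium definition directly into inequalities comparing a player's current reward to the reward from each possible deviation, and then showing that the three listed conditions capture exactly those that are not automatically satisfied. I would begin by writing down, for each type of agent in the partition, the reward they currently obtain via Lemma~\ref{lemma:nonatomic-rewards}: a large player $i$ gets $(k_i+a_i-h)/k_i$, a small player in $i$'s pool gets per-unit reward $(1-(k_i+a_i-h)/k_i)/k_i = (h-a_i)/k_i^2$, and a small player in an all-small pool gets per-unit reward $1/l$. The first observation is that equation~\eqref{eq:non-atomic} is precisely the indifference condition for the small players: a small player can move between any two pools, so at equilibrium every pool must offer the same per-unit reward; since all all-small pools already give $1/l$, the requirement is exactly $(h-a_i)/k_i^2 = 1/l$ for each large player's pool, and conversely this equality rules out any profitable small-player move (including moving into a pool to make it winning, since that can only lower the per-unit reward once shared).

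Next I would handle deviations by a large player $i$. There are essentially three destinations: (a) an all-small pool of mass $l$, where $i$ would become the unique large player; by Lemma~\ref{lemma:nonatomic-rewards} (second bullet, since $l\geq h$) her reward there is $a_i/l$, giving condition~\eqref{eq:move-to-small}; (b) another large player $j$'s pool, which then has two large players $i,j$ and small mass $k_j$; by the third bullet of Lemma~\ref{lemma:nonatomic-rewards} her reward is the expression on the right-hand side of~\eqref{eq:move-to-large}, giving that condition; and (c) opening a brand-new pool on her own, which is not winning since $a_i<h$ (Remark~\ref{rem:stake}) and yields reward $0$, hence never profitable. I should also note that $i$ moving out of her own pool into an all-small pool is covered by case (a), and that after $i$ leaves, her old pool may cease to be winning, but this does not affect whether the \emph{deviation} is profitable for $i$ — we only compare $i$'s payoffs. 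One subtlety to address carefully: when $i$ joins pool $j$, I must make sure the resulting pool is still winning so that the two-large-player formula applies; since $k_j + a_j \geq h$ already and $a_i>0$, the pool $\{i,j\}\cup(\text{small mass }k_j)$ has stake $> h$, so it is winning and the formula is valid. The ``only if'' direction is then immediate (each condition is the negation of a specific profitable deviation), and the ``if'' direction requires checking that conditions~\eqref{eq:non-atomic}--\eqref{eq:move-to-large} together with the trivial observations above exhaust all deviation types for all agents.

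The main obstacle, modest as it is, will be the bookkeeping in the ``if'' direction: verifying that \emph{every} conceivable deviation is unprofitable, not just the three highlighted ones. In particular I need to argue (i) a small player never benefits from joining a large player's pool or another all-small pool, which follows from~\eqref{eq:non-atomic} plus the fact that adding stake to an already-winning pool only dilutes the per-unit reward; (ii) a small player opening its own pool gets $0$; (iii) a large player's deviation to a \emph{non-winning} configuration yields $0$; and (iv) if there were a pool with two or more large players already (excluded by hypothesis here, so this case does not arise). I would also remark that by symmetry of the two-large-player formula it suffices to check~\eqref{eq:move-to-large} with $i$ as the entrant, and that when $k_j$ happens to exceed $h$ a slightly different sub-case of Lemma~\ref{lemma:nonatomic-rewards} would apply, but under the theorem's regime we may restrict attention to $k_j\leq h$ — I would either note this or state~\eqref{eq:move-to-large} as covering the relevant case. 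None of these steps involves more than elementary inequality manipulation, so the proof is essentially a careful enumeration.
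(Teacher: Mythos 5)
Your proposal is correct and follows essentially the same route as the paper's proof: use Lemma~\ref{lemma:nonatomic-rewards} to read off each agent's reward, interpret~\eqref{eq:non-atomic} as the small players' indifference condition (per-unit reward $(h-a_i)/k_i^2$ versus $1/l$), and interpret~\eqref{eq:move-to-small} and~\eqref{eq:move-to-large} as the large player's deviations to an all-small pool and to another large player's pool, respectively. The extra bookkeeping you flag (solo deviations yield $0$ since $a_i<h$, the destination pool remains winning, all-small pools must have equal mass) is consistent with, and slightly more explicit than, what the paper writes.
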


\begin{proof}
We focus first on the non-atomic players.
Since these players are infinitesimally small, a unilateral deviation of any such player to a different pool does not have any effect on the reward per unit of stake in the pool that she moves to.
Therefore, at a Nash equilibrium, all pools with no large player must have the same stake, equal to $l$ in the pools we consider, otherwise the non-atomic players will have an incentive to move to the smallest of these pools. 

We consider now deviations of small players either from a pool with no large players to a pool with a large player 
 or the opposite.
 Constraint \eqref{eq:non-atomic} rules out such deviations.
 Indeed, by
  Lemma~\ref{lemma:nonatomic-rewards} the reward that is left over for the non-atomic players in the pool of player $i\in N_a$ is $1- \frac{a_i-h+k_i}{k_i}$. Since the pool has a mass of size $k_i$ of non-atomic players, the reward per unit of stake for them is equal to $(h-a_i)/k_i^2$. In pools with no large players, the corresponding
  reward per unit of stake is $1/l$. These two rewards must be equal for non-atomic
  players in either pool to have no incentive to switch. 
  
Constraints \eqref{eq:move-to-small} and \eqref{eq:move-to-large}  express the fact that an atomic player $i$ has no reason to switch neither to a pool of non-atomic players nor to the pool of any other atomic player $j$, respectively. 
These follow directly by
  Lemma~\ref{lemma:nonatomic-rewards}. 
\end{proof}

We can now show that it is possible to construct such equilibria for a sufficiently large population of non-atomic players.

\begin{proof}[Proof of Theorem \ref{thm:non-atomic}]
  Consider a game that satisfies the premises of the theorem. We will show that there exists a partition into winning pools that is a Nash equilibrium and in which every pool has at most one large player. We will establish that this equilibrium achieves
  $PoS\leq 4/3$.

  Without loss of generality (simply by scaling), from now on we
  assume that $h=1$. 
  We will show that we can find numbers $k_1,\cdots, k_n$, so that we can fill in the pool of each large player $i\in N_a$, with a mass of non-atomic players, equal to $k_i$, and ensure that the equilibrium constraints of Lemma \ref{lem:eq-conditions-oceanic} are satisfied. 
  Let us also make the simplifying assumption that the
  mass of small players is such that after allocating a mass of size $k_i$ to the pool of each large player $i$, the remaining mass can be partitioned into pools without large players and with
  stake of exactly $l=4/3$. We will revisit this assumption at the end of
  the proof.

  Note that we need $k_i\leq h$ to be able to use Lemma \ref{lem:eq-conditions-oceanic}. The first constraint  (Constraint \eqref{eq:non-atomic}) is an equality and tells us precisely how much $k_i$ should be in terms of $a_i$. By solving for $a_i$, for each $i$, it should necessarily hold that:
  $a_i=h-k_i^2/l=1-3k_i^2/4$. Since by the premises of the theorem,
  $a_i\in(1/4,1)$, we get that $k_i\in(0,1)$, i.e., $k_i\leq h$. 
 An immediate implication
of the assumption that $l\geq 4/3$ is that Constraint \eqref{eq:move-to-small}  is
  satisfied for every $k_i$ because it reduces to
  $(3k_i/4-1/2)^2 \geq 0$.

It remains to show that Constraint \eqref{eq:move-to-large} is also satisfied. 
By substituting $a_i$ and  $a_j$ from Constraint \eqref{eq:non-atomic}, we can see that this last constraint  becomes equivalent to
  $f(k_i,k_j)\geq 0$, where

\begin{align} \label{eq:third-constraint}
f(k_i,k_j) &=  1 - \frac{3}{4}k_i - \frac{9}{32}k_j^2 +
\begin{cases}
     \frac{(3k_i^2 + 3k_j^2 - 4)^2}{32k_j^2} & \text{if } 3k_i^2 + 3k_j^2 \geq 4 \\
    0 & \text{otherwise}
\end{cases} 
- \begin{cases}
    \frac{(3k_i^2 - 4k_j)^2}{32k_j^2} & \text{if } 3k_i^2 \leq 4k_j \\
    0 & \text{otherwise}
\end{cases} 
\end{align}

  \begin{figure}
    \centering
    \includegraphics[scale=0.5]{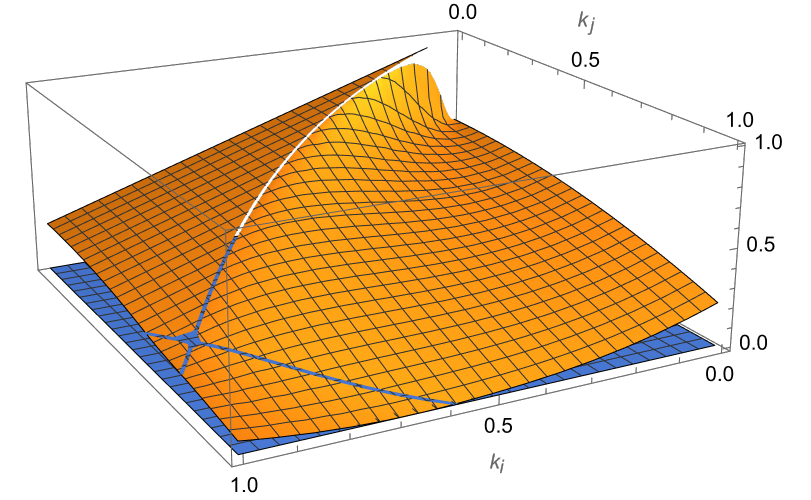}
    \caption{\footnotesize The function $f(k_i,k_j)$ of Equation~\eqref{eq:third-constraint}. 
    For $k_i,k_j\in[0,1]$, the minimum value is 0 and it is achieved 
    when $k_i=2/3$ and $k_j=1$.}
    \label{fig:third-constraint}
  \end{figure}

The following technical claim can then be established.
\begin{claim}
\label{cl:f(ki,kj)}
    For any $k_i, k_j\in (0, 1]$, we have $f(k_i, k_j)\geq 0$.
\end{claim}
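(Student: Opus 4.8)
The plan is to first collapse the two–variable inequality to a one–variable one by exploiting monotonicity in $k_j$, and then to dispatch the resulting univariate inequality by a short case analysis on the clamps in \eqref{eq:third-constraint}. Concretely, for a fixed $k_i\in(0,1]$ I will write $f(k_i,k_j)=\bigl(1-\tfrac34 k_i\bigr)+g_1(k_j)+g_2(k_j)$, where $g_1(k_j)=-\tfrac{9}{32}k_j^2+\tfrac{1}{32k_j^2}\bigl(\max(0,\,3k_i^2+3k_j^2-4)\bigr)^2$ bundles the $-\tfrac{9}{32}k_j^2$ term with the first clamp, and $g_2(k_j)=-\tfrac{1}{32k_j^2}\bigl(\max(0,\,4k_j-3k_i^2)\bigr)^2$ is the last (subtracted) term of \eqref{eq:third-constraint}. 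The first summand does not depend on $k_j$, so once I show that both $g_1$ and $g_2$ are non-increasing on $(0,1]$, I obtain $f(k_i,k_j)\ge f(k_i,1)$, and it remains only to prove $f(k_i,1)\ge 0$ for every $k_i\in(0,1]$.

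For the monotonicity, note $g_2(k_j)=-\tfrac{1}{32}\bigl(\max(0,\,4-3k_i^2/k_j)\bigr)^2$, and since $k_i\le 1$ the argument $4-3k_i^2/k_j$ is non-decreasing in $k_j$ and non-negative wherever the clamp is active; hence $g_2$ equals $0$ for $k_j\le 3k_i^2/4$ and is strictly decreasing afterwards, so it is non-increasing on $(0,1]$. For $g_1$, I substitute $w=k_j^2$ and put $a=3k_i^2\in(0,3]$: for $w\le(4-a)/3$ the clamp vanishes and $g_1=-\tfrac{9}{32}w$; for $w>(4-a)/3$, expanding $(a+3w-4)^2$ makes the $\tfrac{9}{32}w$ contribution cancel the $-\tfrac{9}{32}w$ term, leaving $g_1=\tfrac{3(a-4)}{16}+\tfrac{(a-4)^2}{32w}$, which is decreasing in $w$ because $a\le 3<4$. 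A direct substitution checks that the two formulas agree at $w=(4-a)/3$, so $g_1$ is continuous and non-increasing on $(0,1]$. This bookkeeping of the clamp thresholds — in particular verifying that both one-sided pieces of $g_1$ are genuinely decreasing and that they match at the junction — is the part that takes the most care.

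It then suffices to evaluate $f(k_i,1)$. Since $k_i\le 1$ forces $3k_i^2\le 3<4$, the clamp $\max(0,4-3k_i^2)$ equals $4-3k_i^2$, while $\max(0,3k_i^2-1)$ is active exactly when $k_i>1/\sqrt{3}$. If $k_i\le 1/\sqrt{3}$, then $32\,f(k_i,1)=7-24k_i+24k_i^2-9k_i^4$, whose derivative is $-12(3k_i^3-4k_i+2)$; on $[0,1/\sqrt{3}]$ the cubic $3k_i^3-4k_i+2$ is decreasing (its derivative $9k_i^2-4$ is negative there) with value $2-\sqrt{3}>0$ at the right endpoint, so $f(k_i,1)$ is decreasing on this subinterval and is bounded below by $(14-8\sqrt{3})/32>0$, attained at $k_i=1/\sqrt{3}$. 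If $k_i>1/\sqrt{3}$, then $(3k_i^2-1)^2-(4-3k_i^2)^2=3(6k_i^2-5)$, and after this substitution $f(k_i,1)$ collapses to $\tfrac{1}{16}(9k_i^2-12k_i+4)=\tfrac{(3k_i-2)^2}{16}\ge 0$, with equality exactly at $k_i=2/3$. Combining the two subcases gives $f(k_i,1)\ge 0$ on $(0,1]$, which closes the argument; the only equality case is $(k_i,k_j)=(2/3,1)$, recovering the minimizer indicated in Figure~\ref{fig:third-constraint}. The single quartic estimate $7-24k_i+24k_i^2-9k_i^4\ge 0$ on $[0,1/\sqrt{3}]$ is the only spot that needs a monotonicity argument rather than a clean factorization.
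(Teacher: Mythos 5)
Your proof is correct, and it is worth noting that the paper itself does not actually prove this claim: it exhibits the plot in Figure~\ref{fig:third-constraint} as evidence and explicitly defers the analytic argument (``standard arguments and an exhaustive case analysis'') to a full version. You have therefore supplied a complete proof where the paper offers only a graphical verification. Your route is also cleaner than the exhaustive case analysis the paper alludes to: the key idea is to isolate the $k_j$-dependence into $g_1$ and $g_2$, show each is non-increasing in $k_j$ (the cancellation of the $\tfrac{9}{32}k_j^2$ term against the expansion of the first clamp, leaving $\tfrac{3(a-4)}{16}+\tfrac{(a-4)^2}{32w}$, is the decisive computation, and your continuity check at the junction $w=(4-a)/3$ is needed and correct), and thereby reduce the two-variable inequality to the boundary slice $k_j=1$. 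I verified the remaining univariate computations: in the regime $k_i>1/\sqrt{3}$ the difference of squares collapses $32f(k_i,1)$ to $2(3k_i-2)^2$, and in the regime $k_i\le 1/\sqrt{3}$ the quartic $7-24k_i+24k_i^2-9k_i^4$ is decreasing (since $3k_i^3-4k_i+2$ stays positive there, with minimum $2-\sqrt{3}$) and bottoms out at $14-8\sqrt{3}>0$; the two cases agree at $k_i=1/\sqrt{3}$. Your identification of $(2/3,1)$ as the unique minimizer matches the figure. The one presentational caveat is that your argument uses $k_i\le 1$ in several places (e.g., to conclude $a=3k_i^2\le 3<4$ and that the second clamp is active at $k_j=1$), which is exactly the hypothesis of the claim, so nothing is missing.
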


Figure~\ref{fig:third-constraint} shows a plot of $f(k_i, k_j)$,
  which demonstrates that it is non-negative for all
  $k_i, k_j \in [0, 1]$. One can prove analytically Claim \ref{cl:f(ki,kj)}, using standard arguments and an exhaustive case analysis;
  we defer this to the full version of the paper.

So far, we have established that we can have a Nash equilibrium, with the values of $k_i$ set by \eqref{eq:non-atomic}. Under such a partition, the pools without a large player have size $l = 4/3$, while pools with one large player have size 
  $k_i + a_i = 4/3 - (2 - 3k_i)^2/12 \leq 4/3$. Since the optimal partition, at its best would split the players into pools of total stake equal to $h=1$, the Price 
  of Stability is at most $4/3$.
  A matching lower bound is given 
  by Example~\ref{example:non-atomic} and Claim \ref{cl:lower-bound-oceanic}.

  This completes the proof of the theorem when all pools of non-atomic
  players have stake of exactly $l=4/3$. If the total stake is such
  that this is not possible, we can construct a Nash equilibrium with
  pools of stake $l=4/3+\epsilon$, where $\epsilon$ can be made arbitrarily small, provided there is a sufficiently large mass of
  non-atomic players. The preceding argument applies almost
  identically to this case as well. 
  This
  complicates some expressions, regarding the function $f(k_i, k_j)$, but it does not affect the core logic
  of the proof.
\end{proof}

An interesting consequence of the theorem's proof is that large players 
receive more reward at the Nash equilibrium than their proportional 
share. Specifically, their reward is $(a_i - h + k_i)/k_i$, while their 
proportional share is $a_i/(a_i + k_i)$. Given that 
$a_i = 1-3k_i^2/4$, we get the following.
\begin{fact}
  In the Nash equilibria of Theorem~\ref{thm:non-atomic}, when $h=1$, the reward of a large player in a pool of mass of non-atomic players $k_i$ is more than its proportional sharing by 
  \[
    \frac{k_i (3 k_i - 2)^2}{4 (2 - k_i) (3 k_i + 2)} \geq 0,
  \]
\end{fact}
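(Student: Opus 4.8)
The plan is to compute the claimed quantity directly, since everything needed is already pinned down by the proof of Theorem~\ref{thm:non-atomic}. First I would recall that in the constructed equilibria every large player $i$ sits in a pool with non-atomic mass $k_i\in(0,1)$ and, by Constraint~\eqref{eq:non-atomic} with $l=4/3$ and $h=1$, satisfies $a_i = 1 - \tfrac34 k_i^2$. By the first bullet of Lemma~\ref{lemma:nonatomic-rewards}, her Shapley reward equals $(a_i - h + k_i)/k_i$, which after substituting $h=1$ and $a_i = 1-\tfrac34 k_i^2$ collapses to $1 - \tfrac34 k_i$. Her proportional share is $a_i/(a_i+k_i)$.

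Next I would form the difference $\delta(k_i) := \bigl(1-\tfrac34 k_i\bigr) - \frac{a_i}{a_i+k_i}$ and place it over the common denominator $a_i+k_i$. The numerator is $\bigl(1-\tfrac34 k_i\bigr)(a_i+k_i) - a_i$; expanding with $a_i = 1-\tfrac34 k_i^2$ and collecting terms gives $\tfrac14 k_i - \tfrac34 k_i^2 + \tfrac{9}{16}k_i^3 = \tfrac{k_i}{16}\bigl(9k_i^2 - 12 k_i + 4\bigr) = \tfrac{k_i}{16}(3k_i-2)^2$. For the denominator, note $a_i+k_i = 1 + k_i - \tfrac34 k_i^2 = \tfrac14\bigl(4 + 4k_i - 3k_i^2\bigr) = \tfrac14(2-k_i)(3k_i+2)$. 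Dividing one by the other yields
\[
\delta(k_i) = \frac{k_i(3k_i-2)^2}{4(2-k_i)(3k_i+2)},
\]
which is exactly the stated expression.

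Finally, nonnegativity is immediate: for $k_i\in(0,1)$ we have $k_i>0$, $2-k_i>0$, $3k_i+2>0$ and $(3k_i-2)^2\ge 0$, so $\delta(k_i)\ge 0$, with equality precisely at $k_i=2/3$ — consistent with the minimizer of $f$ noted in Figure~\ref{fig:third-constraint}. The only care needed is in the two factorizations, of the cubic numerator as $\tfrac{k_i}{16}(3k_i-2)^2$ and of the quadratic $a_i+k_i$ as $\tfrac14(2-k_i)(3k_i+2)$; both are routine, so there is no genuine obstacle here — the statement is essentially a repackaging of the closed forms already obtained in the proof of the theorem.
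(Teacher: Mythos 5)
Your proposal is correct and follows exactly the route the paper intends: substitute $a_i = 1 - \tfrac34 k_i^2$ into the Shapley reward $(a_i-h+k_i)/k_i$ and the proportional share $a_i/(a_i+k_i)$, and simplify the difference; your factorizations of the numerator as $\tfrac{k_i}{16}(3k_i-2)^2$ and of $a_i+k_i$ as $\tfrac14(2-k_i)(3k_i+2)$ both check out, and the nonnegativity argument (with equality at $k_i=2/3$, i.e., $a_i=2/3$) matches the paper's remark. Nothing further is needed.
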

The extra reward varies between $0$ (when $a_i=2/3$) and $0.05$ (when $a_i\approx 1/4)$. 

\subsection{The Price of Stability in the atomic model}

In this subsection, we provide an analogous analysis for the atomic case, where we could have again large and small players, but not infinitesimally small, i.e., $N_s=\emptyset$, and therefore $N = N_a$. We believe it is valuable to understand the performance under the purely atomic model as well. This provides a more complete picture for these games and intends to answer the question of whether a low Price of Stability is maintained even without infinitesimally small players. When there are only atomic players, the discrete nature of the problem makes the Price of Stability go up, albeit not by a lot in the cases we consider below. This is consistent with the analogous flavor of results in the analysis of congestion games with atomic and non-atomic players \cite{roughgarden16}. 

We start with the following easy observation, showing that we have a worse lower bound when there is no mass of non-atomic players. 

\begin{claim}
\label{cl:lower}
The analysis of Example \ref{example-PoS} yields that in the atomic model, $PoS \geq 2$.
\end{claim}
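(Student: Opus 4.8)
The plan is to pin down exactly which partitions can be Nash equilibria in the instance of Example~\ref{example-PoS} and read off the ratio. First I would record the two easy endpoints: since the stake vector is $(3,1,1,1,1,1)$ and $h=4$, the total stake is $8=2h$, so at most $\lfloor 8/4\rfloor=2$ pools can be winning, and the partition $(\{3,1\},\{1,1,1,1\})$ attains this, giving $OPT(G)=2$; and since $a_i<h$ for both stake values, Claim~\ref{mycl:PoA} tells us the grand coalition is a Nash equilibrium, which forms a single winning pool of stake $8$. So $NE\neq\emptyset$ and the best equilibrium we have so far yields ratio $2$.

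The substance of the argument is then to show that no Nash equilibrium forms two winning pools, so that every equilibrium partition $\Pi$ has $W(\Pi)=1$ (each pool of an equilibrium partition being winning, $W(\Pi)\ge 1$, and we rule out $W(\Pi)=2$). Here I would use that the total stake is \emph{exactly} $2h=8$: any partition with two winning pools must split the stake into two parts of size exactly $4$ each, with no stake left for a third (losing) pool; since $a_1=3<4$, the large player's part must consist of the large player together with exactly one small player. Up to relabelling the identical small players this is the single partition $\Pi^*=(\{3,1\},\{1,1,1,1\})$. Now I invoke the Shapley computation already carried out in Example~\ref{example-PoS}: the large player is pivotal in $1$ of the $2$ orderings of $\{3,1\}$, hence gets $\phi=1/2$, whereas after moving to the other pool to form $\{3,1,1,1,1\}$ she is pivotal in $3$ of the $5$ orderings, hence gets $3/5>1/2$. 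So the large player has a strictly profitable deviation and $\Pi^*$ is not an equilibrium.

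Putting these together, every equilibrium partition of this game has $W(\Pi)=1$ while $OPT(G)=2$, so $PoS=\min_{\Pi\in NE} OPT(G)/W(\Pi)=2/1=2$, and in particular $PoS\ge 2$ in the atomic model. There is essentially no hard step: the only point requiring a little care is the enumeration in the second paragraph, namely confirming that $\Pi^*$ is the \emph{only} two-winning-pool configuration and that no configuration with two winning pools plus extra losing pools exists — both of which are immediate from the total stake equalling $2h$ exactly.
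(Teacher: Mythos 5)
Your proposal is correct and follows essentially the same route as the paper, which establishes this claim entirely through the analysis in Example~\ref{example-PoS}: $OPT(G)=2$, the unique two-winning-pool partition $(\{3,1\},\{1,1,1,1\})$ is destabilized by the large player's deviation from reward $1/2$ to $3/5$, and only the grand coalition (an equilibrium by Claim~\ref{mycl:PoA}) remains, giving ratio $2$. Your write-up merely makes explicit the enumeration step (total stake $=2h$ forces the two-pool partition to be unique), which the paper leaves implicit.
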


\noindent {\bf Assumption on the stake distribution.}
The analysis for the atomic model quickly becomes much more technical and challenging for an arbitrary stake distribution, due to the expression for the Shapley formula. 
But following the same spirit as with the oceanic model, we have managed to analyze a class of games that follow a separation of $N$ into large and small players.  
In particular, we will consider a 2-valued stake distribution, so that for every player $i\in N$, either $a_i=1$ modeling a small player or $a_i = a>1$, with $a<h$, modeling a large player. We leave as an open problem the analysis for the case where the large players can have different stakes.

Let $n_b, n_s$ be the number of large and small players respectively, so that $n=n_s+ n_b$. Our main result of this subsection is the following theorem, where we give a matching upper bound to Claim \ref{cl:lower}, as long as the number of small players is sufficiently large, so that again the Price of Stability is bounded by a small constant.

\begin{theorem}
\label{thm:pos-general}
    For instances on $n=n_s+n_b$ players, with $n_s\geq (2h-1)^2+n_b\cdot(h+1)$, it holds that  PoS $ \leq 2$. 
\end{theorem}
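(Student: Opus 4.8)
The plan is to mirror the strategy of Theorem \ref{thm:non-atomic}: exhibit an explicit Nash equilibrium in which every pool contains at most one large player, and then bound the number of pools from below to get $OPT(G)/W(\Pi)\le 2$. Since here all large players share the same stake $a$ and all small players have stake $1$, the Shapley value in a pool with one large player and $m$ small players (and total stake $\ge h$, with $m\le h-1$ so that the large player can be pivotal) has a clean closed form, analogous to the $(k+a-h)/k$ formula in the non-atomic case: it is the probability that a uniformly random insertion position of the large player among the $m+1$ slots makes the cumulative stake cross $h$, i.e. it equals $(m+a-h+1)/(m+1)$ when $m+1\ge h$ — I would first isolate this discrete formula (and the companion formula $a/(m+1)$ for the case where the large player is not guaranteed pivotal, i.e. $m+1>h$, if needed) as a preliminary computation. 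The target equilibrium: pick an integer $m$ with $h\le m+1\le 2h-1$ (so each pool with a large player has total stake between $h$ and $2h-1$, hence $<2h$), give each of the $n_b$ large players its own pool padded with exactly $m$ small players, and partition the remaining $n_s-n_b m$ small players into pools of size exactly $h$ — the hypothesis $n_s\ge (2h-1)^2+n_b(h+1)$ is precisely what guarantees enough small players both to pad every large player's pool and to leave a remainder that is a multiple of $h$ (or at least can be made so by absorbing a bounded surplus into one pool), while keeping each large-player pool's total stake below $2h$.

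The key steps, in order: (1) derive the discrete Shapley formula above; (2) write down the deviation inequalities that must hold — a small player in an all-small pool of size $h$ must not want to move into a large player's pool and vice versa (this pins down $m$, or forces all large-player pools to have the same padding $m$, by equating per-unit rewards: $1/h$ versus the leftover $\bigl(1-\tfrac{m+a-h+1}{m+1}\bigr)/m = (h-a)/\bigl(m(m+1)\bigr)$), and a large player must not want to move to another large player's pool or to an all-small pool; (3) choose $m$ (equivalently, choose the common pool size $m+1$, which the per-unit equality forces to be $m(m+1)=h(h-a)$-ish — I would instead argue more robustly by allowing all-small pools of size $l$ with $h\le l\le$ something and solving, exactly as Lemma \ref{lem:eq-conditions-oceanic} does, rather than insisting on $l=h$); (4) verify the large-player deviation constraints hold for this choice given $a<h$ and $1<a$; (5) count: $W(\Pi)=n_b+\lfloor (n_s-n_b m)/l\rfloor$, and $OPT(G)\le \lfloor (n_b a + n_s)/h\rfloor\le (n_b a+n_s)/h < (n_b h + n_s)/h$, and check the ratio is $\le 2$ using $n_s\ge (2h-1)^2+n_b(h+1)$ and $l\le 2h-1$, $m\le 2h-2$.

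The main obstacle I anticipate is step (3)–(4): making the deviation constraints simultaneously satisfiable with a pool size that is also an integer and small enough (total stake $<2h$) to give the factor $2$. In the non-atomic proof one has a continuum of masses $k_i$ and solves Constraint \eqref{eq:non-atomic} exactly; in the atomic world $m$ is an integer, so the per-unit-reward equality between the two pool types cannot be met exactly and one must instead show the all-small pool size $l$ can be chosen (again an integer $\ge h$) so that \emph{both} "small player doesn't defect into the large pool" and "small player in the large pool doesn't defect out" hold as inequalities — i.e. the leftover per-unit reward in a large-player pool is squeezed between $1/l$ and $1/(l+1)$ or similar — and that this forces only $l$ close to $h$, which is where the bound $n_s\ge(2h-1)^2+\dots$ and the constant $2$ come from. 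Handling the divisibility of the remaining small players (the surplus $n_s-n_b m \bmod l$) is a secondary annoyance, dealt with exactly as in Theorem \ref{thm:non-atomic} by absorbing a bounded number of extra small players into one pool, which only helps $OPT$ marginally and is swamped by taking $n_s$ large; the hypothesis's additive $(2h-1)^2$ term is the slack budgeted for this. I would defer the fully explicit arithmetic of step (5) and the case analysis of step (4) to the appendix, as the paper does for the analogous Claim \ref{cl:f(ki,kj)}.
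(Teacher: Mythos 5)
Your overall strategy coincides with the paper's: build a $(k,l)$-partition in which each large player sits in its own Type A pool with $k$ small players, all-small pools have size $l$ or $l-1$, every pool has total stake at most $2h$, and then compare against $OPT(G)\le (n_b a+n_s)/h$. The discrete Shapley formulas you isolate are the paper's Lemma \ref{lem:typeA} (though your boundary conditions are inverted: the formula $(m+a-h+1)/(m+1)$ requires $m<h$, i.e.\ the small players alone cannot reach the threshold, while $a/(m+1)$ applies when $m\ge h$). However, two steps that you defer or gloss over are precisely where the paper's proof lives, and as written your plan does not resolve them.

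First, with $n_b>1$ large players you must rule out a large player deviating into \emph{another} large player's Type A pool. Writing down that constraint requires the Shapley value of an atomic player in a pool containing two large players of stake $a$ and $k$ small players --- a new permutation count not covered by your preliminary formulas, which the paper handles with a dedicated lemma in Appendix \ref{app-sec:shapley}, including a case split on $a\ge h/2$ versus $a<h/2$ and a nontrivial use of the inequality $k\ge k^*$. You name the constraint but give no way to evaluate or verify it; this is the main new content of Theorem \ref{thm:pos-general} beyond the single-large-player case. Second, the simultaneous integer feasibility of $k$ and $l$ --- the window $\frac{k(k+1)}{h-a}\le l\le\frac{(k+1)(k+2)}{h-a}$ must intersect both $l\ge\frac{a(k+1)}{k-h+a+1}$ and $[h+1,2h]$ --- is exactly the obstacle you flag, but resolving it requires the paper's case analysis ($k=\lceil k^*\rceil\le h-2$ when $h\le a^2-2a+2$, and $k=h-1$ with $l=h+a$ or $l=h+a+1$ otherwise); it is not automatic that the squeeze lands at an $l$ ``close to $h$'' with all pool stakes below $2h$. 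Finally, one concrete error: absorbing the divisibility surplus into a single all-small pool destroys the equilibrium whenever the surplus exceeds $1$, since small players in the enlarged pool would defect to a smaller one (cf.\ Lemma \ref{lem: 4.4}); the surplus must instead be spread one player per pool, converting Type C pools into Type B pools as the paper does.
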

To prove this, we will construct an equilibrium, with a similar pattern as in the proof of Theorem \ref{thm:non-atomic}, where each large player forms a pool with only small players. In the atomic model, such partitions are captured by the following definition.

\begin{definition}
\label{def:k-l}
For 2 integers $k, l$, with $k+a\geq h$ and $l\geq h+1$, we say that a partition of $N$ is a $(k, l)$-partition %
if it consists only of pools of the following form (without necessarily having pools from all the three types below) 
    \begin{itemize}
        \setlength{\itemsep}{1pt}
        \setlength{\parskip}{0pt}
        \setlength{\parsep}{0pt}
        \item Pools with exactly one big player and $k$ small players (Type A pools),
        \item Pools with $l$ small players (Type B pools),
        \item Pools with $l-1$ small players (Type C pools).
    \end{itemize}
\end{definition}

A $(k, l)$-partition that is also a Nash equilibrium is referred to as a $(k,l)$-equilibrium. 
The reason we may need both Type B and Type C pools is that the total number of agents assigned to these pools may not be divisible by $l$ or $l-1$. 
Furthermore, if we find equilibria where $k$ and $l$ are upper bounded by a small multiple of $h$, this yields directly a bound on the Price of Stability.

\subsubsection{Games with one large player}
\label{subsec:shapley-1}

It is very instructive (and essentially the core of the proof) to provide first the analysis when we have only one large player with stake $a$ (we fix this to be player $1$). All the other players have stake equal to $1$. As we will see, this is already a non-trivial case.

We first provide some useful properties that help us understand the Shapley value scheme as well as the structure of its equilibria. The next lemma is the analog of Lemma \ref{lemma:nonatomic-rewards} and is derived by applying the Shapley formula \eqref{eq:shapley}.

\begin{restatable}{lemma}{Lemma3.7}
\label{lem:typeA}
Consider a winning pool $S$, containing the large player and $k$ small players.
\begin{itemize}[noitemsep]
    \item If $k< h$, then the Shapley scheme provides to player $1$ a payoff of $\phi_1(S)= \frac{k-h+a+1}{k+1}$. Each of the other players receives a payoff of $\frac{h-a}{k\cdot(k+1)}$.
    \item If $k\geq h$, then player $1$ has a payoff of $\phi_1(S)= \frac{a}{k+1}$. Each of the other players receives a payoff of $\frac{k+1-a}{k\cdot(k+1)}$.
\end{itemize}
\end{restatable}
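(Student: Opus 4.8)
The plan is to evaluate the Shapley formula \eqref{eq:shapley} directly for the pool $S$ with the single large player $1$ (stake $a$) and $k$ unit-stake players. Write $n = k+1$ for the size of $S$. For player $1$, the marginal contribution $\rho(T\cup\{1\}) - \rho(T)$ is $1$ exactly when $T$ is a coalition of small players whose size $t := |T|$ satisfies $t < h \le t + a$, i.e. $\lceil h-a\rceil \le t \le \lceil h\rceil - 1$; since $h$ is an integer (it is a threshold; if $h$ need not be an integer one replaces it by $\lceil h\rceil$ and $\lceil h-a\rceil$ throughout), this is $h - \lfloor a\rfloor \le t \le h-1$ when $a$ is not an integer, but the cleanest route is to note that for each admissible size $t$, all $\binom{k}{t}$ subsets $T$ of that size are equivalent, and the permutation weight in \eqref{eq:shapley} is $\frac{t!(n-t-1)!}{n!}$. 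Summing, $\phi_1(S) = \sum_{t} \binom{k}{t}\frac{t!(k-t)!}{(k+1)!}$, and each term equals $\frac{1}{k+1}$, so $\phi_1(S)$ is simply $\frac{1}{k+1}$ times the number of admissible values of $t$. When $k < h$ the admissible range is $t \in \{h-a, \dots, k\}$ intersected appropriately — more precisely $t$ ranges over integers with $h - a \le t$ (ensuring player $1$ is pivotal) and $t \le k$ — giving $k - (h-a) + 1 = k - h + a + 1$ values (this needs the sub-case bookkeeping when $a$ is non-integral, but the count works out to $k+1-\lceil h-a\rceil$, and under the paper's convention equals $k-h+a+1$); when $k \ge h$ the upper constraint $t \le h-1$ binds, giving $h - 1 - (h-a) + 1 = a$ values. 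This yields the two stated values of $\phi_1(S)$.

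Next I would get the payoff of a representative small player $i$ either by the same direct computation or, more efficiently, by the efficiency axiom: the $k$ small players are symmetric (they are interchangeable in $S$), so each receives $\frac{\rho(S) - \phi_1(S)}{k} = \frac{1 - \phi_1(S)}{k}$. For $k < h$ this gives $\frac{1}{k}\bigl(1 - \frac{k-h+a+1}{k+1}\bigr) = \frac{1}{k}\cdot\frac{h-a}{k+1} = \frac{h-a}{k(k+1)}$, and for $k \ge h$ it gives $\frac{1}{k}\bigl(1 - \frac{a}{k+1}\bigr) = \frac{k+1-a}{k(k+1)}$, matching the claim.

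The only genuinely delicate point — the one I expect to be the main obstacle, or at least the only thing requiring care rather than routine algebra — is handling the ceiling/floor when $h-a$ or $h$ is not an integer, so that the count of admissible coalition sizes is stated correctly; I expect the paper either assumes $h$ integral (natural, since it models the 32-ETH threshold) or has tacitly normalized so that $h-a$ behaves like an integer, in which case the count is exactly $k-h+a+1$ (resp. $a$) as written. Everything else is a one-line application of the symmetry/efficiency properties of the Shapley value together with the elementary identity $\binom{k}{t}\frac{t!(k-t)!}{(k+1)!} = \frac{1}{k+1}$. I would also briefly remark that $\phi_1(S) \ge 0$ and the small-player payoff $\ge 0$ in both cases (using $k \ge h-a$ in the first case, which holds since $S$ is winning and $a < h$ means the pool needs at least $\lceil h-a\rceil$ small players), confirming the formulas describe a valid allocation.
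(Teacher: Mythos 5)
Your proposal is correct and follows essentially the same route as the paper: the paper counts, for each position of the large player in a random permutation, whether she is pivotal (yielding $k-h+a+1$ admissible positions when $k<h$ and $a$ when $k\geq h$, each contributing $\frac{1}{k+1}$), and then assigns the leftover reward equally to the $k$ symmetric small players via efficiency — exactly your computation restated over subset sizes instead of positions. Your worry about non-integral $h$ or $a$ is moot in the paper's setting, since this lemma lives in the atomic model where $a$ and $h$ are taken to be integers.
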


The following is straightforward by applying \eqref{eq:shapley}.

\begin{fact}[Pools of Type B or C]
\label{fact:small_pool}
    Consider a winning pool $S$ consisting of only small players, with $|S| = t$, for some $t\geq h$. Then for every $i\in S$, $\phi_i(S) = 1/t$.
\end{fact}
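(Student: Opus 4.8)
The plan is to give the computation that the paper flags as ``straightforward by applying \eqref{eq:shapley}'', though I would first note a one-line conceptual argument. Since the pool $S$ consists only of small players, all of whom contribute identical stake $1$, the players are interchangeable: any permutation of $S$ leaves the reward function $\rho$ unchanged. By the symmetry axiom of the Shapley value, every member therefore receives the same payoff, and since the Shapley payoffs are efficient, i.e. $\sum_{i\in S}\phi_i(S)=\rho(S)=1$ (using that $t\geq h$ makes $S$ winning), each of the $t$ members gets exactly $1/t$. This already settles the claim; what remains is only to confirm it agrees with the raw formula.

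For the direct computation I would start from \eqref{eq:shapley}. With all stakes equal to $1$, the total stake of any subset $T\subseteq S$ equals its cardinality $|T|$, so by \eqref{eq:rho} we have $\rho(T)=1$ exactly when $|T|\geq h$. Hence the marginal contribution $\rho(T\cup\{i\})-\rho(T)$ is nonzero, and then equal to $1$, precisely when adding $i$ pushes the set across the threshold, that is, when $|T|=h-1$. The sum in \eqref{eq:shapley} therefore collapses to the subsets $T\subseteq S\setminus\{i\}$ of size exactly $h-1$.

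It then remains to count these and substitute the weights. There are $\binom{t-1}{h-1}$ such subsets, and for each the coefficient in \eqref{eq:shapley} is $\frac{(h-1)!\,(t-h)!}{t!}$. Multiplying gives
\[
\phi_i(S)=\binom{t-1}{h-1}\cdot\frac{(h-1)!\,(t-h)!}{t!}=\frac{(t-1)!}{(h-1)!\,(t-h)!}\cdot\frac{(h-1)!\,(t-h)!}{t!}=\frac{1}{t},
\]
as claimed. There is no real obstacle here; the only point requiring a moment's care is the bookkeeping that isolates $|T|=h-1$ as the unique cardinality contributing a nonzero marginal term, which relies on the winning condition $t\geq h$ guaranteeing that such subsets exist.
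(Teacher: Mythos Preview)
Your proof is correct and matches the paper's approach: the paper simply states the result follows by applying \eqref{eq:shapley}, and your direct computation carries out precisely that application, with the symmetry-plus-efficiency argument as a clean bonus. The only cosmetic remark is that the bookkeeping at the end (that $t\geq h$ ensures subsets of size $h-1$ exist in $S\setminus\{i\}$) is indeed the one place to be careful, and you handle it correctly.
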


The next lemma justifies why in Definition \ref{def:k-l}, it suffices to consider pools of Type B and Type C, regarding the pools containing only small players, in the construction of equilibria. 

\begin{lemma}
\label{lem: 4.4}
    Suppose there exists a Nash equilibrium containing in its partition 2 pools $S_1, S_2$, that consist only of small players. Then, it must hold that 
    $$ | |S_1| - |S_2| | \leq 1$$
\end{lemma}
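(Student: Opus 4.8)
The plan is to argue by contradiction: suppose a Nash equilibrium contains two pools $S_1,S_2$ consisting only of small players with $|S_1|-|S_2|\geq 2$, and derive a profitable deviation for some player. Write $t_1 = |S_1|$ and $t_2 = |S_2|$, so $t_1 \geq t_2 + 2$, and both $t_1, t_2 \geq h$ since the pools are winning. By Fact~\ref{fact:small_pool}, every player in $S_1$ currently receives $1/t_1$, and every player in $S_2$ receives $1/t_2 > 1/t_1$.

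The key step is to consider a player $p \in S_1$ and let her deviate to join $S_2$. After the move, $S_2$ becomes a pool of $t_2 + 1$ small players, which is still winning (since $t_2 \geq h$ implies $t_2 + 1 \geq h$), and by Fact~\ref{fact:small_pool} every player in the new pool — including $p$ — receives $1/(t_2+1)$. It remains to check that this is a strict improvement for $p$, i.e. that $1/(t_2+1) > 1/t_1$, equivalently $t_1 > t_2 + 1$, which is exactly our assumption $t_1 \geq t_2 + 2$. Hence $p$ has a profitable deviation, contradicting that the partition is a Nash equilibrium. One should also note that the deviation keeps $S_1$ winning or, if $t_1 - 1 < h$ were possible, that this does not matter because $p$ has already left; in fact $t_1 \geq h+1$ here (since $t_1 \geq t_2 + 2 \geq h+2$), so $S_1\setminus\{p\}$ remains winning and the move is a legitimate unilateral deviation within the game as defined.

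I do not expect any real obstacle; the only mild subtlety is bookkeeping about winning status of the pools after the move, and making sure the deviation considered is of the allowed type (a player moving to an existing pool), both of which are immediate from $t_1, t_2 \geq h$. Everything reduces to the monotonicity of $1/t$ together with Fact~\ref{fact:small_pool}, so the argument is short.
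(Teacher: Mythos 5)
Your argument is exactly the paper's: assume $|S_1|\geq|S_2|+2$, move one player from $S_1$ to $S_2$, and use Fact~\ref{fact:small_pool} to see her payoff rises from $1/|S_1|$ to $1/(|S_2|+1)$, contradicting equilibrium. The extra bookkeeping you include about both pools remaining winning is correct and harmless; the proof is right.
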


Therefore, for games with a single large player, when $a<h$, the equilibria of the game must have the format of a $(k, l)$-partition, for appropriate values of $k$ and $l$. Each of the pools should have total stake that exceeds the threshold $h$, so that they are winning pools, and this is why in Definition \ref{def:k-l}, we enforced that  $k+a\geq h$ and $l-1\geq h$ ($l\geq h$ some times also suffices if there are no pools of Type C). 
The next step is to understand how to set the parameters $k$ and $l$ so that a $(k, l)$-partition is a Nash equilibrium. The following lemma describes a set of sufficient conditions, which are also necessary when the partition has pools of all types (A, B and C). 
For our purposes, we will stick below to the case that $k\leq  h-1$.

\begin{lemma}[Sufficient conditions]
\label{lem:NE_conditions}
    Consider a $(k, l)$-partition with $k<h$. When $k\leq  h-2$, the partition is a Nash equilibrium, if the following conditions hold:
    \begin{enumerate}[noitemsep]
        \item $\frac{k(k+1)}{h-a}\leq l\leq\frac{(k+1)(k+2)}{h-a}$
        \item $l\geq \frac{a(k+1)}{k-h+a+1}$
    \end{enumerate}

\noindent When $k=h-1$, the only change is that the upper bound in the first condition needs to be replaced by: $l \leq \frac{(k+1)(k+2)}{h+1-a}=\frac{h(h+1)}{h+1-a}$.   
\end{lemma}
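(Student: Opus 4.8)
The plan is to derive the conditions by systematically ruling out every profitable unilateral deviation, using the payoff formulas of Lemma~\ref{lem:typeA} and Fact~\ref{fact:small_pool}. First I would classify the possible deviators: (i) a small player in a Type~A pool, (ii) a small player in a Type~B or Type~C pool, (iii) the large player in a Type~A pool. For each, I enumerate the destinations: moving to another existing pool of each type, or opening a new singleton pool (which is never winning since $a<h$ and a single small player has stake $1<h$, so this option is dominated and can be dismissed immediately). The key observation making the analysis tractable is monotonicity: a small player's payoff $1/t$ in a pure-small pool is decreasing in $t$, and $(h-a)/(k(k+1))$ in a Type~A pool is decreasing in $k$; so the only ``dangerous'' destinations are the pools that are currently smallest, and adding oneself increments the relevant count by one.

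Next I would turn each non-deviation requirement into an inequality. For a small player currently in a Type~B pool (size $l$): her current payoff is $1/l$; moving to a Type~A pool (making it have $k+1$ small players, still $<h$ as long as $k\le h-2$) gives $(h-a)/((k+1)(k+2))$, so we need $1/l \ge (h-a)/((k+1)(k+2))$, i.e. $l \le (k+1)(k+2)/(h-a)$ — the upper bound of condition~1. Moving to a Type~C pool gives $1/l$ as well (it becomes size $l$), so that is neutral. For a small player in a Type~C pool (size $l-1$), her payoff $1/(l-1)$ must beat moving into a Type~A pool, giving $1/(l-1)\ge (h-a)/((k+1)(k+2))$, which is implied by the Type~B condition; and moving into a Type~B pool yields $1/l < 1/(l-1)$, harmless. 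The subtle direction is a small player in a Type~A pool wanting to leave: her payoff is $(h-a)/(k(k+1))$, and moving to a Type~C pool (the smallest pure-small destination, becoming size $l$) gives $1/l$, so we need $(h-a)/(k(k+1)) \ge 1/l$, i.e. $l \ge k(k+1)/(h-a)$ — the lower bound of condition~1. (If only Type~B pools exist, she'd land in a pool of size $l+1$, which is weaker, hence the remark that $l\ge h$ alone can suffice without Type~C.) Finally, the large player in her Type~A pool has payoff $(k-h+a+1)/(k+1)$; her only tempting move is into a pure-small pool — the smallest one, Type~C, making it size $l$ with one large player, which by Lemma~\ref{lem:typeA} (case $k\ge h$ need not apply, but the formula $a/(\text{size})$ holds when the new small-count is $\ge h$; since $l-1\ge h$ this is the $k\ge h$ branch) gives her $a/l$. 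So we need $(k-h+a+1)/(k+1) \ge a/l$, i.e. $l \ge a(k+1)/(k-h+a+1)$ — condition~2. I should double-check the large player moving into an existing Type~A pool: but there is only one large player, so no such pool exists; this is why the single-large-player case is clean.

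The boundary case $k=h-1$ needs separate care: now a small player moving into a Type~A pool makes its small-count equal to $h$, so I must use the \emph{second} bullet of Lemma~\ref{lem:typeA} rather than the first. Her payoff there becomes $(k+1+1-a)/((k+1)(k+2)) = (h+1-a)/(h(h+1))$, so the Type~B non-deviation inequality becomes $1/l \ge (h+1-a)/(h(h+1))$, i.e. $l \le h(h+1)/(h+1-a)$ — exactly the stated replacement for the upper bound, and the rest of the conditions are unaffected since the large player's own deviation and the lower-bound direction do not involve adding a small player to the Type~A pool. I would also verify that when the partition contains pools of all three types, these conditions are not merely sufficient but necessary — this follows because each inequality above corresponds to an actually-available deviation that some player can perform, so violating any one of them produces a profitable move.

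The main obstacle I anticipate is bookkeeping around which branch of Lemma~\ref{lem:typeA} applies after a deviation (the $k<h$ versus $k\ge h$ split), together with making sure I have genuinely covered every (deviator type, destination type) pair — including the easy-to-forget cases like a small player in Type~A moving to Type~B, or checking that deviations between two pools of the same type are neutral or harmful. The monotonicity remarks above are what keep this from exploding, but stating and using them carefully is the crux; once that is in place, each case reduces to a one-line comparison of two closed-form fractions, and collecting the binding ones yields precisely conditions~1 and~2 (with the $k=h-1$ amendment).
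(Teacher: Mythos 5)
Your proposal is correct and follows essentially the same approach as the paper's proof: enumerate each (deviator, destination) pair, apply Lemma~\ref{lem:typeA} and Fact~\ref{fact:small_pool} to the post-deviation pool, and read off the binding inequalities (Type~B~$\to$~A gives the upper bound, Type~A~$\to$~C gives the lower bound, and the large player's move to a Type~C pool gives condition~2). You are in fact slightly more explicit than the paper on the $k=h-1$ boundary case, correctly switching to the $k\geq h$ branch of Lemma~\ref{lem:typeA} to obtain the amended upper bound $l\leq h(h+1)/(h+1-a)$.
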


Based on the above, we prove below that for large enough games, there always exist Nash equilibria (characterized by the values of $k$ and $l$), which are relatively good compared to the optimal formation.

\begin{theorem}
\label{thm:PoS}
    For instances with a single large player and with
    $n\geq (2h-1)^2+h$,  PoS $\leq 2$.
\end{theorem}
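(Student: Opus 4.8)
The plan is to exhibit, for every game $G$ in the stated class, a Nash equilibrium that is a $(k,l)$-partition (Definition~\ref{def:k-l}) with $k\le h-1$ and $l\le 2h$, and then to compare its number of winning pools against $OPT(G)$. For the optimum, note that in any partition into winning pools at most one pool contains the large player while every other winning pool has at least $h$ small players, so if there are $t$ winning pools then $(t-1)h\le n-1$, i.e. $OPT(G)\le 1+(n-1)/h=(n+h-1)/h$.

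The heart of the proof is to find, for every admissible stake $a$ — equivalently for every $d:=h-a\in\{1,\dots,h-2\}$ — integers $k,l$ with $h-a\le k\le h-1$ and $h+1\le l\le 2h$ meeting the sufficient conditions (1)--(2) of Lemma~\ref{lem:NE_conditions} (using the $k=h-1$ variant of condition (1) when $k=h-1$). I would split on the size of $d$. When $d$ is large relative to $h$ (the large player is nearly self-sufficient), I take $k=h-1$: condition (2) then reduces to $l\ge h$, and condition (1) becomes $\tfrac{(h-1)h}{d}\le l\le\tfrac{h(h+1)}{d+1}$, a subinterval of $[h+1,2h]$; if this narrow interval happens to contain no integer, one drops to $k=h-2$, which yields a wider window. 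When $d$ is small (large $a$), I instead take $k$ of order $\sqrt{hd}$ — the integer for which the window $[k(k+1)/d,(k+1)(k+2)/d]$ of condition (1) first reaches into $[h+1,2h]$ — increasing $k$ if needed so that the right-hand side of condition (2), which is decreasing in $k$, drops below the chosen $l$. The remaining boundary values of $d$ and the small values of $h$ are checked directly. Once $k,l$ are fixed, the $(k,l)$-partition placing the large player with $k$ small players and splitting the rest into pools of sizes $l$ and $l-1$ is a Nash equilibrium by Lemma~\ref{lem:NE_conditions}, and all its pools are winning since $k+a\ge h$ and $l-1\ge h$.

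The hypothesis $n\ge(2h-1)^2+h$ is precisely what makes this partition realizable: after removing the $k\le h-1$ small players placed with the large player, the remaining $n-1-k\ge(2h-1)^2$ small players exceed the Frobenius number of the coprime pair $(l-1,l)$ — which, as $l\le 2h$, lies below $(2h-1)^2$ — so they split into pools of sizes $l$ and $l-1$. If $q$ such pools result then $ql\ge n-1-k$, so the number of winning pools is $W=1+q\ge 1+(n-1-k)/l\ge 1+(n-h)/(2h)=(n+h)/(2h)$, and therefore $\mathrm{PoS}\le\frac{(n+h-1)/h}{(n+h)/(2h)}=\frac{2(n+h-1)}{n+h}<2$.

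The step I expect to be the main obstacle is the middle one: guaranteeing that the pair $(k,l)$ exists with $l$ bounded by $2h$. The interval that conditions (1)--(2) leave for $l$ can be shorter than $1$, so beyond the two clean regimes one must verify — partly by hand for small $h$ — that for every boundary value of $h-a$ some admissible choice of $k\le h-1$ still leaves an integer $l$ in range. This is the analogue here of the exhaustive case analysis deferred for Claim~\ref{cl:f(ki,kj)}.
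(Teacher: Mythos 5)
Your proposal is correct and follows essentially the same route as the paper: the same sufficient conditions from Lemma~\ref{lem:NE_conditions}, the same two-regime split (using $k=h-1$ when $a$ is small relative to $\sqrt{h}$, and $k$ of order $\sqrt{h(h-a)}$ — exactly the paper's threshold $k^*$ — when $a$ is large), and the same final counting via pools of stake at most $2h$; the existence of an admissible integer pair $(k,l)$, which you correctly flag as the hard step, is precisely what the paper's deferred Lemmas~\ref{lem:k,l values-part1} and~\ref{lem:k,l values-part2} establish. One small slip: your parenthetical ``the large player is nearly self-sufficient'' should attach to the small-$d$ (large-$a$) regime, not the large-$d$ one, though the mathematics you assign to each regime is the right way around.
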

\begin{proof}
    We provide an outline and defer to the Appendix the proofs of Lemmas \ref{lem:k,l values-part1} and \ref{lem:k,l values-part2}, that are stated below, which are the most technical parts of the entire proof. We show that there exists a $(k, l)$-equilibrium profile, 
    with $l\leq 2h$ and $k+a\leq2h$. This would mean that all the pools at the equilibrium have stake at most $2h$. Since the optimal formation has pools with stake at least $h$, the Price of Stability is at most 2. %

    To ensure the existence of such an equilibrium, we exploit the sufficient conditions of Lemma \ref{lem:NE_conditions}, and we distinguish 2 cases, based on whether we will search for $k\leq h-2$ or for $k=h-1$. 

    \noindent {\bf Case 1: $h\leq a^2-2a+2$.}\\
    In this case we will identify a value for $k$ so that $k\leq h-2$. The first condition of Lemma \ref{lem:NE_conditions} defines an interval $\Delta$ for the possible values of $l$, which contains at least two integers. This is true since $\frac{(k+1)(k+2)}{h-a}-\frac{k(k+1)}{h-a}=\frac{2(k+1)}{h-a}\geq 2$, for any feasible $k$, since we are only looking for values of $k$ with $k +a \geq h$. Hence, if the interval $\Delta$ belongs entirely to the region for $l$ defined by the second condition of Lemma \ref{lem:NE_conditions}, there exists an equilibrium where both $l-1$ and $l$ can take an integer value belonging to $\Delta$. In order for this to happen, 
    it must hold that: $\frac{k(k+1)}{h-a}\geq \frac{a(k+1)}{k-h+a+1}\Leftrightarrow k^2+k(a-h+1)-a(h-a)\geq 0$.

    We can proceed with solving the inequality of degree two, with respect to $k$. It is easy to check that the discriminant is positive. Hence, by carrying out the calculations, we eventually get that the inequality holds for the integer values that satisfy 
     \begin{align}
     \label{k*_value}
     k\geq k^* =  \frac{1}{2}(\sqrt{-3a^2+h^2+2ah-2h+2a+1}-a+h-1).
    \end{align}
The remainder of the proof for this case now is to show how to set $k$ and $l$ appropriately. 
  \begin{lemma}
  \label{lem:k,l values-part1}
      For any pair of integer values of $a$ and $h$ for which $2\leq a\leq h-1$, and $h\leq a^2-2a+2$, there are integer values $k,l$ such that: $k\geq k^*$, $h-a\leq k\leq h-2$ and $h+1\leq l\leq 2h$.
  \end{lemma}

\noindent {\bf Case 2: $h > a^2-2a+2$.}\\
Now we cannot guarantee that $k\leq h-2$ and we need to follow a different (and in fact simpler) approach. We have again an interval $\Delta$ for the values of $l$, but as we will be looking for $k=h-1$, we need to use the analogous conditions from Lemma \ref{lem:NE_conditions}. The remaining part for handling Case 2 is by the lemma below. 

\begin{lemma}
\label{lem:k,l values-part2}
    For any pair of integer values of $a$ and $h$ for which it holds that $2\leq a\leq h-1$, $h> a^2-2a+2$, there are integer values $k,l$ such that: $h-a\leq k\leq h-1$ and $h+1\leq l\leq 2h$.
\end{lemma}

Given Lemmas \ref{lem:k,l values-part1} and \ref{lem:k,l values-part2}, we can now describe the equilibrium construction for any $n\geq l^2+ k  +1$ (for the integer values $k,l$ that were determined in Lemmas \ref{lem:k,l values-part1} and \ref{lem:k,l values-part2}), and thus for any $n\geq (2h-1)^2+2h$. Since the values for $k$ and $l$ that we found satisfy the equilibrium conditions, all that remains to show is that we can actually partition the $n$ players into such a $(k, l)$-partition. To do this, we will have $k+1$ players, including the large player, form one Type A pool. For the remaining $n-k-1\geq (l-1)^2$ players, we first create as many Type C pools with exactly $l-1$ players as possible. And then, each of the remaining $(n-k-1)\mod (l-1)$ players is added to a different Type C pool, converting it to a Type B pool. This completes the proof of Theorem \ref{thm:PoS}.
\end{proof}

\subsubsection{Games with more large players}
\label{subsec:shapley-more}

Building upon the results of the previous subsection, we can address the more general case, where there are more large players. Suppose we have $n_b$ large players and $n_s$ small players so that $n=n_b+n_s$. As before, we show that there is again a $(k, l)$-partition that is an equilibrium and so that each pool has stake at most $2h$.
To prove this, we need again to identify sufficient equilibrium conditions. But for this we can still utilize Lemma \ref{lem:NE_conditions} with one additional condition: that large players have no incentive to deviate to another Type A pool. 

The remaining argument for completing the proof of Theorem \ref{thm:pos-general} is in Appendix \ref{app-sec:shapley}.

\subsubsection{Lower bounds for the atomic model with enough small players}

For the atomic model, Claim \ref{cl:lower} shows that Example \ref{example-PoS} provides a lower bound of 2. However the example does not satisfy the premises of Theorem \ref{thm:pos-general}, that we have sufficiently many small players of stake 1. Under this assumption the best lower bound we have established is 1.5, as stated below. Therefore, for the atomic model, it remains an open problem whether the Price of Stability is lower than 2 when there is a sufficiently high population of small players or whether Theorem \ref{thm:pos-general} is tight.

\begin{theorem}
\label{thm:atomic lower bound}
    There exist instances where the number of small players satisfies the assumption of Theorem \ref{thm:pos-general}, and where $PoS\geq 1.5$.
\end{theorem}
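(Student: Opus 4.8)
The plan is to exhibit an explicit family of instances, parametrized by the threshold $h$ (taken large), in which we have one or a few large players together with a mass of small players of stake~$1$ satisfying $n_s \geq (2h-1)^2 + n_b(h+1)$, and then argue that \emph{every} Nash equilibrium of the Shapley game on this instance forms pools whose size forces $W(\Pi)$ to be roughly $\frac{2}{3}$ of $OPT(G)$. The natural candidate, mirroring the oceanic lower bound of Claim~\ref{cl:lower-bound-oceanic} and Example~\ref{example:non-atomic}, is to pick the large player's stake $a$ so that the discrete analog of the equation $l = k^2$, $\,a = h - k^2/l$ from Lemma~\ref{lem:NE_conditions} (condition~1 at equality, together with condition~2) pins $k$ and $l$ down to values with $k + a$ and $l$ close to $\frac{4h}{3}$, while $OPT$ can pack pools of size essentially $h$. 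Concretely I would set $a = h - k^2/l$ with $l \approx k^2$ and $k \approx 2h/3$, so $k+a \approx 2h/3 + 2h/3 = 4h/3$ and $l \approx 4h^2/9$... this last value is too large, so the more likely route is to keep $l$ comparable to $h$ by choosing $k$ of order $\sqrt h$; then $a = h - k^2/l$ with $l\approx h+\Theta(1)$ gives $k\approx\sqrt h$ and $k+a$ close to $h$, which is \emph{not} bad enough. Hence the right construction is the one with $n_b$ growing: take $n_b$ large players, each of stake $a$ chosen (via Lemma~\ref{lem:NE_conditions} and the extra Type-A-deviation condition from Section~\ref{subsec:shapley-more}) so that at \emph{every} equilibrium each large player sits in a Type~A pool of size $k+a \approx \tfrac32 h$, forcing $W \leq \tfrac{2}{3}OPT$ in the limit.

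The key steps, in order, would be: (i) fix the parameters --- choose $h$ large, $a$ as a function of $h$, and $n_b, n_s$ so that $n_s \geq (2h-1)^2 + n_b(h+1)$ holds; (ii) compute $OPT(G)$ for this instance, showing a near-perfect packing into pools of total stake $\approx h$ exists, so $OPT(G) \approx n/h$; (iii) characterize the set of Nash equilibria: using Lemma~\ref{lem: 4.4} (all small-only pools differ in size by at most $1$) and Lemma~\ref{lem:typeA}/\ref{lem:NE_conditions} plus the Type-A-to-Type-A no-deviation condition, argue that in \emph{any} equilibrium the small-only pools have size $\ell$ with $\ell \geq \tfrac32 h - o(h)$, and each large player's pool has size $\approx \tfrac32 h$; (iv) conclude $W(\Pi) \leq \frac{n}{(3/2)h - o(h)}$ for every equilibrium $\Pi$, hence $PoS = OPT/W \to 3/2$.

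The main obstacle is step (iii): ruling out \emph{all} equilibria with smaller pools, not merely constructing one bad equilibrium. The earlier theorems only needed to \emph{build} a single good $(k,\ell)$-equilibrium, whereas here I must show no equilibrium does better than $3/2$. This requires turning the sufficient conditions of Lemma~\ref{lem:NE_conditions} into (near-)necessary ones --- the lemma already states they are necessary when the partition contains pools of all three types A, B, C --- and then carefully analyzing the boundary cases where a pool type is absent, as well as equilibria in which two large players share a pool (which Lemma~\ref{lem:nonatomic-rewards}'s two-large-player formula, or its atomic analog, would be needed to control). Handling non-$(k,\ell)$ equilibria, e.g.\ where a small player profitably moves between a Type~A and a small-only pool, is exactly what condition~1 of Lemma~\ref{lem:NE_conditions} governs, so the argument reduces to showing that the equality $\ell = \frac{k(k+1)}{h-a}$ (up to the $\pm1$ discreteness slack) together with $k+a \geq h$ forces $\ell$ and $k+a$ to both be at least $\tfrac32 h - O(1)$ for the chosen value of $a$; verifying this inequality chain, and checking it is robust to the Type~B/Type~C rounding and to the presence of multiple large players, is the technical heart of the proof.
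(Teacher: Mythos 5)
Your plan never converges to an actual instance. You try $k\approx 2h/3$ with $l\approx k^2$, discard it, try $k\approx\sqrt h$, discard that too, and end by asserting that some choice of $a$ together with a growing number $n_b$ of large players ``forces'' every equilibrium into pools of size $\approx\tfrac32 h$ --- but that assertion is the theorem itself, and nothing in the proposal substantiates it. The paper's proof is far more modest and needs neither asymptotics in $h$ nor multiple large players: it takes the single instance $h=4$, one large player of stake $a=2$, and a large population of unit-stake small players. For that instance the two conditions of Lemma~\ref{lem:NE_conditions} read $\frac{k(k+1)}{2}\le l\le\frac{(k+1)(k+2)}{2}$ and $l\ge\frac{2(k+1)}{k-1}$; the smallest $k$ for which they are compatible is $k=2$, which pins down $l=6=\tfrac32 h$, and for larger $k$ the lower bound $\frac{k(k+1)}{2}$ already exceeds $6$. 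Every $(k,l)$-equilibrium therefore has small-only pools of stake at least $\tfrac32 h$, and $PoS\ge\tfrac32$ follows in the limit. If you pursue your route, note that taking $n_b$ large makes the problem strictly harder, since you must additionally rule out equilibria in which two large players share a pool --- a case you flag but do not handle.

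That said, the obstacle you isolate in step (iii) --- that Lemma~\ref{lem:NE_conditions} gives \emph{sufficient} conditions, necessary only when all three pool types are present --- is entirely real, and it is exactly the point the paper's own proof treats most casually (``the best equilibrium is necessarily in the form of a $(k,l)$-equilibrium with the lowest possible value for $l$''). In particular, a $(k,l)$-partition consisting only of Type~C pools must be excluded separately: in the instance above, the partition with one Type~A pool ($k=2$, stake $4$) and all remaining small players in pools of size $5$ survives every deviation check from Lemma~\ref{lem:typeA} (the large player earns $\tfrac13$ in her pool and exactly $\tfrac13$ by joining a $5$-pool, so she is merely indifferent; every small player strictly loses by moving), yet it has strictly more winning pools than the all-size-$6$ configuration and would only certify $PoS\ge\tfrac54$. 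So your instinct that the Type~B/Type~C rounding and the necessity of the equilibrium conditions are ``the technical heart'' is correct; any complete proof of this lower bound --- yours or the paper's --- has to confront that case explicitly rather than appeal to the sufficient conditions of Lemma~\ref{lem:NE_conditions} alone.
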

\begin{proof}
    In order to prove this lower bound of $\frac{3}{2}$ we will consider a simple instance, where there is only one large player with stake $a=2$, the threshold is equal to $h=4$, and there is a large amount of small players with stake 1. In this instance, since there is only one "big" player and many small players, the best equilibrium is necessarily in the form of a $(k,l)$-equilibrium, with the lowest possible value for $l$. 
From the conditions of Lemma \ref{lem:NE_conditions}, we have that in a $(k,l)$-equilibrium it must hold that $\frac{k(k+1)}{h-a}\leq l\leq \frac{(k+1)(k+2)}{h-a}$ and $l\geq\frac{a(k+1)}{k-h+a+1}$. In order to find the best equilibrium, it suffices to find the smallest value of $k$, for which the two above conditions overlap. This happens for $k=2$ and the two conditions above become $3\leq l \leq 6$ and $l\geq 6$. Hence, the only feasible value for $l$ to have a $(k,l)$-equilibrium is $l=6=\frac{3h}{2}$. Given that the optimal formation consists of pools with total stake at least equal to $h$, this leads us to the lower bound of $\frac{3}{2}$ for the Price of Stability.

\end{proof}
\section{Optimality of the proportional sharing scheme}
\label{sec:proportional}

We come now to compare the Shapley scheme with the simplest possible scheme that one could apply in this setting, namely proportional sharing.

Recall that for a pool $C$, we have denoted by $m(C)$ the total stake of the pool, including atomic and non-atomic stake. For atomic players, the proportional share of an agent $i$ is the percentage of her contribution over $m(C)$, where $C$ is the pool that she belongs to. For the non-atomic players, the reward per unit of stake is equal to the profit divided by the total stake of the pool. This is a quite standard and popular rule, that is being used widely in several off-chain and on-chain agreements.
\begin{definition}
    For a player $i\in N_a$, with stake $a_i$ belonging to a pool $C$, the reward she receives by the proportional reward sharing scheme is $p_i(a_i, C)=\frac{a_i}{m(C)}\cdot \rho(C).$ For non-atomic players, the reward per unit of stake is $\rho(C)/m(C)$. 
\end{definition}
For this scheme, there is always an equilibrium that induces an optimal formation, and this holds for any stake distribution, without the premises of Theorem \ref{thm:non-atomic}.

\begin{theorem}
\label{thm:PoS_proportional}
    In the oceanic model, the proportional sharing scheme has Price of Stability equal to 1. This holds independently of the volume of non-atomic players (and hence it holds for the purely atomic model as well).
\end{theorem}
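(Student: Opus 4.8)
The plan is to show that we can always construct a Nash equilibrium whose partition is an optimal partition, i.e., consists of $OPT(G)$ winning pools. Start from an optimal partition $\Pi^* = (S_1, \dots, S_t)$ with $t = OPT(G)$. The first step is a normalization: in an optimal partition we may assume each pool has total stake as close to $h$ as possible, but more usefully, we may assume that there is at most one pool with "excess" mass of non-atomic players, since non-atomic stake can be freely reshuffled between pools. So I would first argue that, by moving non-atomic mass around (which never decreases the number of winning pools), we can reach an optimal partition $\Pi^*$ in which all but at most one pool have total stake exactly $h$, and the remaining pool has stake in $[h, 2h)$ (or just $\geq h$). Crucially, reshuffling non-atomic mass keeps the number of winning pools at $t$, so $\Pi^*$ is still optimal.

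The second step is to verify that such a $\Pi^*$ is a Nash equilibrium under proportional sharing. I need to check no player wants to deviate. For a non-atomic player: under proportional sharing her reward per unit stake in pool $C$ is $\rho(C)/m(C) = 1/m(C)$ for a winning pool. So she wants to be in the pool with smallest total stake. If all winning pools have stake exactly $h$ except possibly one with stake $\geq h$, then moving from a stake-$h$ pool to another pool never strictly helps (the target has stake $\geq h$), and moving to a fresh pool on her own gives reward $0$ (infinitesimal stake $< h$). So non-atomic players have no profitable deviation — provided we have arranged that any pool with extra mass does not sit at stake exactly $h$ while another has stake $> h$; but with at most one "big" pool this is automatic. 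For an atomic player $i$ in pool $C$: her reward is $a_i/m(C)$. If she moves to another winning pool $C'$, it becomes $a_i/(m(C') + a_i)$; since $m(C') \geq h = $ (at least) $m(C)$ in the worst case, and in any case $m(C') + a_i > m(C)$ whenever $m(C') \geq m(C)$, wait — I need $m(C) \leq m(C')$. If $C$ is the unique big pool this could fail. So the cleaner statement: put all the excess non-atomic mass into a single pool and let every atomic player reason that deviating to any pool $C'$ yields $a_i/(m(C')+a_i) < a_i/m(C)$ iff $m(C') + a_i > m(C)$, i.e., iff $m(C') > m(C) - a_i$; since $m(C') \geq h > m(C) - a_i$ would follow from $m(C) < h + a_i$, which holds if the big pool has stake $< h + a_i$ — this requires a little care, so I would instead put excess mass into the pool of a largest atomic player, or spread it so the big pool's stake stays below $h + \min_i a_i$, which is possible if enough optimal pools exist; alternatively note that if $m(C) > h$ strictly then we could have moved mass out, contradicting minimality — so actually after normalization every pool with an atomic player has stake exactly $h$, and the unique non-atomic-only pool absorbs the slack. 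Then atomic deviations to any other pool $C'$ give $a_i/(h + a_i) < a_i/h$, and opening a solo pool gives $0$ since $a_i < h$. So no deviation helps.

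The third step handles the purely atomic case (no ocean), where we cannot freely reshuffle stake, so an optimal partition need not have pools of stake exactly $h$. Here the argument is the same in spirit: take any optimal partition; an atomic player $i$ in pool $C$ of stake $m(C)$ moving to $C'$ gets $a_i/(m(C') + a_i)$, which is $\geq a_i/m(C)$ only if $m(C') + a_i \leq m(C)$, i.e. $m(C') \leq m(C) - a_i < m(C)$; but then pool $C$ would still be winning after $i$ leaves (stake $m(C) - a_i \geq m(C') \geq h$) and pool $C'$ gains, so this does not change the count — however it might still be a weakly profitable deviation breaking the "equilibrium." To get an actual equilibrium I would pick, among all optimal partitions, one that lexicographically minimizes the sorted vector of pool stakes (or minimizes $\sum m(S_j)^2$); then no atomic player has a strictly profitable deviation, because a strict improvement $a_i/(m(C')+a_i) > a_i/m(C)$ requires $m(C') + a_i < m(C)$, and performing it yields another optimal partition with a strictly smaller potential, contradiction. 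Opening a solo pool gives $0 < a_i/m(C)$.

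The main obstacle I anticipate is the bookkeeping in the atomic-model step: ensuring that the chosen extremal optimal partition genuinely admits no \emph{strictly} profitable unilateral deviation (weakly profitable deviations to equal-utility configurations are allowed by the Nash definition here, so I must confirm the inequality directions are strict). The oceanic step is easier once the normalization "all pools with an atomic player have stake exactly $h$" is justified, and that justification is the one place to be slightly careful — it uses that non-atomic mass is infinitely divisible and that $OPT(G)$ is unchanged by redistributing it.
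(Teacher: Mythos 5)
Your Step 3 (pick an extremal optimal partition and derive a contradiction from any strictly profitable atomic deviation via the balancing inequality $m(C')+a_i<m(C)$) is essentially the paper's proof: the paper selects a leximin-optimal partition (maximize the smallest pool's stake, then the second smallest, and so on) and gets the same contradiction. The problem is that you reserve this argument for the purely atomic case and handle the general oceanic case with the Step 2 normalization, and that normalization is where the proof breaks. First, the normalized form need not exist: with $h=4$, atomic stakes $(3,3)$ and ocean mass $4$, the optimal partition has pools $\{3,3\}$ and an ocean-only pool of mass $4$; the pool $\{3,3\}$ contains atomic players, has stake $6>h$, and contains no non-atomic mass that could be ``moved out,'' so your claim that after normalization every pool with an atomic player has stake exactly $h$ is false (and with ocean mass $2.5$ and atomic stakes $(3,3,3,3)$ there is no winning ocean-only pool at all, so the slack has nowhere to go). Second, even when the normalized form exists it is generally \emph{not} an equilibrium: if the slack pool has stake $h+\delta$ with $\delta>0$ and contains non-atomic mass, each of its non-atomic members earns $1/(h+\delta)$ per unit of stake and strictly prefers to move to a stake-$h$ pool where she would earn $1/h$. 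The stable structure is the opposite of what you build: all non-atomic mass must sit in the \emph{minimum}-stake pools, and pools of larger stake must be purely atomic.

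The repair is to run your Step 3 argument uniformly over the oceanic model, which is exactly what the paper does. At a leximin-optimal (or $\sum_j m(C_j)^2$-minimizing) optimal partition, one first observes that all non-atomic mass lies in the minimum-stake pools --- otherwise transferring a small amount of ocean to a smaller pool keeps the partition optimal and strictly improves the potential --- so non-atomic players already earn the best available rate $1/\min_j m(C_j)$ and have no profitable deviation. Then a strictly profitable deviation of an atomic player $i$ from $C_k$ to $C_r$ forces $m(C_k\setminus\{i\})>m(C_r)\ge h$, so the deviated partition is still optimal but strictly better for the potential, a contradiction. Your concern about strictness is handled correctly in your write-up (the equilibrium notion only forbids strictly profitable deviations), and existence of the extremal partition is unproblematic since the atomic assignments are finite and the ocean distribution ranges over a compact set.
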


\begin{proof}
Consider an optimal partition of the players into pools, as defined in \eqref{eq:OPT}. If there are multiple optimal partitions, consider the one that is as balanced as possible, i.e., a lexicographically optimal (leximin) one, where the stake of the smallest pool is maximized, and subject to that, the stake of the second smallest pool is maximum, and so on.  Let $C_1,\dots, C_t$ be the pools of this partition. It holds that $m(C_j)\geq h$ for any $j\in [t]$. 

We note first that all the non-atomic players must belong to the pool (or pools) of minimum stake. If not, we could transfer some stake to the smallest pools and arrive at a lexicographically better one. 
Hence, for all non-atomic players, the reward per unit of stake is $\frac{1}{\min_j m(C_j)}$. This means that they have no incentive to switch to any other pool, where the reward per unit of stake is lower.
Let us consider now the atomic players. If the partition is not an equilibrium, then there exists a player $i\in N_a$, belonging to some pool $C_k$, who becomes better off by moving to a pool $C_r$. But then this means that $\frac{s_i}{m(C_k)} < \frac{s_i}{s_i + m(C_r)}$.
    This implies that $m(C_k) > s_i + m(C_r)$, i.e., that $m(C_k\setminus \{i\}) > m(C_r)$. But then $i$ is included in a pool $C_k$ in the optimal partition that we considered, where even without $i$, $C_k$ has more stake than $C_r$. This means that the partition we started with was not a lexicographically optimal partition, a contradiction.
\end{proof}

\section{Sybil resilience}
\label{sec:Sybil}

So far, it may appear that the proportional scheme is a dominating solution within the space of candidate reward schemes. In this section, we highlight one drawback of proportional sharing, which provides an advantage of Shapley, related to 
the resilience to Sybil strategies. By a Sybil strategy we refer to the act of an atomic player to split her stake into smaller portions so as to participate 
in different pools, 
disguised as a set of different players 
(which can be feasible, given the anonymity in blockchain environments). 

\subsection{Sybil resilience of equilibria}
\label{subsec:Sybil-Resiliency non-atomic}

There are different ways to define Sybil attacks in the literature, depending on the model at hand and on the allowed or feasible behaviors from the players. 
For our model, we consider Sybil attacks with respect to some given configuration (which may have formed from previous actions that the players took).
We also consider that only the atomic players are eligible to perform a Sybil attack.

\begin{definition}
\label{def:sybil-attack}
    Fix a strategy profile inducing a partition $\Pi$, of the players into pools. Given $\Pi$, a Sybil strategy for a player $i\in N_a$ with stake $a_i$ is a strategy where the player splits her stake into $s_{i_1},\dots, s_{i_t}$ for some $t$, with $\sum_{j=1}^t s_{i_j} = a_i$ and chooses $t$ pools to join among the ones formed by $\Pi$, contributing stake $s_{i_j}$ to pool $j$, for $j\in [t]$. The payoff under such a strategy for player $i$ is the sum of the payoffs from all the $t$ pools.       
\end{definition}
Based on the above definition of a Sybil attack, below we define the notion of Sybil-proofness (alternatively Sybil-resiliency).
\begin{definition}
\label{def:sybil-at-profile}
    Fix a partition $\Pi$. We say that a reward sharing scheme is Sybil-proof w.r.t. the partition $\Pi$ if no player $i$ can become strictly better off by switching to a Sybil strategy. 
\end{definition}

Ideally, we would like to have reward schemes where no player has an incentive for Sybil attacks regardless of the configuration the game might be at. 
This is too much to ask for, but we start here with a positive result, that concerns Sybil resilience w.r.t. the equilibrium formations of Section \ref{sec:shapley}. In particular, for the Shapley scheme, we show that even after enlarging the strategy space with Sybil strategies, 
the equilibrium partitions described in Section \ref{sec:shapley}, continue to be at equilibrium. \begin{theorem}
\label{thm:Sybil_shapley non-atomic}
    In the oceanic model, the Shapley value reward sharing scheme is Sybil-proof w.r.t. the equilibria identified in Theorem \ref{thm:non-atomic}, and more generally w.r.t. any equilibrium as described by the conditions of Lemma \ref{lem:eq-conditions-oceanic}.
\end{theorem}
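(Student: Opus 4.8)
The plan is to verify directly that, starting from one of the equilibrium partitions $\Pi$ characterized by Lemma~\ref{lem:eq-conditions-oceanic}, no atomic player $i$ can improve her payoff by splitting her stake $a_i$ into avatars $s_{i_1},\dots,s_{i_t}$ distributed among the existing pools of $\Pi$. The key structural fact we exploit is that in such a partition every pool is either (a) the pool of a single large player $j$ together with a mass $k_j\le h$ of small players (total stake $a_j+k_j\ge h$, hence close to $h$), or (b) a pool of only small players with total mass $l\ge h$. When player $i$ inserts an avatar of (tiny) stake $s$ into such a pool, the crucial observation is that the Shapley value of that avatar is \emph{at most the reward per unit of stake in that pool, times $s$}, because an avatar of stake $s < a_i < h$ is pivotal in a pool only within an interval of length $s$ of the random arrival coordinate, whereas a small player sitting in that pool is pivotal within an interval of the same length scaled by its own infinitesimal stake — more precisely, adding a mass $s$ of fresh stake to a winning pool cannot extract more total reward than the reward-per-unit-stake of the pool times $s$, since the pool's reward stays fixed at $1$ and is merely redistributed. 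I would formalize this as a lemma: \emph{for any winning pool $C$ in $\Pi$ and any $s\le h$, the Shapley value of an added avatar of stake $s$ placed in $C$ is at most $s\cdot r(C)$, where $r(C)$ is the per-unit reward the small (or marginal) stakeholders already receive in $C$.}

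Granting that lemma, the argument is short. The total payoff player $i$ obtains from a Sybil strategy with avatars in pools $C_1,\dots,C_t$ is at most $\sum_{j=1}^t s_{i_j}\, r(C_j) \le \big(\max_j r(C_j)\big)\sum_j s_{i_j} = a_i \cdot \max_j r(C_j)$. Now Constraint~\eqref{eq:non-atomic} of Lemma~\ref{lem:eq-conditions-oceanic} says that the per-unit reward in every pool of type (a) equals $1/l$, which is exactly the per-unit reward in every pool of type (b); hence $r(C)=1/l$ uniformly over all pools, and the Sybil payoff is at most $a_i/l$. But Constraint~\eqref{eq:move-to-small} guarantees that $i$'s honest payoff $(k_i+a_i-h)/k_i$ is at least $a_i/l$. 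Therefore no Sybil strategy strictly improves on the honest strategy, which is what Sybil-proofness demands. I would also remark that the case $t=1$ (a single avatar — i.e.\ just moving to another pool) is already covered by the equilibrium conditions themselves, so the content is genuinely in the $t\ge 2$ case, where the linearity-type bound above does the work.

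The main obstacle is establishing the per-unit-stake upper bound for an added avatar cleanly in the oceanic setting, in particular when the target pool $C$ already contains its large player. One has to be careful: after inserting the avatar, the Shapley computation is over a pool with \emph{two} atomic entities (the incumbent large player and $i$'s avatar), so one should invoke the two-large-player formula of Lemma~\ref{lemma:nonatomic-rewards} (or a limiting version of it as $s$ stays below $h$) and check that the avatar's share, divided by $s$, does not exceed $1/l$. The comparison reduces to a polynomial inequality in $a_j$, $k_j$ and $s$, where substituting $a_j = h - k_j^2/l$ from \eqref{eq:non-atomic} should collapse it to something manifestly nonnegative, in the same spirit as Claim~\ref{cl:f(ki,kj)}. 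A secondary subtlety is that a Sybil avatar of stake $s$ need not be infinitesimal (it can be a constant fraction of $a_i<h$), so the bound must hold for all $s\in(0,h)$, not just in the $s\to 0$ limit; handling the full range is where an exhaustive case split (on whether $s$ pushes various thresholds past $h$) may be needed, but each case is a one-variable calculus check. Once the avatar bound is in hand, the aggregation step and the appeal to the equilibrium constraints are routine.
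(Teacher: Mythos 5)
Your argument has a genuine gap: the key lemma --- that an avatar of stake $s$ inserted into a winning pool $C$ earns Shapley value at most $s\cdot r(C)$, where $r(C)$ is the per-unit reward of the small players already in $C$ --- is false, and the heuristic behind it (``the pool's reward stays fixed at $1$ and is merely redistributed'') does not justify a per-unit-rate cap, because the Shapley value pays atomic stake a pivotality premium over the ocean's rate. Concretely, take $h=1$, $l=4/3$ and a target pool satisfying \eqref{eq:non-atomic} with $k_j=1$, hence $a_j=1/4$ and $r(C)=(h-a_j)/k_j^2=3/4=1/l$. An avatar of stake $s=0.9$ (feasible for a deviator with $a_i$ close to $1$) receives, by the two-large-player formula of Lemma~\ref{lemma:nonatomic-rewards},
\[
\frac{(3/4)^2-\max(0,\,1-0.9-1/4)^2+\max(0,\,0.9-1+1)^2}{2}=\frac{0.5625+0.81}{2}=0.686>0.675=s\cdot r(C),
\]
so the bound fails exactly in the regime ($s>h-a_j$ and $s>h-k_j$) that you deferred to a ``one-variable calculus check.'' The same cap already fails for the trivial Sybil strategy of keeping all stake at home: the honest reward $(k_i+a_i-h)/k_i$ is in general \emph{strictly} larger than $a_i/l$ (Constraint~\eqref{eq:move-to-small} is typically strict; cf.\ the Fact after Theorem~\ref{thm:non-atomic} that large players beat their proportional share), which directly contradicts your aggregate claim that every Sybil payoff is at most $a_i/l$. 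Since \eqref{eq:move-to-small} can also be tight (at $a_i=2/3$), there is no slack in the chain ``Sybil $\le a_i/l\le$ honest'' to absorb the error.

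The repair is to change the benchmark from the \emph{target} pool's oceanic rate $1/l$ to the \emph{deviator's own} pool's marginal rate $1/k_i$, which dominates it since $k_i\le h\le l$. This is what the paper does: it writes the honest reward additively as $\frac{k_i-h+a_i}{k_i}=\frac{k_i-h+s_{i_1}}{k_i}+\sum_{t\ge 2}\frac{s_{i_t}}{k_i}$, notes that an avatar in a purely oceanic pool earns $s_{i_t}/l\le s_{i_t}/k_i$, and proves via the two-large-player formula (substituting $k_j=\sqrt{l(h-a_j)}$, using monotonicity in $a_j$, and reducing to the extreme case $k_i=k_j=1$) that an avatar in another large player's pool also earns at most $s_{i_t}/k_i$. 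Your aggregation step and appeal to the equilibrium constraints have the right shape, but they must be run against $1/k_i$ together with the exact decomposition of the home-pool reward, not against $1/l$.
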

\begin{proof}
    Consider an equilibrium formation satisfying the conditions of Lemma \ref{lem:eq-conditions-oceanic}, with $d$ pools $C_1,\dots, C_d$. Each pool contains either an atomic player $j$ with stake  $a_j$ with a mass of $k_j$ small players (such that $a_j+k_j\geq h)$, or a mass of $l>h$ small players.
     
     Suppose that an atomic player $i$ with stake $a_i$ decides to split her stake into $m$ different portions, $s_{i_1},s_{i_2},...,s_{i_m}$ and commit $s_{i_t}$ to pool $C_t$, where $\sum_{t\in [m]} s_{i_t} = a_i$.
     We distinguish the following cases.

  \noindent {\bf Case 1:} $C_t$ is the pool that $i$ belonged to initially. Then by Lemma \ref{lemma:nonatomic-rewards}, her reward is $\frac{k_i-h+s_{i_t}}{k_i}$.

  \noindent {\bf Case 2:} $C_t$ is a pool, with a mass of $l\geq h$ small players. By contributing $s_{i_t}$ to such a pool, Lemma \ref{lemma:nonatomic-rewards} shows that player $i$ receives a reward equal to $\frac{s_{i_t}}{l}$. This is at most $\frac{s_{i_t}}{k_i}$, due to the fact that $k_i<h=1\leq l$.

    \noindent {\bf Case 3:} $C_t$ is a pool with another atomic player $j$ with stake $a_j$ and a mass of small players $k_j$. 
    For ease of notation in this case, let $s_{i_t} = b$ and without loss of generality, let $h=1$. We will establish that the reward player $i$ gets is at most $\frac{b}{k_i}$. 
    After the Sybil attack of player $i$, $C_t$ contains player $j$ with stake $a_j$, player $i$ with stake $b$, and a mass of size $k_j$ of small players. We need to discriminate between some subcases here, since the formula of the reward agent $i$ receives varies. The reward agent $i$ receives, by Lemma \ref{lemma:nonatomic-rewards} is $\frac{(1-a_j)^2 - \max(0, 1-b-a_j)^2 + \max(0, b-1+k_j)^2}{2{k_j}^2}$. 

    Since we are in an equilibrium formation, from the conditions in Lemma \ref{lem:eq-conditions-oceanic} it holds that $k_i=\sqrt{l\cdot (1-a_i)}$ and $k_j=\sqrt{l\cdot (1-a_j)}$. Also, since $k_i, k_j\leq h$, it holds that $a_i,a_j\in [1-\frac{1}{l},1)$. Note also that the function $\frac{b}{k_i}=\frac{b}{\sqrt{l\cdot (1-a_i)}}$, is increasing with respect to $a_i$. At the same time, for the reward function of player $i$ within pool $C_t$, we have the following property, which can be easily verified:
    \begin{claim}
       The function $\frac{(1-a_j)^2 - \max(0, 1-b-a_j)^2 + \max(0, b-1+k_j)^2}{2{k_j}^2}$, is decreasing with respect to $a_j$, for $a_j\in[0,1)$. 
    \end{claim}
    \noindent So in order to show that $\frac{b}{k_i}\geq\frac{(1-a_j)^2 - \max(0, 1-b-a_j)^2 + \max(0, b-1+k_j)^2}{2{k_j}^2} $, it suffices to show that this holds for the lowest values that $a_i$ and $a_j$ can take, which is $1-\frac{1}{l}$.
    
    When $a_i=a_j=1-\frac{1}{l}$, it holds that $k_i=k_j=1$. Given the form of the right hand side, we consider first the subcase that $1-b-a_j<0\Rightarrow b>1-a_j=1-(1-\frac{1}{l})=\frac{1}{l}$. Also note that $b-1+k_j\geq 0$, since $k_j=1$. In this scenario, the Sybil attacker gets reward from $C_t$ equal to $\frac{(1-a_j)^2 + (b-1+k_j)^2}{2{k_j}^2}$. The desired inequality is equivalent to showing that $2b\geq (1-(1-\frac{1}{l}))^2+b^2=\frac{1}{l^2}+b^2$, which holds due to the fact that $b>\frac{1}{l}\Rightarrow b>\frac{1}{l^2}$, for any $l\geq 1$ and also $b\geq b^2$, since $b\leq a_i<1$. 

    Next, we move to the subcase where $1-b-a_j\geq 0 \Rightarrow b\leq 1-a_j=\frac{1}{l}$. Again, $k_i=k_j=1$, so the desired inequality becomes 
    $2b\geq (1-(1-\frac{1}{l}))^2-(\frac{1}{l}-b)^2+b^2=\frac{2b}{l}\Rightarrow 2b\geq \frac{2b}{l}$, which holds for any $l\geq 1$.

    Putting everything together, the initial reward player $i$ receives is $\frac{k_i-h+a_i}{k_i}$. Let $C_1$ be the pool where $i$ belonged to initially, before the Sybil attack.
    We can then rewrite this as follows
    $$\frac{k_i-h+a_i}{k_i} =\frac{k_i-h+(\sum_{t=1}^m s_{i_t})}{k_i}=\frac{k_i-h+s_{i_1}}{k_i}+\sum_{t=2}^m \frac{s_{i_t}}{k_i}$$
    But now we can see that each term in the summation is at least as big as the reward received by each pool $C_t$ analyzed in Cases 2 and 3. Also the leftmost term is at least as good as the reward received at Case 1. Hence player $i$ earns more rewards by staking all her stake in her original pool.

\end{proof}

Next, we analyze the proportional scheme. In contrast to the Shapley value, and to our surprise, the proportional scheme has a disadvantage here. We show that the optimal equilibrium formation identified in Theorem \ref{thm:PoS_proportional} is not always Sybil-proof. %

\begin{theorem}
\label{thm:Sybil_proportional}
    There exist instances where the proportional reward sharing scheme is not Sybil-proof w.r.t. the optimal equilibrium of Theorem \ref{thm:PoS_proportional}. 
\end{theorem}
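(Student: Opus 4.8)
The plan is to prove the statement by exhibiting a single explicit instance, together with a profitable Sybil deviation from the optimal (leximin) equilibrium constructed in the proof of Theorem~\ref{thm:PoS_proportional}. The underlying intuition is that under proportional sharing the reward rate per unit of stake inside a winning pool of total mass $M$ is exactly $1/M$; hence a player who is forced into a pool whose mass strictly exceeds $h$ is earning a suboptimal rate, and she can gain by shifting a small portion of her stake into a lighter pool --- as long as, after the split, \emph{both} pools still have mass at least $h$ (so both remain winning) and the lighter pool is not diluted past the mass of her original pool. The whole difficulty is therefore to engineer an instance in which the leximin-optimal partition is \emph{forced} to contain such a heavy pool; this is exactly what fails when there is enough small (or non-atomic) mass to equalize the pools, so the instance must make the small stake deliberately scarce.

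A concrete choice in the purely atomic model: threshold $h=5$, three large players of stake $3$, and two unit players of stake $1$ (note $3<h$ and $1<h$, so Remark~\ref{rem:stake} is respected). The total stake is $11$, so at most two winning pools can form, and since $11$ splits into two integers each $\ge 5$ only as $5+6$, every partition attaining $OPT=2$ has mass profile $\{5,6\}$, with the mass-$6$ pool consisting of two large players and the mass-$5$ pool of one large player together with the two unit players. This is, up to relabeling, the unique leximin-optimal partition, and a short check shows it is a Nash equilibrium under proportional sharing: the large player sitting with the two units earns $3/5$; each large player in the heavy pool earns $3/6=1/2$ and only loses by moving (to the mass-$5$ pool it would become $3/8$, to a new pool $0$); and the unit players earn $1/5$ and likewise lose by moving.

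Now let $P$ be a large player in the mass-$6$ pool, currently receiving $3/6=1/2$. Let $P$ split her stake as $2+1$, leaving an avatar of size $2$ in her original pool --- whose mass becomes $2+3=5\ge h$, still winning --- and placing an avatar of size $1$ in the other pool --- whose mass becomes $5+1=6\ge h$, still winning. Her total reward is then $2/5+1/6=17/30>1/2$, so the deviation is strictly profitable, which proves that proportional sharing is not Sybil-proof with respect to the optimal equilibrium of Theorem~\ref{thm:PoS_proportional}. An entirely analogous oceanic instance works: $h=1$, three large players of stake $0.9$, and a non-atomic mass $0.2$; there the forced leximin-optimal partition clumps two large players into a pool of mass $1.8$, and such a player can move $0.3$ units of stake into the mass-$1.1$ pool, raising her reward from $1/2$ to $2/5+3/14>1/2$, with both pools remaining winning.

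The arithmetic of the deviation is trivial; the part that needs care is the instance design and its verification. One must pick parameters so that (i) the leximin-optimal partition genuinely forces a pool of mass strictly above $h$ consisting only of large players, (ii) $OPT$ equals the number of pools in that partition (so it is indeed the benchmark of Theorem~\ref{thm:PoS_proportional}), and (iii) after the attacker removes stake, her original pool still has mass $\ge h$. Checking leximin-optimality amounts to enumerating the $OPT$-achieving partitions and confirming none is more balanced; for the tiny instances above this is a short finite case check, which is the only mildly delicate step.
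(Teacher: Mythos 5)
Your proposal is correct and takes essentially the same approach as the paper: both exhibit a concrete instance whose leximin-optimal equilibrium is forced to contain a pool of mass strictly above $h$ holding a large player, who then profits by shifting a small portion of stake into the lighter pool while keeping both pools winning (the paper uses three large players of stake $h/2+1$ plus $h/2-1$ unit players; your $h=5$, $(3,3,3,1,1)$ instance and your oceanic variant are instances of the same mechanism, and your arithmetic checks out). The only difference is that the paper additionally observes that the phenomenon arises generically whenever the total stake is not a multiple of $h$, but since the theorem only asserts existence, your single verified instance suffices.
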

\begin{proof}
    We show first a simple example when there is no ocean of small players. Consider an instance with 3 large players, who all have stake $a=h/2+1$, and with $h/2-1$ additional players, with stake equal to $1$. Obviously $OPT(G) = 2$, where the optimal formation is to have 2 pools $C_1, C_2$, with $C_1$ containing two large players together and $C_2$ containing the remaining players. This is an equilibrium, as follows by Theorem \ref{thm:PoS_proportional}, and the payoff for each player in $C_1$ is $1/2$. Suppose now that one of the large players in $C_1$ decides to allocate 1 unit of stake to $C_2$ and keep the remaining stake in $C_1$. Her total payoff from this Sybil attack will be equal to $\frac{h/2}{h+1} + \frac{1}{h+1} > 1/2$.
    
    Even when we have a large mass of non-atomic players, one can still have analogous constructions. The reason is that when the total stake of the game is not a multiple of $h$, the scheme then becomes vulnerable to Sybil attacks.
    To illustrate this, consider an instance with $n$ atomic players and a mass of small players, such that the total stake is not a multiple of $h$. This means that in the optimal equilibrium formation of Theorem \ref{thm:PoS_proportional}, the pool of maximum stake, say $C_1$ will have total stake $\ell_1> h$. Suppose that $C_2$ is the pool of minimum stake, equal to $\ell_2$, with $h\leq \ell_2 \leq \ell_1$.  
    Let $i$ be an atomic player in $C_1$ with stake $a_i$. There exists such a player, since by the proof of Theorem \ref{thm:PoS_proportional}, we know that when $\ell_1> \ell_2$, then $C_1$ has to contain atomic players (if $\ell_1 = \ell_2$, then we pick as $C_1$ any pool that contains an atomic player). We argue that agent $i$ can gain more total reward by splitting her stake into two portions, specifically $a_i-x$ which she will commit to her initial pool $C_1$ and $x$ (s.t. $\ell_1-x\geq h$, which is feasible if $x$ is small enough), which she will commit to pool $C_2$. By this splitting, agent $i$ will receive a total reward of $\frac{a_i-x}{\ell_1-x}+\frac{x}{\ell_2+x}$, based on the proportional reward sharing scheme. Suppose now that $\ell_1 = \ell_2 + \delta$, for some $\delta\geq 0$. By simple calculations, we can see that the reward of $i$ under this Sybil attack is greater than her initial reward of  $\frac{a_i}{\ell_1}$, as long as 
    $$x\leq \frac{\ell_2(a_i + \delta) + \delta^2}{2\ell_2 + 2\delta -a_i}$$. 
\end{proof}

The next definition captures the stronger property that one could envision, i.e., Sybil resilience no matter what is the initial configuration at hand (equilibrium or not). %

\begin{definition}
\label{def:sybil}
    A reward sharing scheme is Sybil-proof if it is Sybil-proof w.r.t. any partition $\Pi$.
\end{definition}

This is obviously a quite demanding requirement, especially since some partitions may be very unlikely to form in practice. Theorem \ref{thm:Sybil_proportional} shows already that proportional sharing does not satisfy it. For the Shapley scheme we can also construct non-equilibrium profiles where Definition \ref{def:sybil} fails. To see this consider a partition $(C_1, C_2)$, where $C_1$ contains 2 atomic players with stake $a>h/2$ and $C_2$ contains non-atomic players of mass $l=h$. Then it is easy to verify, using Lemma \ref{lemma:nonatomic-rewards}, that a large player from $C_1$ can keep a stake of $h/2$ in the first pool and contribute the remaining to $C_2$ resulting in a higher payoff for her. 

Hence, when there is no significant cost of carrying out a Sybil attack, we cannot expect
to avoid such strategies from any configuration $\Pi$, except with trivial schemes (e.g., zero payments). In fact such impossibilities have already been pointed out under a quite different model in \cite{kwon2019impossibility}. We exhibit below that for our model, this holds beyond the Shapley and the proportional scheme, for a wide class of reward schemes.

\begin{observation}
\label{thm:non-Sybil}
    Consider a reward scheme where the payment $p_i(a_i, S)$ of a player with stake $a_i$ in a pool $S$, depends on $a_i$ and on the stake distribution of $S_{-i}$ but in an anonymous way. Suppose further that the payment is concave  w.r.t. $a_i$, even in a subset of the entire range of $(0, h)$ (with at least one value of $a_i$ where concavity holds with strict inequality). Then the scheme is not Sybil-proof.
\end{observation}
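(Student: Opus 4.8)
The plan is to exploit concavity via Jensen's inequality applied to a suitable splitting of the attacking player's stake. Fix a player $i$ with stake $a_i$ and suppose there is a subinterval $[c, d] \subseteq (0, h)$ on which the payment function $p_i(\cdot, S)$ is concave, with strict concavity somewhere in $[c,d]$; by continuity and the definition of concavity this means there exist points $y_1, y_2 \in [c,d]$ and a weight $\lambda \in (0,1)$ with $p_i(\lambda y_1 + (1-\lambda) y_2, S) < \lambda\, p_i(y_1, S) + (1-\lambda)\, p_i(y_2, S)$ for an appropriate anonymous stake distribution of $S_{-i}$. The idea is to build a partition $\Pi$ in which player $i$'s pool $S$ has total stake $\lambda y_1 + (1-\lambda) y_2$ contributed by $i$ on top of a fixed residual distribution, and then let $i$ deviate by splitting into two avatars of stakes proportional to $y_1$ and $y_2$ placed into two copies of pools with the same residual distribution $S_{-i}$.

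The steps in order are as follows. First, I would pick rational weights and choose the residual pool composition $S_{-i}$ so that, together with any contribution of $i$ in the relevant range, the pool stays winning (i.e. total stake $\ge h$); this is possible since $a_i < h$ and we may pad with non-atomic mass (or with many unit players in the atomic case) to guarantee all pools involved exceed the threshold regardless of how $i$'s share is split. Second, I would set up the ambient configuration $\Pi$ to contain (at least) two pools with identical residual distribution equal to $S_{-i}$ — by anonymity of the payment rule, $i$'s payment in each depends only on her own contribution and on that common distribution. Third, in $\Pi$, player $i$ sits in one such pool contributing her full stake $a_i$, which we arrange to equal $\lambda y_1 + (1-\lambda) y_2$ by choosing $a_i$ accordingly (or, if we want $a_i$ fixed in advance, by rescaling $y_1, y_2$ within the concavity region). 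Her payoff is $p_i(\lambda y_1 + (1-\lambda) y_2, S)$. Fourth, the Sybil deviation: split $a_i$ into $s_{i_1} = \lambda y_1 \cdot (a_i / (\lambda y_1 + (1-\lambda) y_2))$ and... — more cleanly, just take $a_i = \lambda y_1 + (1-\lambda) y_2$ from the start so the two avatars have stakes exactly $\lambda y_1$ and $(1-\lambda) y_2$; placing them in the two twin pools yields total payoff $p_i(\lambda y_1, S) + p_i((1-\lambda) y_2, S)$. Using homogeneity-free reasoning is not available, so instead I would invoke concavity directly: since $p_i(0, \cdot) \ge 0$ and $p_i$ is concave on $[0, \cdot]$ through the relevant points, we have $p_i(\lambda y_1, S) \ge \lambda\, p_i(y_1', S)$-type bounds — the clean route is to choose the split so that the two avatar stakes are exactly $y_1$ and $y_2$ with $y_1 + y_2 = a_i$ and $a_i$ lying strictly between them, then strict concavity of the chord gives $p_i(y_1, S) + p_i(y_2, S) > p_i(a_i, S) + p_i(0, S) \ge p_i(a_i, S)$, using $p_i(0, S) \ge 0$ (a zero contributor receives a nonnegative share, or zero). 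Fifth, conclude that the deviation is strictly profitable, so the scheme is not Sybil-proof w.r.t. $\Pi$, hence not Sybil-proof in the sense of Definition \ref{def:sybil}.

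The main obstacle is bookkeeping around the constraint that every pool must remain \emph{winning} throughout: both the twin pools that receive the avatars, and $i$'s original pool, need total stake $\ge h$, and the residual distribution $S_{-i}$ must be chosen with enough extra mass so this holds for the contribution values $0$, $y_1$, $y_2$, and $a_i$ simultaneously. This is purely a matter of padding with a sufficient mass of non-atomic players (available for free in the oceanic model, or with enough unit-stake players in the atomic setting) and does not affect anonymity. A secondary subtlety is translating "concave on a subinterval with one point of strict inequality" into the existence of a genuine strict chord gap between two feasible split values $y_1, y_2$ summing to some achievable $a_i$; this follows because strict concavity at an interior point forces the function to lie strictly above the chord joining nearby points on either side, and we have freedom to rescale the whole construction (and choose which atomic player plays the role of $i$, or introduce one with the right stake) so that $a_i$ is exactly that midpoint-type value. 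Once these are handled the profitability inequality is immediate from Jensen / the chord characterization of concavity together with $p_i(0,S)\ge 0$.
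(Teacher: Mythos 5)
Your proposal is correct and follows essentially the same route as the paper's proof: use anonymity to set up twin pools whose residual composition matches $S_{-i}$, split the stake across them, and invoke the chord characterization of (strict) concavity together with $p_i(0,\cdot)\ge 0$ to get a strictly profitable deviation. The paper uses the specific half-split $a_i/2$ into each twin pool, while you allow a general split $y_1+y_2=a_i$ and are more explicit about keeping all pools winning, but these are refinements of the same argument rather than a different one.
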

\begin{proof}
     Consider a partition where player $i$ ends up in a pool $S$, and where two other pools have formed $S_1, S_2$, such that $m(S_1) = m(S_2) = m(S\setminus\{i\})$ and the stake distribution of $S_1$ and $S_2$ is the same with the distribution of $S\setminus \{i\}$ (meaning that the composition in terms of the stakes of the atomic players and the volume of the non-atomic players is the same). Then suppose that player $i$ splits her stake $a_i$ into 2 portions of $a_i/2$ and commits it to $S_1$ and $S_2$. Then by the assumption of the theorem, regarding strict concavity, we would have that $p_i(a_i, S) < p_i(a_i/2, S_1) + p_i(a_i/2, S_2)$.

\end{proof}

\subsection{Computation of Sybil strategies}

We conclude this section by commenting on the computational problem of actually finding a good Sybil strategy.
For this, fix a player $i$ with stake $a_i$, and an arbitrary strategy profile of the other players $\vec{x}_{-i}$.
Imagine that player $i$ enters the game at a time where the strategies of the other players induce a partition $\Pi_{-i} = (C_1,\dots, C_m)$, where each $C_j$ is a winning pool (we ignore non-winning pools, if there are any, under the current partition).
If player $i$ wishes to maximize her utility, the problem that she has to solve is to compute how to split her stake and commit it to different pools in order to receive the maximum  possible amount of reward. More formally, let $s_{i_1},s_{i_2},...,s_{i_m}$, be the variables representing the portions of stake of $a_i$, that agent $i$ will assign to each of the $m$ pools. 
Then player $i$ faces the following optimization problem:

\setlength{\belowdisplayskip}{1pt} \setlength{\belowdisplayshortskip}{1pt}
\setlength{\abovedisplayskip}{1pt} \setlength{\abovedisplayshortskip}{1pt}
\begin{equation*}
\begin{aligned}
& {\text{maximize}}
& & \sum\limits_{j=1}^m p_i(s_{i_j}, C_j\cup \{i\}) \\
& \text{subject to}
& & \sum\limits_{j=1}^m s_{i_j}\leq a_i \\
& & &   s_{i_j} \geq 0 ~~\forall j\in [m]%
\end{aligned}
\end{equation*}

The constraints above are linear but this is not always an easy problem to solve, and its complexity is dependent on the reward scheme, since this affects the form of the objective function. 
For simpler schemes, such as the proportional one, we can have a polynomial algorithm.
In particular, under the proportional scheme, the objective function is a sum of concave functions (and hence a concave function itself), equal to $\sum_{j=1}^m \frac{s_{i_j}}{m(C_j) + s_{i_j}}$. This means that the optimization problem is equivalent to minimizing a convex function subject to linear constraints. In fact, for the proportional scheme, we can have a much simpler, greedy algorithm, based on a water-filling argument (see Appendix \ref{proof-water-filling}). Hence, we can conclude with the following.
\begin{theorem}
\label{thm:water-filling}
    Under the proportional scheme, 
    the optimal solution of the above problem, can be computed in polynomial time.
\end{theorem}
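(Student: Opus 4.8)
The plan is to prove this in two stages: first a one-line argument that already suffices, then the explicit water-filling algorithm foreshadowed in the text. Write $c_j := m(C_j) > 0$ and $g_j(s) := s/(c_j + s)$, so the objective is $\sum_{j=1}^m g_j(s_{i_j})$. Since $g_j''(s) = -2c_j/(c_j+s)^3 < 0$, each $g_j$ is strictly concave and strictly increasing on $[0,a_i]$, hence the objective is concave and the feasible set is a polytope of polynomial size. Thus the problem is a concave maximization subject to linear constraints, i.e. a polynomial-size convex program, which can be solved to any desired precision in polynomial time by the ellipsoid method or an interior-point method; and because the optimum is attained at an algebraic point of bounded bit-complexity one can recover it exactly. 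This proves the theorem, but the combinatorial algorithm below is cleaner and near-linear.

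For the water-filling algorithm I would use the KKT conditions. Because each $g_j$ is strictly increasing, any optimal solution spends the whole budget, so $\sum_j s_{i_j} = a_i$; introducing a multiplier $\lambda \ge 0$ for this equality, optimality is characterized by $g_j'(s_{i_j}) = c_j/(c_j+s_{i_j})^2 \le \lambda$ for all $j$, with equality whenever $s_{i_j} > 0$. Concavity of the objective makes these conditions sufficient as well as necessary. Solving $g_j'(s) = \lambda$ gives $s_j(\lambda) = \max\{0,\ \sqrt{c_j/\lambda} - c_j\}$, a continuous function of $\lambda$ that is strictly decreasing on the range where it is positive; summing, $\lambda \mapsto \sum_j s_j(\lambda)$ is continuous and strictly decreasing from $+\infty$ (as $\lambda \to 0^+$) down to $0$ (at $\lambda = 1/\min_j c_j$), so there is a unique $\lambda^\star$ with $\sum_j s_j(\lambda^\star) = a_i$, and $(s_j(\lambda^\star))_j$ is the optimal split.

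It remains to compute $\lambda^\star$ efficiently. The structural observation is that pool $j$ gets positive stake iff $g_j'(0) = 1/c_j > \lambda^\star$, i.e. iff $c_j < 1/\lambda^\star$; so if the pools are sorted by increasing $c_j$, the active set is a prefix $\{1,\dots,r\}$. For each of the $m$ candidate prefixes, the equation $\sum_{j\le r}(\sqrt{c_j/\lambda} - c_j) = a_i$ has the closed-form solution $1/\sqrt{\lambda} = \frac{a_i + \sum_{j\le r} c_j}{\sum_{j\le r}\sqrt{c_j}}$, and we accept this $\lambda$ iff it is consistent with the prefix being exactly the active set, i.e. $c_r < 1/\lambda \le c_{r+1}$ (with the obvious convention when $r=m$). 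Exactly one prefix passes, yielding $\lambda^\star$ and hence the optimal allocation; sorting costs $O(m\log m)$ and the scan $O(m)$, giving polynomial (indeed near-linear) time.

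There is no serious obstacle in this argument; the only points needing care are (i) stating explicitly that the KKT conditions are \emph{sufficient} here, which follows from concavity of the objective, and (ii) the bookkeeping for ties among the $c_j$ and for the boundary case $r = m$ in which the entire budget is absorbed by the active pools regardless of the threshold — neither affects correctness or the running time.
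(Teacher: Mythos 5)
Your proof is correct. The first stage (separable concave objective over a polytope, hence a polynomial-size convex program) coincides with the paper's opening observation, but your second stage takes a genuinely different — and in fact more careful — route than the paper's combinatorial algorithm. The paper's greedy pours stake into the pool of currently smallest \emph{total} stake until that total equalizes with the next smallest (it explicitly commits $m(C_k)-m(C_p)$ units) and thereafter fills the tied pools at equal rates. Your KKT analysis shows that the correct stationarity condition equalizes the \emph{marginal} rewards $c_j/(c_j+s_{i_j})^2$, which corresponds to post-allocation totals $c_j+s_{i_j}=\sqrt{c_j/\lambda^\star}$, i.e.\ proportional to $\sqrt{c_j}$ rather than equal across active pools. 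Because the derivative depends on $c_j$ and not merely on the current level $c_j+s_{i_j}$, a profile with equalized totals but unequal $c_j$ admits a strictly improving transfer of an infinitesimal amount of stake from a pool with small $c_j$ to one with large $c_j$ (e.g.\ with $c_1=1$, $c_2=4$, $a_i=3$, the level-equalizing greedy yields value $3/4$ while your allocation $(5/3,4/3)$ yields $7/8$). So the level-equalizing rule, read literally, does not land on the optimum, whereas your sorted-prefix search for $\lambda^\star$ does, with a closed form for $1/\sqrt{\lambda}$ on each candidate active set and the same $O(m\log m)$ cost. The only caveat worth stating is that your exact optimizer involves the irrational quantities $\sqrt{c_j}$, so ``computed exactly'' should be read in a real-arithmetic model or up to any desired precision — a caveat the convex-programming route (which alone already proves the theorem as stated) carries as well.
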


\noindent{\bf Proportional vs Shapley w.r.t. complexity of Sybil attacks.} The above algorithm shows that for the proportional scheme, it is relatively easy to find a profitable Sybil attack (whenever one exists). 
On the other hand, 
this is not known to be applicable for the Shapley scheme.
For the restriction to integer stakes, solving the aforementioned optimization problem is PP-hard \cite{rey2014false}. Without integrality constraints, we are not aware of any efficient algorithm. Even if we have access to an approximation algorithm for the Shapley value of a pool, it does not seem obvious how to find an approximately optimal Sybil strategy. 
Hence the overall computational difficulty of the Shapley value could serve as a deterrent for potential manipulators.

\section{Alternative variations of proportional schemes}
\label{sec:alt}

We conclude our study with some further ideas that have been proposed in the literature. 
In \cite{chen2019axiomatic}, some alternative schemes are considered, based on variations of proportional sharing. This involves functions that are superadditive in a player's stake. For illustration, we examine the proportional-to-squares reward sharing scheme.
Qualitatively, similar results hold if we use other functions of the stake in the form $a_i^\beta$ with $\beta>1$.

\begin{definition}
    For a player $i$, with stake $a_i$ belonging to a pool $C$, the reward by the proportional-to-squares reward sharing scheme is $p_i(a_i,C)=\frac{a_i^2}{\sum\limits_{j\in C} a_j^2} \cdot \rho(C)$.
\end{definition}

On the opposite direction, we could have a subadditive function of each player's stake.

\begin{definition}
    For a player $i$, with stake $a_i$ belonging to a pool $C$, the reward she receives by the proportional-to-square-roots 
    scheme  is $p_i(a_i, C)=\frac{\sqrt{a_i}}{\sum\limits_{j\in C}\sqrt{a_j}} \cdot \rho(C)$.
\end{definition}

We note that we have presented the definition above only for the purely atomic model, as taking the square or the square root of infinitesimally small quantities leads to uninteresting or trivialized payment rules in the limit. E.g., for a pool with one atomic player and a mass of non-atomic players, the proportional to squares rule would allocate the entire reward to the atomic player.

Even though such schemes can have their merits, we argue here that in our context they also have their drawbacks related to the metrics of the previous sections.
We first exhibit that the proportional-to-squares scheme can lead to very bad equilibria, attaining very high Price of Stability. 

\begin{theorem}
\label{thm:sq_lb}
   For any $h$, there exist instances where the proportional-to-squares scheme has Price of Stability at least $\frac{h-1}{2} = \Omega(h)$, and thus unbounded.
\end{theorem}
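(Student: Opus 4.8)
The plan is to construct an instance with a single large player and many small players, where the unique achievable equilibrium under proportional-to-squares sharing forces the large player into a pool that is far larger than $h$, while $OPT$ can still split everyone into pools of stake exactly $h$. Concretely, I would take one large player with stake $a$ close to $h$ (say $a = h-1$, or more generally $a$ a constant fraction of $h$), together with a large supply of unit-stake small players, chosen so that the total stake is divisible by $h$; this guarantees $OPT$ partitions the population into pools of stake exactly $h$, so $W(OPT)$ is as large as possible.

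The key step is to show that in \emph{any} Nash equilibrium, the large player must sit in a pool of stake $\Omega(h^2)$ — or rather, that no small-stake pool containing the large player can be stable. The mechanism is the superadditivity of squaring: if the large player with stake $a$ is in a pool with $k$ small players of stake $1$ each, her reward is $a^2/(a^2 + k)$, which is very close to $1$ already for moderate $k$, so she has little incentive to move. But the small players in that pool get only $1/(a^2+k)$ each, which is far below $1/t$ in a pure-small pool of size $t$ near $h$, since $a^2$ is of order $h^2$. So the small players flee unless the pure-small pools are themselves of size $\Omega(h^2)$; and then, to keep those pure-small pools winning and balanced, the equilibrium is driven toward pools of size $\Theta(h^2)$ everywhere. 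The arithmetic should be arranged so that the minimum stake of a pool at equilibrium is at least roughly $(h-1)\cdot h / 2$ or similar, giving $W(\Pi) \le \frac{\text{total stake}}{(h-1)h/2}$ against $OPT = \frac{\text{total stake}}{h}$, hence $PoS \ge \frac{h-1}{2}$.

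The main obstacle I expect is ruling out \emph{all} equilibria rather than just the natural $(k,l)$-type ones: I need to argue that no equilibrium partition can have small pools, including configurations where the large player is grouped with small players in an unbalanced way, or where several pure-small pools of differing sizes coexist. For the pure-small pools the argument is clean — equilibrium forces their sizes to differ by at most $1$ and each to be winning, and a deviation analysis (a small player comparing $1/t$ across pools, and comparing joining the large player's pool) pins down the size. For the large player's pool, I would show that if it has few small players then either the small players there want to leave (reward per unit too low) or, if the pure-small pools are also small, a small player from a pure-small pool wants to join the large player's pool only if that raises its size — and iterating this forces the size up. Once every equilibrium pool is shown to have stake $\Omega(h^2)$ while $OPT$ achieves $h$, the bound $\frac{h-1}{2}$ follows, and letting the number of small players grow makes the divisibility assumption and the "large population" caveat harmless. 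The one routine-but-fiddly piece is checking the exact threshold constant so that it comes out to exactly $(h-1)/2$; I would not grind through that here but would pick $a$ and the pool sizes to make the inequality tight.
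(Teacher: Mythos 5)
Your instance and key mechanism are exactly the paper's: one large player of stake $a=h-1$ plus unit-stake small players, and the observation that a small player sharing a pool with the large player earns only $1/((h-1)^2+k)$ under proportional-to-squares, so she defects unless every pure-small pool has at least $(h-1)^2$ members. One intermediate claim is wrong, though: it is not true that the large player must sit in a pool of stake $\Omega(h^2)$, nor that the minimum pool stake at equilibrium is roughly $(h-1)h/2$ --- in the paper's own equilibrium the large player's pool has stake exactly $h$ (the large player plus a single small player; the large player has no incentive to leave since her reward $a^2/(a^2+1)$ is already near $1$), and only the \emph{pure-small} pools are forced to be huge. So your final count $W(\Pi)\le \text{total stake}/((h-1)h/2)$ does not follow as written. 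It is easily repaired: at most one pool contains the large player, so $W(\Pi)\le 1+N_s/(h-1)^2$ in any equilibrium, which against $OPT\approx N_s/h$ gives a ratio tending to $(h-1)^2/h$ --- in fact stronger than $(h-1)/2$. The paper sidesteps this bookkeeping entirely by fixing the population at exactly $n=(h-1)^2+2$, so that any equilibrium can contain at most two winning pools while $OPT\ge h-1$; your "let the population grow" variant works too once the counting is corrected.
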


When it comes to the proportional-to-square-roots scheme, the picture is quite different. We can show that the Price of Stability is comparable to the Shapley scheme.

\begin{theorem}
\label{thm:PoS_square_roots}
Consider instances with $n$ players ($n=n_s+n_b$), so that $n_s \geq h^2 + n_b(\lceil h-\sqrt{a}+1 \rceil)$. Then, for the proportional-to-square-roots reward sharing scheme, we have that PoS = 2.
\end{theorem}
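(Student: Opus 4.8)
The plan is to mirror the structure of the atomic Shapley analysis (Theorems~\ref{thm:PoS} and~\ref{thm:pos-general}): we want a matching pair of bounds. For the lower bound $\mathrm{PoS}\geq 2$, I would reuse the idea behind Example~\ref{example-PoS} and Claim~\ref{cl:lower}, checking that under proportional-to-square-roots the same small instance (large player with stake $3$ among unit players, threshold $4$) still forces the grand coalition as the only equilibrium, so the bound carries over; the only thing to verify is that the deviation incentive for the large player persists when rewards are split in proportion to $\sqrt{a_i}$ rather than to the Shapley value, which is a direct computation. For the upper bound $\mathrm{PoS}\leq 2$, I would again construct a $(k,l)$-partition (Definition~\ref{def:k-l}), i.e.\ pools of Type A (one big player plus $k$ unit players), Type B ($l$ unit players) and Type C ($l-1$ unit players), and show that for a suitable choice of $k$ and $l$ both bounded by roughly $2h$, this partition is a Nash equilibrium whenever $n_s$ is large enough, namely $n_s\geq h^2 + n_b\lceil h-\sqrt{a}+1\rceil$; since every pool then has stake at most $2h$ and the optimum has pools of stake at least $h$, the bound follows.

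\textbf{Key steps for the upper bound.} First I would write down the closed-form payoffs under proportional-to-square-roots in the three pool types: in a Type A pool the big player gets $\sqrt{a}/(\sqrt{a}+k)$ and each unit player gets $1/(\sqrt{a}+k)$; in a Type B or C pool of $t$ unit players each gets $1/t$. Next, translate ``no profitable unilateral deviation'' into inequalities: (i) a unit player in a Type B/C pool should not want to join a Type A pool, and vice versa — this pins down a relation forcing $\sqrt{a}+k$ to be within one unit of $l$ (the analogue of Lemma~\ref{lem: 4.4} and of condition~(1) in Lemma~\ref{lem:NE_conditions}); (ii) the big player should not want to leave her Type A pool for a Type B/C pool nor for another Type A pool — this gives the analogue of condition~(2) and, when $n_b>1$, the extra ``no deviation to another Type A pool'' constraint exactly as in Section~\ref{subsec:shapley-more}. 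Then I would show that one can pick integers $k$ with $h-a\leq k$ (so Type A pools win) and $k+\sqrt{a}\leq 2h$, and $l$ with $h+1\leq l\leq 2h$, simultaneously satisfying all these inequalities — this is the combinatorial heart, analogous to Lemmas~\ref{lem:k,l values-part1}--\ref{lem:k,l values-part2}, and is where the threshold $n_s\geq h^2+n_b\lceil h-\sqrt{a}+1\rceil$ on the number of small players enters (we need enough unit players to actually realize $n_b$ Type A pools of size $\lceil h-\sqrt a+1\rceil$ small players each, plus at least $(l-1)^2$ leftover unit players to tile into Type B/C pools, mimicking the $n\geq l^2+k+1$ counting at the end of the proof of Theorem~\ref{thm:PoS}). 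Finally, since each constructed pool has total stake $\max(k+\sqrt a,\,l)\leq 2h$ while $OPT$ uses pools of stake $\geq h$, we get $\mathrm{PoS}\leq 2$, and combined with the lower bound, $\mathrm{PoS}=2$.

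\textbf{Main obstacle.} The delicate part is the simultaneous integer feasibility of $k$ and $l$ — i.e.\ showing the interval of $l$-values forced by the unit-player indifference condition overlaps the half-line forced by the big-player non-deviation condition, for every admissible pair $(a,h)$ with $a<h$. Unlike the Shapley case, here the relevant quantities involve $\sqrt{a}$, so ``integer'' arguments are trickier: $\sqrt a$ need not be rational, which is why the hypothesis is phrased with $\lceil h-\sqrt a+1\rceil$. I expect to handle this exactly as in Theorem~\ref{thm:PoS}: split into a case where a small $k\leq h-2$ works and a boundary case $k=\lceil h-\sqrt a\rceil$ (or $k=h-1$) where the interval for $l$ must be re-derived, and in each case produce an explicit witness $l$, then verify $l\leq 2h$ and $k+\sqrt a\leq 2h$. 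A secondary (but routine) obstacle is confirming the ``no deviation to another Type A pool'' inequality for the multi-big-player case; because all big players are identical (stake $a$), a big player moving into another Type A pool would face $\sqrt a/(2\sqrt a + k)$, which is clearly worse than $\sqrt a/(\sqrt a+k)$, so this constraint should be automatically satisfied and not bind — I would state this as a short lemma and move on. The rest is bookkeeping analogous to the already-established Shapley arguments.
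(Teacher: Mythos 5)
Your upper-bound plan is essentially the paper's own proof. The paper (Lemma~\ref{lem:NE_conditions_sqrt}) writes down exactly the three conditions you list --- big player not moving to a Type~C pool ($l-1\geq k$), small player in Type~A not moving out ($k\leq l-\sqrt a$), small player in Type~B not moving in ($k\geq l-\sqrt a-1$) --- and dismisses the big-to-big deviation for the same reason you give ($\sqrt a/(2\sqrt a+k)<\sqrt a/(\sqrt a+k)$). One reassurance: the ``combinatorial heart'' you worry about is much easier than in the Shapley case. Conditions (2) and (3) confine $k$ to the unit-length interval $[\,l-\sqrt a-1,\;l-\sqrt a\,]$, which always contains an integer, so no case split \`a la Lemmas~\ref{lem:k,l values-part1}--\ref{lem:k,l values-part2} is needed; the paper just takes $l\in\{h+1,h+2\}$ and $k=\lceil h-\sqrt a+1\rceil$ and checks $k+a\geq h$, $k+a\leq 2h$, $l\leq 2h$. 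Note also a small slip: the total stake of a Type~A pool is $k+a$, not $k+\sqrt a$ (only the reward shares involve $\sqrt a$); the bound $k+a\leq 2h$ still holds for the chosen $k$, but your expression $\max(k+\sqrt a,l)$ for the pool stake is wrong as written.

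The genuine problem is in your lower bound. You propose to reuse Example~\ref{example-PoS} and ``verify that the deviation incentive for the large player persists.'' It does not: in the pool $(3,1)$ with $h=4$ the large player gets $\sqrt 3/(\sqrt3+1)\approx 0.63$, while joining the four unit players would give her $\sqrt3/(\sqrt3+4)\approx 0.30$, so under proportional-to-square-roots she strongly prefers to stay --- the scheme over-rewards a large player in a small pool. The two-pool partition still fails to be an equilibrium, but for the opposite reason: a unit player in the pool $(1,1,1,1)$ earns $1/4$ and would earn $1/(\sqrt3+2)\approx 0.27$ by joining the large player's pool. This is precisely the mechanism the paper uses for its lower bound (one player of stake $a$ plus $2h-a$ unit players; the forced two-pool split has $k=h-a<l-\sqrt a-1$, violating condition~(3), so a small player defects into the Type~A pool and only the grand coalition survives). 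So your instance is salvageable, but the deviation you planned to exhibit goes the wrong way, and as written that step of the proof would fail; you need to switch to the small-player deviation.
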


However, the proportional-to-square-roots scheme is not Sybil-proof w.r.t. its equilibria, in contrast to the Shapley value.

\begin{theorem}
\label{thm:sq_Sybil}
There exist instances where the proportional-to-square-roots scheme is not Sybil-proof w.r.t. $(k, l)$-equilibria, as given in Defintion \ref{def:k-l}. 
\end{theorem}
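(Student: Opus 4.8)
The plan is to exploit the strict concavity of $x\mapsto\sqrt{x}$: under proportional-to-square-roots sharing, a player who splits off an infinitesimal mass $x$ into a fresh pool earns a reward there whose derivative in $x$ blows up as $x\to 0^{+}$, whereas the reward she forfeits in her original pool changes only to first order. So it will suffice to exhibit one concrete instance admitting a $(k,l)$-equilibrium in which some Type~A pool carries strictly more than $h$ stake, since then the large player in that pool can profitably shave off a tiny mass.

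First I would pin down a witness instance. A clean choice is $h=2$, one large player of stake $a=3/2$ (note $1<a<h$ and $k+a>h$ for $k=1$), together with sufficiently many unit-stake small players so that the hypothesis $n_s\ge h^2+n_b\lceil h-\sqrt{a}+1\rceil$ of Theorem~\ref{thm:PoS_square_roots} holds (here $n_s\ge 6$ is enough). By that theorem --- or, if one prefers a self-contained argument, by directly checking the handful of deviation conditions --- this instance admits a $(k,l)$-equilibrium $\Pi$ in the sense of Definition~\ref{def:k-l}, with $k=1$ and some $l\ge h+1$ (e.g.\ $l=3$): the large player sits in a Type~A pool with one small player, and the remaining small players are partitioned into Type~B and/or Type~C pools of sizes $l$ and $l-1$. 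I would record that in $\Pi$ the large player's payoff is $\tfrac{\sqrt{a}}{\sqrt{a}+1}$ (every small player has unit stake and $\rho=1$ on winning pools), and that, since $n_s$ is large, $\Pi$ contains at least one pool $C$ consisting only of small players, whose sum of square roots equals $l$.

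Next I would describe the Sybil deviation: the large player keeps mass $a-x$ in her Type~A pool and injects mass $x$ into $C$, for a small $x>0$. Because $k+a>h$, the Type~A pool still clears the threshold for all sufficiently small $x$, hence stays winning, and her total payoff becomes
\[
F(x)=\frac{\sqrt{a-x}}{\sqrt{a-x}+1}+\frac{\sqrt{x}}{\sqrt{x}+l},
\]
with $F(0)=\tfrac{\sqrt{a}}{\sqrt{a}+1}$ her equilibrium payoff. I would then note that the first summand is differentiable at $x=0$ with finite derivative $-\tfrac{1}{2\sqrt{a}(\sqrt{a}+1)^2}$, while the second summand has derivative $\tfrac{l}{2\sqrt{x}(\sqrt{x}+l)^2}\to+\infty$ as $x\to0^{+}$; hence $F$ is strictly increasing at $0$, so $F(x)>F(0)$ for all small enough $x>0$. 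To make the improvement fully explicit and sidestep any limiting argument, I would also plug in one concrete value such as $x=1/10$ and verify numerically that $F(1/10)>F(0)$. This shows the large player strictly gains from the Sybil strategy, so the scheme is not Sybil-proof with respect to $\Pi$.

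The main obstacle --- really the only step that is not immediate --- is verifying that the chosen instance genuinely admits a $(k,l)$-equilibrium under proportional-to-square-roots sharing: one must check that no small player wants to migrate between pools of different types and that the large player neither wants to relocate nor to open her own pool. This is exactly the bookkeeping that underlies Theorem~\ref{thm:PoS_square_roots}, so it can be cited or reproduced in a few lines; everything after it is the one-line concavity argument above. A minor but essential point is to choose $a$ with $k+a>h$ \emph{strictly}, so that removing an infinitesimal mass from the Type~A pool does not drop it below the threshold $h$.
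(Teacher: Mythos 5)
Your proof is correct, but it takes a genuinely different route from the paper's. The paper's deviation is combinatorial: in an instance with at least $a+1$ large players, a large player splits her stake $a$ into unit portions and scatters them across \emph{other Type~A pools}; since each unit contributes square-root weight $\sqrt{1}=1$, her total weight becomes $a$ rather than $\sqrt{a}$, and a short closed-form comparison ($\frac{a}{\sqrt{a}+h-\lceil\sqrt{a}\rceil} > \frac{\sqrt{a}}{\sqrt{a}+h-\lceil\sqrt{a}\rceil-1}$) finishes it. You instead exploit the same subadditivity at the infinitesimal scale: the derivative of $x\mapsto \frac{\sqrt{x}}{\sqrt{x}+l}$ blows up at $x=0^+$ while the loss in the home pool is only first order, so shaving a small fractional mass into a pure small-player pool is profitable. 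Your version needs only one large player and shows the vulnerability is generic for any $(k,l)$-equilibrium whose Type~A pool sits strictly above the threshold and which contains at least one small-player pool, which is a somewhat stronger qualitative conclusion; the paper's version keeps all stake portions integral, which sits more comfortably in the discrete atomic setting where this scheme is defined. Two small points to tidy up in your write-up: (i) Definition~\ref{def:sybil-attack} indeed places no integrality restriction on the split, so the fractional shave is legitimate, but you should say so explicitly since the square-root scheme is only defined for the atomic model; and (ii) your witness uses a non-integer large stake $a=3/2$, and while Lemma~\ref{lem:NE_conditions_sqrt} does not require $a$ to be an integer, the surrounding atomic-model analysis often assumes integer $a$, so it is cleaner either to verify the three deviation conditions directly for your instance (as you suggest) or to pick an integer $a$ (e.g.\ $a=2$, $h=3$, $k=1$, $l=4$ works equally well).
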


Hence we can conclude that for the family of all considered variants of proportional sharing, we either have vulnerability to Sybil attacks or high Price of Stability.

\section{Conclusions}
\label{sec:conclusions}

We have studied a model of pool formation over blockchains with large, influential players and players with negligible stake. The conclusions regarding the schemes we considered are as follows.

\begin{enumerate}[noitemsep]
    \item The Shapley scheme has constant Price of Stability bounded by 4/3 in the oceanic model and by $2$ in the atomic model. It is also Sybil-proof w.r.t. the constructed equilibria. Furthermore it is not always easy to find the best possible Sybil strategy.
    \item The proportional scheme always has an optimal equilibrium (PoS = 1). But it is not Sybil-proof w.r.t. its equilibria. Moreover, Sybil attacks can be computed relatively easily.
    \item The proportional to squares scheme has unbounded Price of Stability.
    \item The proportional to square roots scheme has constant Price of Stability bounded by 2. But it is not Sybil-proof w.r.t. its equilibria.
\end{enumerate}

We view this work as a promising initial step towards demonstrating the applicability of cooperative game theory concepts, such as the Shapley scheme, in the context of reward sharing for blockchain protocols. While not exhaustive, our findings provide a foundation for further exploration and development in this area.

\subsection{Future work}
\label{subsec:future}
    
There are several interesting avenues for future research. An immediate direction is to consider a different function $\rho(S)$ for the pool rewards so as to capture other scenarios or protocols. 
Our focus for $\rho(\cdot)$ in this paper was a
threshold function inspired of validator rewards
in Ethereum. However, other schemes are possible for instance, 
 we can consider a model where $\rho(S)$ is a linear (up to a threshold) function of $\sum_{i\in S} a_i$, or other variants that provide additional reward to pools with large pool operators so that some ``skin  in the game'' is present in pool formation, cf. \cite{brunjes2020reward}.  With respect to this latter property, it would be also interesting to identify skin in the game as a separate or alternative objective (i.e., separate from welfare and the 
 number of independent pools involved at equilibrium). For instance, having skin in the game  is reflected in design choices of pool formation contracts like Rocketpool to impose a minimum threshold of 8 ETH to the operator for each pool spun by the system. It is easy to capture such considerations in the oceanic model by observing that skin in the game is not provided by non-atomic players and hence a pool configuration can only derive skin in the game from the atomic players. Comparing  equilibria and the Price of Stability under this objective would be an interesting parallel to the results presented herein, while considering both objectives simultaneously (equilibria exhibiting  high decentralization and high skin in the game at the same time) points to an interesting Pareto optimization question. 
 
  Another direction is to consider further concepts from game theory, such as the Banzhaf index \cite{CW16}. This has played a prominent role in weighted voting games, and it would be interesting to explore if it would lead to a performance comparable to the Shapley scheme.
Finally, it is also important to investigate the dynamics for the iterated version of pool formation games and the speed of convergence to good equilibria.

\ignore{  %
\noindent {\bf Skin in the game.} We end with a discussion on yet another important aspect. A very relevant desideratum in the design of blockchain protocols evolves around the notion of {\it skin in the game}. Although there is no universal definition for this, intuitively, it intends to capture the property that in each pool that forms, at least some pool members should ``care enough'' for it or have a lot to lose if the pool dissolves. As an example where this is enforced to an extent, the Rocketpool platform discussed earlier in Section \ref{sec:intro}, requires that in each created pool, at least one member has a stake of at least 8 ETH. A general way to define this notion in our setting could be as follows.

\begin{definition}
\label{def:skin}
    Given $n$ players with stake distribution $\vec{s} = (s_1,\dots, s_n)$, the total skin in the game of a partition $(C_1,\dots, C_m)$ is 
    $\sum_{i=1}^m f(\vec{s}|_{C_i})$,
   where $f$ is some nonnegative, weakly increasing function, dependent on $\vec{s}|_{C_i}$, which is the stake vector for the members of $C_i$ only.
\end{definition}

In Appendix \ref{app:sec-skin}, we 
provide a concrete example for the function $f$ in the definition above, along with a preliminary result on the considered schemes (see Corollary \ref{cor:skin}).
This example yields additional motivation for the $(k, l)$-equilibria that we have studied in this work, as they provide optimal skin in the game.

While the specific example in Appendix \ref{app:sec-skin} does not differentiate among the four schemes, we propose as an open research direction the investigation of more appropriate or refined methods to capture the concept of skin in the game, based on Definition~\ref{def:skin}. This exploration may yield insights that better distinguish between the schemes in various scenarios.
}%

\bibliographystyle{ACM-Reference-Format}

\begin{thebibliography}{28}


\ifx \showCODEN    \undefined \def \showCODEN     #1{\unskip}     \fi
\ifx \showDOI      \undefined \def \showDOI       #1{#1}\fi
\ifx \showISBNx    \undefined \def \showISBNx     #1{\unskip}     \fi
\ifx \showISBNxiii \undefined \def \showISBNxiii  #1{\unskip}     \fi
\ifx \showISSN     \undefined \def \showISSN      #1{\unskip}     \fi
\ifx \showLCCN     \undefined \def \showLCCN      #1{\unskip}     \fi
\ifx \shownote     \undefined \def \shownote      #1{#1}          \fi
\ifx \showarticletitle \undefined \def \showarticletitle #1{#1}   \fi
\ifx \showURL      \undefined \def \showURL       {\relax}        \fi
\providecommand\bibfield[2]{#2}
\providecommand\bibinfo[2]{#2}
\providecommand\natexlab[1]{#1}
\providecommand\showeprint[2][]{arXiv:#2}

\bibitem[Arnosti and Weinberg(2022)]%
        {AW22}
\bibfield{author}{\bibinfo{person}{Nick Arnosti} {and}
  \bibinfo{person}{S.~Matthew Weinberg}.} \bibinfo{year}{2022}\natexlab{}.
\newblock \showarticletitle{Bitcoin: {A} Natural Oligopoly}.
\newblock \bibinfo{journal}{\emph{Management Science}} \bibinfo{volume}{68},
  \bibinfo{number}{7} (\bibinfo{year}{2022}), \bibinfo{pages}{4755--4771}.
\newblock


\bibitem[Assmann et~al\mbox{.}(1984)]%
        {AJKL84}
\bibfield{author}{\bibinfo{person}{S.~F. Assmann}, \bibinfo{person}{D.~S.
  Johnson}, \bibinfo{person}{D.~J. Kleitman}, {and} \bibinfo{person}{J.~Y.
  Leung}.} \bibinfo{year}{1984}\natexlab{}.
\newblock \showarticletitle{On a Dual Version of the One-Dimensional Bin
  Packing Problem}.
\newblock \bibinfo{journal}{\emph{Journal of Algorithms}}  \bibinfo{volume}{5}
  (\bibinfo{year}{1984}), \bibinfo{pages}{502--525}.
\newblock


\bibitem[Ausubel and Milgrom(2006)]%
        {AM01}
\bibfield{author}{\bibinfo{person}{L.~M. Ausubel} {and} \bibinfo{person}{P.
  Milgrom}.} \bibinfo{year}{2006}\natexlab{}.
\newblock \showarticletitle{The Lovely but Lonely Vickrey Auction}.
\newblock In \bibinfo{booktitle}{\emph{Combinatorial Auctions}},
  \bibfield{editor}{\bibinfo{person}{Peter Cramton}, \bibinfo{person}{Yoav
  Shoham}, {and} \bibinfo{person}{Richard Steinberg}} (Eds.).
  \bibinfo{publisher}{MIT Press}, \bibinfo{pages}{18--40}.
\newblock


\bibitem[Azouvi and Hicks(2022)]%
        {AH22}
\bibfield{author}{\bibinfo{person}{Sarah Azouvi} {and}
  \bibinfo{person}{Alexander Hicks}.} \bibinfo{year}{2022}\natexlab{}.
\newblock \showarticletitle{Decentralisation Conscious Players and System
  Reliability}. In \bibinfo{booktitle}{\emph{26th International Conference on
  Financial Cryptography and Data Security, {FC} 2022}}.
  \bibinfo{publisher}{Springer}, \bibinfo{pages}{426--443}.
\newblock


\bibitem[Bachrach et~al\mbox{.}(2010)]%
        {BMRPRS10}
\bibfield{author}{\bibinfo{person}{Yoram Bachrach}, \bibinfo{person}{Evangelos
  Markakis}, \bibinfo{person}{Ezra Resnick}, \bibinfo{person}{Ariel~D.
  Procaccia}, \bibinfo{person}{Jeffrey~S. Rosenschein}, {and}
  \bibinfo{person}{Amin Saberi}.} \bibinfo{year}{2010}\natexlab{}.
\newblock \showarticletitle{Approximating power indices: theoretical and
  empirical analysis}.
\newblock \bibinfo{journal}{\emph{Auton. Agents Multi Agent Syst.}}
  \bibinfo{volume}{20}, \bibinfo{number}{2} (\bibinfo{year}{2010}),
  \bibinfo{pages}{105--122}.
\newblock


\bibitem[Bahrani et~al\mbox{.}(2024)]%
        {BGR24}
\bibfield{author}{\bibinfo{person}{Maryam Bahrani}, \bibinfo{person}{Pranav
  Garimidi}, {and} \bibinfo{person}{Tim Roughgarden}.}
  \bibinfo{year}{2024}\natexlab{}.
\newblock \bibinfo{title}{Centralization in Block Building and Proposer-Builder
  Separation}.
\newblock
\newblock
\newblock
\shownote{arXiv:2401.12120}.


\bibitem[Br{\"u}njes et~al\mbox{.}(2020)]%
        {brunjes2020reward}
\bibfield{author}{\bibinfo{person}{Lars Br{\"u}njes}, \bibinfo{person}{Aggelos
  Kiayias}, \bibinfo{person}{Elias Koutsoupias}, {and}
  \bibinfo{person}{Aikaterini-Panagiota Stouka}.}
  \bibinfo{year}{2020}\natexlab{}.
\newblock \showarticletitle{Reward sharing schemes for stake pools}. In
  \bibinfo{booktitle}{\emph{2020 IEEE european symposium on security and
  privacy (EuroS\&p)}}. IEEE, \bibinfo{pages}{256--275}.
\newblock


\bibitem[Buterin(2013)]%
        {Ethereum}
\bibfield{author}{\bibinfo{person}{Vitalik Buterin}.}
  \bibinfo{year}{2013}\natexlab{}.
\newblock \bibinfo{title}{A Next-Generation Smart Contract and Decentralized
  Application Platform}.
\newblock
\newblock
\newblock
\shownote{\url{https://github.com/ethereum/wiki/wiki/White-Paper}}.


\bibitem[Buterin et~al\mbox{.}(2019)]%
        {casper-incentives}
\bibfield{author}{\bibinfo{person}{Vitalik Buterin},
  \bibinfo{person}{Dani{\"{e}}l Reijsbergen}, \bibinfo{person}{Stefanos
  Leonardos}, {and} \bibinfo{person}{Georgios Piliouras}.}
  \bibinfo{year}{2019}\natexlab{}.
\newblock \showarticletitle{Incentives in Ethereum's Hybrid Casper Protocol}.
  In \bibinfo{booktitle}{\emph{{IEEE} International Conference on Blockchain
  and Cryptocurrency, {ICBC} 2019, Seoul, Korea (South), May 14-17, 2019}}.
  \bibinfo{publisher}{{IEEE}}, \bibinfo{pages}{236--244}.
\newblock
\urldef\tempurl%
\url{https://doi.org/10.1109/BLOC.2019.8751241}
\showDOI{\tempurl}


\bibitem[Can et~al\mbox{.}(2022)]%
        {CHP22}
\bibfield{author}{\bibinfo{person}{Burak Can}, \bibinfo{person}{Jens~Leth
  Hougaard}, {and} \bibinfo{person}{Mohsen Pourpouneh}.}
  \bibinfo{year}{2022}\natexlab{}.
\newblock \showarticletitle{On reward sharing in blockchain mining pools}.
\newblock \bibinfo{journal}{\emph{Games Econ. Behav.}}  \bibinfo{volume}{136}
  (\bibinfo{year}{2022}), \bibinfo{pages}{274--298}.
\newblock


\bibitem[Chalkiadakis and Woolridge(2016)]%
        {CW16}
\bibfield{author}{\bibinfo{person}{G. Chalkiadakis} {and} \bibinfo{person}{M.
  Woolridge}.} \bibinfo{year}{2016}\natexlab{}.
\newblock \showarticletitle{Weighted Voting Games}.
\newblock In \bibinfo{booktitle}{\emph{Handbook of Computational Social
  Choice}}, \bibfield{editor}{\bibinfo{person}{Felix Brandt},
  \bibinfo{person}{Vincent Conitzer}, \bibinfo{person}{Ulle Endriss},
  \bibinfo{person}{J{\'{e}}r{\^{o}}me Lang}, {and} \bibinfo{person}{Ariel~D.
  Procaccia}} (Eds.). \bibinfo{publisher}{Cambridge University Press},
  \bibinfo{pages}{377--396}.
\newblock


\bibitem[Chen et~al\mbox{.}(2019)]%
        {chen2019axiomatic}
\bibfield{author}{\bibinfo{person}{Xi Chen}, \bibinfo{person}{Christos
  Papadimitriou}, {and} \bibinfo{person}{Tim Roughgarden}.}
  \bibinfo{year}{2019}\natexlab{}.
\newblock \showarticletitle{An axiomatic approach to block rewards}. In
  \bibinfo{booktitle}{\emph{Proceedings of the 1st ACM Conference on Advances
  in Financial Technologies}}. \bibinfo{pages}{124--131}.
\newblock


\bibitem[Chen et~al\mbox{.}(2020)]%
        {CSSZ20}
\bibfield{author}{\bibinfo{person}{Zhihuai Chen}, \bibinfo{person}{Xiaoming
  Sun}, \bibinfo{person}{Xiaohan Shan}, {and} \bibinfo{person}{Jialin Zhang}.}
  \bibinfo{year}{2020}\natexlab{}.
\newblock \showarticletitle{Decentralized Mining Pool Games in Blockchain}. In
  \bibinfo{booktitle}{\emph{IEEE International Conference on Knowledge Graph,
  {ICKG} 2020}}. \bibinfo{pages}{426--432}.
\newblock


\bibitem[Deng and Papadimitriou(1994)]%
        {DP94}
\bibfield{author}{\bibinfo{person}{Xiaotie Deng} {and}
  \bibinfo{person}{Christos Papadimitriou}.} \bibinfo{year}{1994}\natexlab{}.
\newblock \showarticletitle{On the complexity of cooperative solution
  concepts}.
\newblock \bibinfo{journal}{\emph{Mathematics of Operations Research}}
  \bibinfo{volume}{19}, \bibinfo{number}{2} (\bibinfo{year}{1994}),
  \bibinfo{pages}{257--266}.
\newblock


\bibitem[Douceur(2002)]%
        {douceur02}
\bibfield{author}{\bibinfo{person}{John~R. Douceur}.}
  \bibinfo{year}{2002}\natexlab{}.
\newblock \showarticletitle{The Sybil Attack}. In
  \bibinfo{booktitle}{\emph{Peer-to-Peer Systems, First International Workshop,
  {IPTPS} 2002}}. \bibinfo{pages}{251--260}.
\newblock


\bibitem[Fatima et~al\mbox{.}(2008)]%
        {fatima2008linear}
\bibfield{author}{\bibinfo{person}{Shaheen~S Fatima}, \bibinfo{person}{Michael
  Wooldridge}, {and} \bibinfo{person}{Nicholas~R Jennings}.}
  \bibinfo{year}{2008}\natexlab{}.
\newblock \showarticletitle{A linear approximation method for the {S}hapley
  value}.
\newblock \bibinfo{journal}{\emph{Artificial Intelligence}}
  \bibinfo{volume}{172}, \bibinfo{number}{14} (\bibinfo{year}{2008}),
  \bibinfo{pages}{1673--1699}.
\newblock


\bibitem[Gillies(1953)]%
        {gillies1953some}
\bibfield{author}{\bibinfo{person}{Donald~Bruce Gillies}.}
  \bibinfo{year}{1953}\natexlab{}.
\newblock \bibinfo{booktitle}{\emph{Some theorems on n-person games}}.
\newblock \bibinfo{publisher}{Princeton University}.
\newblock


\bibitem[Kiayias et~al\mbox{.}(2024)]%
        {marmolejo}
\bibfield{author}{\bibinfo{person}{Aggelos Kiayias}, \bibinfo{person}{Elias
  Koutsoupias}, \bibinfo{person}{Francisco Marmolejo-Cossio}, {and}
  \bibinfo{person}{Aikaterini-Panagiota Stouka}.}
  \bibinfo{year}{2024}\natexlab{}.
\newblock \showarticletitle{Balancing Participation and Decentralization in
  Proof-of-Stake Cryptocurrencies}. In \bibinfo{booktitle}{\emph{17th
  International Conference on Algorithmic Game Theory, {SAGT} 2024}}.
  \bibinfo{pages}{333--350}.
\newblock


\bibitem[Kwon et~al\mbox{.}(2019)]%
        {kwon2019impossibility}
\bibfield{author}{\bibinfo{person}{Yujin Kwon}, \bibinfo{person}{Jian Liu},
  \bibinfo{person}{Minjeong Kim}, \bibinfo{person}{Dawn Song}, {and}
  \bibinfo{person}{Yongdae Kim}.} \bibinfo{year}{2019}\natexlab{}.
\newblock \showarticletitle{Impossibility of full decentralization in
  permissionless blockchains}. In \bibinfo{booktitle}{\emph{Proceedings of the
  1st ACM Conference on Advances in Financial Technologies}}.
  \bibinfo{pages}{110--123}.
\newblock


\bibitem[Leonardos et~al\mbox{.}(2019)]%
        {LLP19}
\bibfield{author}{\bibinfo{person}{Nikos Leonardos}, \bibinfo{person}{Stefanos
  Leonardos}, {and} \bibinfo{person}{Georgios Piliouras}.}
  \bibinfo{year}{2019}\natexlab{}.
\newblock \showarticletitle{Oceanic Games: Centralization Risks and Incentives
  in Blockchain Mining}. In \bibinfo{booktitle}{\emph{1st International
  Conference on Mathematical Research for Blockchain Economy, {MARBLE} 2019}}
  \emph{(\bibinfo{series}{Springer Proceedings in Business and Economics})}.
  \bibinfo{publisher}{Springer}, \bibinfo{pages}{183--199}.
\newblock


\bibitem[Lewenberg et~al\mbox{.}(2015)]%
        {LB+15}
\bibfield{author}{\bibinfo{person}{Yoad Lewenberg}, \bibinfo{person}{Yoram
  Bachrach}, \bibinfo{person}{Yonatan Sompolinsky}, \bibinfo{person}{Aviv
  Zohar}, {and} \bibinfo{person}{Jeffrey~S. Rosenschein}.}
  \bibinfo{year}{2015}\natexlab{}.
\newblock \showarticletitle{Bitcoin Mining Pools: {A} Cooperative Game
  Theoretic Analysis}. In \bibinfo{booktitle}{\emph{Proceedings of the 14th
  International Conference on Autonomous Agents and Multiagent Systems, {AAMAS}
  2015}}. \bibinfo{publisher}{{ACM}}, \bibinfo{pages}{919--927}.
\newblock


\bibitem[Milnor and Shapley(1978)]%
        {MS78}
\bibfield{author}{\bibinfo{person}{J.~W. Milnor} {and} \bibinfo{person}{L.~S.
  Shapley}.} \bibinfo{year}{1978}\natexlab{}.
\newblock \showarticletitle{Values of Large Games II: Oceanic Games}.
\newblock \bibinfo{journal}{\emph{Mathematics of Operations Research}}
  \bibinfo{volume}{3}, \bibinfo{number}{4} (\bibinfo{year}{1978}),
  \bibinfo{pages}{290--307}.
\newblock


\bibitem[Moulin and Shenker(2001)]%
        {MS01}
\bibfield{author}{\bibinfo{person}{Herve Moulin} {and} \bibinfo{person}{Scott
  Shenker}.} \bibinfo{year}{2001}\natexlab{}.
\newblock \showarticletitle{Strategyproof sharing of submodular costs: Budget
  balance vs efficiency}.
\newblock \bibinfo{journal}{\emph{Economic Theory}}  \bibinfo{volume}{18}
  (\bibinfo{year}{2001}), \bibinfo{pages}{511--533}.
\newblock


\bibitem[Nakamoto(2008)]%
        {Nakamoto2008}
\bibfield{author}{\bibinfo{person}{Satoshi Nakamoto}.}
  \bibinfo{year}{2008}\natexlab{}.
\newblock \bibinfo{title}{Bitcoin: A peer-to-peer electronic cash system.}
\newblock \bibinfo{howpublished}{http://bitcoin.org/bitcoin.pdf}.
\newblock


\bibitem[Rey and Rothe(2014)]%
        {rey2014false}
\bibfield{author}{\bibinfo{person}{Anja Rey} {and} \bibinfo{person}{Joerg
  Rothe}.} \bibinfo{year}{2014}\natexlab{}.
\newblock \showarticletitle{False-name manipulation in weighted voting games is
  hard for probabilistic polynomial time}.
\newblock \bibinfo{journal}{\emph{Journal of Artificial Intelligence Research}}
   \bibinfo{volume}{50} (\bibinfo{year}{2014}), \bibinfo{pages}{573--601}.
\newblock


\bibitem[Roughgarden(2016)]%
        {roughgarden16}
\bibfield{author}{\bibinfo{person}{Tim Roughgarden}.}
  \bibinfo{year}{2016}\natexlab{}.
\newblock \bibinfo{booktitle}{\emph{Twenty Lectures on Algorithmic Game
  Theory}}.
\newblock \bibinfo{publisher}{Cambridge University Press}.
\newblock


\bibitem[Rozemberczki et~al\mbox{.}(2022)]%
        {RW+22}
\bibfield{author}{\bibinfo{person}{Benedek Rozemberczki},
  \bibinfo{person}{Lauren Watson}, \bibinfo{person}{P{\'{e}}ter Bayer},
  \bibinfo{person}{Hao{-}Tsung Yang}, \bibinfo{person}{Oliver Kiss},
  \bibinfo{person}{Sebastian Nilsson}, {and} \bibinfo{person}{Rik Sarkar}.}
  \bibinfo{year}{2022}\natexlab{}.
\newblock \showarticletitle{The {S}hapley Value in Machine Learning}. In
  \bibinfo{booktitle}{\emph{Proceedings of the Thirty-First International Joint
  Conference on Artificial Intelligence, {IJCAI} 2022}}.
  \bibinfo{pages}{5572--5579}.
\newblock


\bibitem[Shapley(1953)]%
        {Shapley53}
\bibfield{author}{\bibinfo{person}{Lloyd Shapley}.}
  \bibinfo{year}{1953}\natexlab{}.
\newblock \showarticletitle{A value for $n$-person games}.
\newblock In \bibinfo{booktitle}{\emph{Contributions to the Theory of Games}},
  \bibfield{editor}{\bibinfo{person}{H.~Kuhn} {and} \bibinfo{person}{A.~W.
  Tucker}} (Eds.). Vol.~\bibinfo{volume}{2}. \bibinfo{publisher}{Princeton
  University Press}, \bibinfo{pages}{307--317}.
\newblock


\end{thebibliography}

\appendix

\section{Discussion related to Remark \ref{rem:stake}}
\label{sec-app:big-players}

In this section, we revisit the assumption we have made that $a_i< h$. If we have very large players with $a_i > h$, obviously one choice for them is that they could split their stake into portions of size $h$ and run their own individual pools, which will not get involved in the pool formation game. Hence they would participate in pools with other players only with their remainder, which is $a_i ~mod~ h$. Given our equilibrium analysis, we provide evidence below that this is a meaningful action for them, in the sense that splitting their stake in portions smaller than $h$  (say $h/m$ for some $m>1$) and participate with multiple avatars into our pool formation game does not yield better outcomes for them. 

\begin{theorem}
    Consider a stake distribution where there exists a single player $i\in N_a$ with $a_i = \lambda h$ and for every $j\neq i$, $a_i < h$. Suppose that player $i$ is considering splitting her stake into portions of size $h/m$ and forming pools with other players, where $m>1$. Then, for the Shapley scheme, and the equilibria that arise under this splitting, as identified in Section \ref{sec:shapley}, player $i$ is not better off, compared to the scenario where she splits her stake into $\lambda$ portions of size exactly $h$ each and runs her own individual pools. The same also holds for any equilibrium under the proportional scheme. 
\end{theorem}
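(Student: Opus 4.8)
The plan is to normalize $h=1$ by scaling and to reduce the whole statement to one elementary inequality: \emph{in any winning pool of the relevant equilibrium, an avatar carrying stake $s$ is paid at most $s/h$}. Granting this, the theorem is immediate. After splitting into portions of size $h/m$, player $i$ owns $\lambda m$ avatars, each of stake $h/m$ and each sitting in a winning pool of the equilibrium, so her total reward is at most $\lambda m\cdot\frac{h/m}{h}=\lambda$, which is exactly what she collects by running $\lambda$ stand-alone pools of stake $h$. Hence she cannot be strictly better off, under either scheme.

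For the proportional scheme the inequality is just the definition: a winning pool $C$ has $m(C)\ge h$, so an avatar with stake $s$ placed in $C$ receives $\frac{s}{m(C)}\le\frac{s}{h}$; this needs no equilibrium hypothesis, only that the partition consists of winning pools. For the Shapley scheme I would invoke the structure of the equilibria of Section~\ref{sec:shapley}: in the oceanic model (Lemma~\ref{lem:eq-conditions-oceanic}) and likewise in the atomic model ($(k,l)$-equilibria, Definition~\ref{def:k-l}), every atomic player---hence every avatar of $i$---is the \emph{sole} atomic player of its pool, accompanied by a mass (respectively a number) $k$ of small players. One then reads the avatar's payoff off Lemma~\ref{lemma:nonatomic-rewards} (oceanic) or Lemma~\ref{lem:typeA} (atomic): with stake $a=h/m$ it equals $\frac{k+a-h}{k}$ when $k\le h$ and $\frac{a}{k}$ when $k>h$, with the analogous atomic expressions (where the analyzed equilibria moreover satisfy $k\le h-1$). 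In each branch a two-line manipulation---for instance $\frac{k+a-h}{k}=1-\frac{h-a}{k}\le 1-\frac{h-a}{h}=\frac{a}{h}$ using $0<k\le h$ and $a<h$, and $\frac{a}{k}<\frac{a}{h}$ using $k>h$---gives payoff $\le a/h=1/m$, which is exactly the required inequality. Summing over the $\lambda m$ avatars finishes the argument.

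I expect the only delicate point---more bookkeeping than obstacle---to be justifying the appeal to the equilibrium structure. One should check that, once the single giant player is replaced by $\lambda m$ identical avatars, the equilibria ``as identified in Section~\ref{sec:shapley}'' still have the one-atomic-player-per-pool shape (the extra condition that no big player wishes to relocate to another Type~A pool is automatic, since all avatars are identical), and, in the atomic model, that $h/m$ may be taken as the large stake value of the 2-valued distribution so that Lemma~\ref{lem:typeA} applies verbatim. No new idea is needed beyond the per-avatar bound: a size-$h/m$ avatar can capture at most a $1/m$ fraction of one pool's unit reward, whereas a size-$h$ chunk captures a whole unit on its own, so splitting into $\lambda m$ avatars can only push the total down to $\lambda$.
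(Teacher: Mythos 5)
Your proposal is correct and follows essentially the same route as the paper: both reduce the claim to the observation that, in the equilibria of Lemma~\ref{lem:eq-conditions-oceanic}, an avatar of stake $a=h/m$ in a pool with non-atomic mass $k_i\leq h$ earns $\frac{k_i+a-h}{k_i}\leq \frac{a}{h}$, which is exactly the paper's inequality $m\cdot\frac{k_i-h+h/m}{k_i}\leq 1\Leftrightarrow k_i\leq h$ restated per avatar. Your version is slightly more explicit about the proportional case and the atomic model, but the key step is identical.
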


\begin{proof}
    If $i$ splits her stake into portions of size $h$, and runs her own pools, her payoff will be exactly equal to $\lambda$. We compare this against a uniform split into $\lambda m$ portions of size $h/m$ each, where $m>1$. Hence player $i$ participates with $\lambda m$ identities in the game.

    Consider the Shapley scheme. For any equilibrium in the form described by the conditions of Lemma \ref{lem:eq-conditions-oceanic}, each such avatar will form a pool with small players. By Lemma \ref{lemma:nonatomic-rewards}, the reward of each avatar will be equal to its Shapley value in a pool consisting of herself and a volume of mass $k_i$ for some $k_i\leq h$, which is equal to $\frac{k_i - h + h/m}{k_i}$. Hence, the total reward of player $i$, after summing up the reward of all her avatars will be equal to:
    $$\lambda m \cdot  \frac{k_i - h + h/m}{k_i}$$

    If we now check for what values of $m$ this total reward is no more than $\lambda$, this is equivalent to:
    $$ m \cdot  \frac{k_i - h + h/m}{k_i} \leq 1$$ 
    By simplifying the above expression, we get that this is equivalent to $k_i\leq h$, which is always true for these equilibria by Lemma \ref{lem:eq-conditions-oceanic}. Hence, $i$ cannot receive a higher reward by splitting her stake uniformly into portions of size strictly less than $h$.

    Finally, regarding the proportional scheme, it is very easy to verify that a player cannot become better off by splitting her stake into portions of size $h/m$ each, for $m>1$.
\end{proof}

\section{Missing proofs from Section \ref{sec:shapley}}
\label{app-sec:shapley}

\subsection{Proof of Lemma \ref{lem:typeA}}

Consider a formation where agent 1 forms a pool with $k< h$ other players. We analyze the Shapley formula of \eqref{eq:shapley} for the pool under consideration. The resulting payoffs occur from the fact that agent 1 will have marginal contribution equal to 1 in all the permutations where she is in a position $i$ with  $h-a<i\leq k+1$, which yields that the total number of these permutations is $k!\cdot (k-h+a+1)$. It suffices now to divide this by $(k+1)!$, which is the total number of all permutations of the $k+1$ players. 
The other $k$ players simply get the remaining reward divided by $k$, that is for $i\in S\setminus\{1\}$, we have $\phi_i(S)=\frac{1-\phi_1(S)}{k}=\frac{h-a}{k\cdot(k+1)}$. 

For the second case of the lemma, consider a formation where agent 1 forms a pool with $k\geq h$ other players. Agent 1 will have marginal contribution equal to 1, in all the permutations that she is placed in position i with $h-a\leq i\leq h$, hence the total number of these permutations is $a\cdot(k!)$, and dividing by the total number of all permutations $(k+1)!$, we get $\frac{a}{k+1}$. The other $k$ players, again receive the remaining reward divided by $k$, which is:$\frac{1-\frac{a}{k+1}}{k}=\frac{k-a+1}{k(k+1)}$.

\subsection{Proof of Lemma \ref{lem: 4.4}}

    Consider 2 such pools $S_1, S_2$, and let $|S_1| = t, |S_2|=r$. Suppose that $t\geq r+2$. Then, by Fact \ref{fact:small_pool}, for each $i\in S_1$, $\phi_i(S_1) = 1/t$. By deviating to join pool $S_2$, such a player will receive a payoff of $\phi_i(S_2\cup \{i\}) = 1/(r+1) > 1/t$. Hence this partition was not an equilibrium.   

    \subsection{Proof of Lemma \ref{lem:NE_conditions}}

The inequalities of the lemma represent the fact that no player should have an incentive to move from the pool that she belongs to. More specifically, the lower bound for $l$ in the first condition arises from the fact that none of the $k$ small players in the Type A pool should want to move to a pool of Type C, with $l-1$ players. By Lemma \ref{lem:typeA} and Fact \ref{fact:small_pool}, this is the case if $\frac{h-a}{k(k+1)} \geq 1/l$, which yields the desired lower bound for $l$. Note that this implies that these players also have no incentive to move to a Type B pool.

Consider now a deviation from a player of a Type B pool, going to the Type A pool. For such a deviation not to be profitable, we need to have that $1/l \geq \frac{h-a}{(k+1)(k+2)}$. This yields the upper bound of the first condition for $l$, and note also that if this holds, then players from a Type C pool also do not have an incentive to go to the Type A pool. 

So far we have shown that none of the small players, regardless of the pool they belong to, would have an incentive to deviate if the first condition of the lemma is satisfied. 
Lastly, the second condition of the lemma reflects the fact that player 1 should not desire to move to a Type C pool. In such a deviation, since $l-1\geq h$, by Lemma \ref{lem:typeA}, player 1 would receive $\frac{a}{l}$. Hence we need that  $\frac{k-h+a+1}{k+1} \geq \frac{a}{l}$, which yields precisely the desired condition.
Note also that if this holds, player 1 does not have an incentive to deviate to a Type B pool either.

Hence, if all the conditions of the lemma are true, there is no player who would have a profitable deviation.

 \subsection{Proof of Lemma \ref{lem:k,l values-part1}}   
   
 We start first with the following property, showing that indeed there are feasible values for $k$ that are at most $h-2$.
   \begin{lemma}
     \label{lem:k}
         It holds that $k^* =  \frac{1}{2}(\sqrt{-3a^2+h^2+2ah-2h+2a+1}-a+h-1)\leq h-2$, for any $a+1\leq h\leq a^2-2a+2$.
     \end{lemma}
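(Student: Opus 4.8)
The plan is to reduce the claimed inequality $k^* \le h-2$ to a purely polynomial inequality in $a$ and $h$ and then recognize it as precisely the hypothesis $h \le a^2 - 2a + 2$. Write $D := -3a^2 + h^2 + 2ah - 2h + 2a + 1$ for the quantity under the square root, so that $k^* = \tfrac12(\sqrt{D} - a + h - 1)$. The first step is a rearrangement: $k^* \le h-2$ is equivalent to $\sqrt{D} \le h + a - 3$.

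Before squaring I would record two sign facts. First, since $a \ge 2$ and $h \ge a+1$, we have $h + a - 3 \ge 2a - 2 \ge 2 > 0$, so the right-hand side is nonnegative and squaring both sides is an equivalence. Second, $D \ge 0$ on the relevant range — this is the ``the discriminant is positive'' remark from the main text. Concretely, viewing $D$ as a quadratic in $h$ with positive leading coefficient, its larger root is $-(a-1) + 2\sqrt{a(a-1)}$, and one checks $-(a-1) + 2\sqrt{a(a-1)} \le a+1 \iff \sqrt{a(a-1)} \le a$, which holds; hence $D > 0$ for every $h \ge a+1$ (indeed $D = 4a$ at $h = a+1$), so $\sqrt{D}$ is well defined. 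With both facts in hand, $k^* \le h-2$ is equivalent to $D \le (h+a-3)^2$.

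The last step is the computation: expanding $(h+a-3)^2 = h^2 + 2ah + a^2 - 6h - 6a + 9$ and subtracting $D$, the terms $h^2 + 2ah$ cancel and the inequality $D \le (h+a-3)^2$ collapses to $-4a^2 + 4h + 8a - 8 \le 0$, i.e.\ $h \le a^2 - 2a + 2$, which is exactly the assumed bound. This gives $k^* \le h-2$ and completes the proof.

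I do not expect a genuine obstacle here: the only care needed is to justify that the squaring step is an equivalence (handled by the sign of $h+a-3$) and that $\sqrt{D}$ is real on the stated range (handled by the root computation for $D$ as a quadratic in $h$); everything else is a one-line algebraic simplification that turns out to coincide with the hypothesis.
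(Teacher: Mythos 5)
Your proof is correct and follows essentially the same route as the paper's: rearrange to $\sqrt{D}\le h+a-3$, square, and observe that the resulting polynomial inequality is exactly the hypothesis $h\le a^2-2a+2$. In fact your write-up is more complete than the paper's (which asserts the final polynomial inequality without expanding it and contains a couple of typos in the intermediate rearrangements), since you also verify the nonnegativity of $h+a-3$ and of $D$ needed to make the squaring step an equivalence.
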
  
     \begin{proof}    
     Starting from what we want to prove: $\frac{1}{2}(\sqrt{-3a^2+h^2+2ah-2h+2a+1}-a+h-3)\leq h-2\Leftrightarrow \sqrt{-3a^2+h^2+2ah-2h+2a+1}\leq h+a-1\Leftrightarrow -3a^2+h^2+2ah-2h+2a+1\leq (h+a-3)^2$, which holds for any $a\geq 3$ and $a+1\leq h\leq a^2-2a+2$. \qed
    \end{proof}
    
Since $k^*\leq h-2$, we also have that $\lceil k^* \rceil \leq h-2$ and combining this with the fact that $a\leq h-1$, we get that $\lceil k^* \rceil +a<2h$. Thus, with $k=\lceil k^* \rceil$, the Type A pool that we construct satisfies the properties we need.
    
We can now substitute $k$ with $\lceil k^*\rceil$ in the first condition of Lemma \ref{lem:NE_conditions}, and we can get the lowest feasible integer value of $l$, for which we have an equilibrium.
In particular, since the interval $\Delta$ contains two integers, we set
$$ l^* = \frac{(\lceil k^*\rceil)(\lceil k^*\rceil+1)}{h-a}, ~~~l = \lceil l^* \rceil + 1 $$
In accordance with the definition of a $(k, l)$-partition, we now need to show that $l-1 = \lceil l^*\rceil \geq h$ and at the same time, for the Price of Stability guarantee we need that  $\lceil l^*\rceil + 1 \leq 2h$. 
Since $k^* \leq \lceil k^* \rceil <k^*+1$, it suffices to show that $h\leq \frac{k^*(k^*+1)}{h-a}$ and that 
$\frac{(k^*+1)(k^*+2)}{h-a}\leq 2h$.\\
  When $a=h-1$, this is easy to establish as Equation \eqref{k*_value} simplifies to $k\geq \sqrt{h-1}$. So we can set $k= \lceil \sqrt{h-1} \rceil$. Hence,  $\frac{k(k+1)}{h-a}=k(k+1)\geq(\sqrt{h-1})(\sqrt{h-1}+1)\geq h$ for any $h\geq2$. Moreover, $\frac{k(k+1)}{h-a}\leq (\sqrt{h-1}+1)(\sqrt{h-1}+2)=h+3\sqrt{h-1}+1$ which is lower or equal to $2h$ for any $h\geq 10$, also for $2\leq h \leq  10$, one can easily check that $(\lceil\sqrt{h-1}\rceil)(\lceil\sqrt{h-1}\rceil+1)\leq 2h$. 
  
  When $a< h-1$, the properties we need are verified by the following claim. 
    \begin{claim}
     For $k=k^*$, it holds that: $h\leq\frac{k^*(k^*+1)}{h-a} \leq \frac{k(k+1)}{h-a}\leq\frac{(k^*+1)(k^*+2)}{h-a}\leq 2h$,  for any $3\leq a\leq h-2$ and $h\leq a^2-2a+2$.
    \end{claim}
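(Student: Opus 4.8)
I would first collapse the four-term chain down to its two non-trivial ends, and then exploit that $k^*$ is by construction the larger (positive) root of the quadratic $g(k):=k^2+(a-h+1)k-a(h-a)$ used in the surrounding argument. Writing $k=\lceil k^*\rceil$ for the integer in the claim, the map $x\mapsto x(x+1)$ is increasing on $[0,\infty)$ and $k^*\le k<k^*+1$, so the two middle inequalities are automatic. It therefore remains to establish
\begin{align*}
\text{(a)}\quad k^*(k^*+1)\ \ge\ h(h-a), \qquad\qquad \text{(b)}\quad (k^*+1)(k^*+2)\ \le\ 2h(h-a).
\end{align*}

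\textbf{Inequality (a).} From $g(k^*)=0$ one gets the identity $(k^*)^2+k^*=(h-a)k^*+a(h-a)=(h-a)(a+k^*)$, i.e. $\frac{k^*(k^*+1)}{h-a}=a+k^*$, so (a) is equivalent to $k^*\ge h-a$. To see this I would just evaluate $g(h-a)=(h-a)\bigl[(h-a)+(a-h+1)-a\bigr]=-(h-a)(a-1)<0$ (using $a\ge 3$ and $h>a$); since the product of the roots of $g$ equals $-a(h-a)<0$, its smaller root is negative, so $g(h-a)<0$ forces $h-a<k^*$. Hence $\frac{k^*(k^*+1)}{h-a}=a+k^*>h$.

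\textbf{Inequality (b).} Again from $g(k^*)=0$, substituting $(k^*)^2=a(h-a)+(h-a-1)k^*$ into $(k^*+1)(k^*+2)=(k^*)^2+3k^*+2$ gives $(k^*+1)(k^*+2)=a(h-a)+(h-a+2)k^*+2$, so (b) is equivalent to $k^*\le\frac{(h-a)(2h-a)-2}{h-a+2}$. Setting $m:=h-a\ (\ge 2)$ and writing $h=a+m$, a short computation should reduce $D:=-3a^2+h^2+2ah-2h+2a+1$ to $(m-1)^2+4am$, so that $k^*=\tfrac12(\sqrt{D}+m-1)$, and after isolating $\sqrt D$ the target inequality becomes
\[
\sqrt{D}\ \le\ \frac{2am+3m^2-m-2}{m+2},
\]
whose right-hand side is positive, since $2am+3m^2-m-2\ge 6m+3m^2-m-2=(3m-1)(m+2)$ (using $a\ge 3$), giving a value $\ge 3m-1>0$. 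Squaring, (b) reduces to $D(m+2)^2\le(2am+3m^2-m-2)^2$, and expanding and simplifying this collapses (after dividing by $4m$) to the polynomial inequality $P(a,m)\ge 0$, where $P(a,m):=a^2m+a(2m^2-5m-6)+2(m+1)(m-1)^2$.

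\textbf{The polynomial bound, and the main obstacle.} Finally I would prove $P(a,m)\ge 0$ for all integers $a\ge 3$, $m\ge 2$ by a monotonicity-in-$a$ argument: $P(\cdot,m)$ is an upward parabola in $a$ with vertex at $a=\frac{-2m^2+5m+6}{2m}\le 2$ for every integer $m\ge 2$, hence increasing on $[3,\infty)$, so $P(a,m)\ge P(3,m)=2m^3+4m^2-8m-16=2(m+2)^2(m-2)\ge 0$. (Equality in (b) is attained exactly at $a=3$, $m=2$, i.e. $a=3,h=5$, which matches the tight instance seen earlier.) The main obstacle is precisely the block of algebra inside (b): pushing through the substitution $h=a+m$, the squaring step, and the expansion that must collapse to $P(a,m)$, together with spotting the clean factorization $P(3,m)=2(m+2)^2(m-2)$ — everything else is bookkeeping. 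Routing through $P(a,m)$ and its monotonicity in $a$ is what makes the argument uniform over all admissible $(a,h)$ and avoids an unwieldy case split on small values.
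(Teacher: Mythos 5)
Your proof is correct, and it is worth noting that the paper itself states this claim without supplying any argument for it, so there is nothing to compare against: your proposal fills that gap. I checked the key steps. The identity $g(k^*)=0$ with $g(k)=k^2+(a-h+1)k-a(h-a)$ does give $\frac{k^*(k^*+1)}{h-a}=a+k^*$, and $g(h-a)=-(h-a)(a-1)<0$ together with the sign of the product of the roots forces $k^*>h-a$, which settles the left end; the middle two inequalities are indeed immediate from monotonicity of $x\mapsto x(x+1)$ and $k^*\le\lceil k^*\rceil<k^*+1$. For the right end, the substitution $h=a+m$ gives $D=(m-1)^2+4am$ as you claim, the reduction to $\sqrt{D}\le\frac{2am+3m^2-m-2}{m+2}$ is exact, the right-hand side is positive via $(3m-1)(m+2)$, and expanding $\bigl(2am+3m^2-m-2\bigr)^2-D(m+2)^2$ does yield $4m\,P(a,m)$ with $P(a,m)=a^2m+a(2m^2-5m-6)+2(m+1)(m-1)^2$; the vertex of this upward parabola in $a$ lies at $\frac{-2m^2+5m+6}{2m}\le 2$ precisely because $(2m+3)(m-2)\ge 0$ for $m\ge 2$, and $P(3,m)=2(m+2)^2(m-2)\ge 0$, with equality exactly at $(a,h)=(3,5)$. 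One small observation: your argument never invokes the hypothesis $h\le a^2-2a+2$ and in fact establishes the chain of inequalities on the larger range $3\le a\le h-2$; that is a strictly stronger statement than what the claim asserts, so it is not a defect, but you may wish to flag it so a reader does not hunt for where that hypothesis is used.
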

    
\subsection{Proof of Lemma \ref{lem:k,l values-part2}}

For this case we will use $k=h-1$. From Lemma \ref{lem:NE_conditions} we get that when $k=h-1$ it must hold that $\frac{a}{h}\geq \frac{a}{l}$ and $\frac{h(h-1)}{h-a}\leq l \leq \frac{h(h+1)}{h-a+1}.$
Therefore, we have an interval $\Delta$ for $l$, and we need to identify an integer value in this interval.

First, we observe that when $h\geq a^2$, a valid integer value for $l$ is the value $h+a$. 
This at least  equal to $h+1$ for all $a\geq 2$ and hence the first condition is met. Moreover, it holds that $\frac{h(h-1)}{h-a}\leq h+a$, for any $h\geq a^2$ and also $h+a\leq \frac{h(h+1)}{h-a+1}$ for any $l$ and $a\geq 1$. This the value $h+a$ belongs to the interval $\Delta$ for $h\geq a^2$. Finally, $a+h\leq 2h$ for $a\leq h-1$. Hence, the tuple $k=h-1, l=h+a$ is an appropriate choice.

Next we also check the  range of $h$ for which the value $l=h+a+1$ is a valid solution. This value is at least $h$ and also lower or equal to $2h$ for all $2\leq a\leq h-1$. Additionally, $\frac{h(h-1)}{h-a}\leq h+a+1$ for any $h\geq \frac{a^2+a}{2} $ and also $h+a+1\leq \frac{h(h+1)}{h-a+1}$ for any $h\leq a^2-1$. Therefore, the tuple $k=h-1, l=h+a+1$ is an appropriate choice as long as $h\in [\frac{a^2+a}{2}, a^2-1]$.

Combining the above, we have showed the existence of integers values $k,l$, that satisfy the conditions of Lemma \ref{lem:NE_conditions}, for the cases where  $\frac{a^2+a}{2}\leq h\leq a^2-1$ and $h\geq a^2$. The region of values for $h$, that remains is  $a^2-2a+2< h<\frac{a^2+a}{2}$.
But now we can notice that $\frac{a^2+a}{2}\leq a^2-2a+2$ for any $a\geq 4$. Also $a=2$ and $a=3$ there are no integer values for $h$ included in this region. Hence, there is nothing else to check. and we have covered the entire range for $h$ and $a$ that concern Case 2.

\subsection{Proof of Theorem \ref{thm:pos-general}}
We will consider the construction of a $(k, l)$-partition as an equilibrium with $k\leq h-1$, and $k>k^*$, where $k^*$ is as defined in the proof of Theorem \ref{thm:PoS}. In order to argue about deviations of a large player from her pool to another Type A pool, we need to compute the Shapley value for pools where there are 2 large players together with $k$ small players. 
Consider a large player $i$ belonging to pool $S_i$, that we will refer to as the deviator, who is considering to move to a different Type A pool, say $S$. We show in the next lemma, that under the conditions of the theorem, this is not a profitable deviation.

\begin{lemma}
    For $h-1\geq k\geq k^*=\frac{1}{2}(\sqrt{-3a^2+2ah+2a+h^2-2h+1}-a+h-1)$, it holds that $\phi_i(S \cup \{i\}) \leq \phi_i(S_i)$, for any $h,a$ with $2\leq a<h$.
\end{lemma}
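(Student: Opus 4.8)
The plan is to compute the deviator's payoff in the two relevant pools, reduce the claimed inequality to a polynomial inequality, and then derive that inequality from the hypothesis $k\ge k^*$ together with $k\le h-1$. First, by Lemma~\ref{lem:typeA}, the deviator's payoff in her current Type~A pool $S_i$ (one large player of stake $a$ and $k<h$ small players) is $\phi_i(S_i)=\tfrac{k-h+a+1}{k+1}$. Next I would apply the Shapley formula~\eqref{eq:shapley} to a pool consisting of two large players of stake $a$ together with $k$ small players (with $a+k\ge h$ and $k\le h-1$, as in a $(k,l)$-partition), counting the orderings in which player $i$ is pivotal by conditioning on whether the other large player has already arrived: the orderings in which she has not contribute $\tfrac{(k-h+a+1)(k-h+a+2)}{2(k+1)(k+2)}$, and those in which she has contribute $\tfrac{1}{2(k+1)(k+2)}$ times $(h-a)(h-a+1)$ when $h\le 2a$ and $a(2h-3a+1)$ when $h>2a$; the split comes from whether the lower endpoint $\max(0,h-2a)$ of the admissible number of small players preceding $i$ is $0$ or $h-2a$.

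It is then convenient to substitute $m:=k-h+a$ (so $m\ge 0$ by the winning condition) and $p:=h-a$ (so $p\ge 1$, $m+p=k$, $m+1=k-h+a+1$). Clearing denominators, $\phi_i(S\cup\{i\})\le\phi_i(S_i)$ becomes $(m+1)(m+2)+p(2m+1-p)\ge 0$ when $h\le 2a$, and $(m+1)(m+2p+2)\ge a(2p-a+1)$ when $h>2a$. On the hypothesis side, $k\ge k^*$ is exactly the inequality $k^2+k(a-h+1)-a(h-a)\ge 0$ from the proof of Theorem~\ref{thm:PoS}; since its left-hand side equals $k(k-h+a+1)-a(h-a)$, this rewrites cleanly as $(m+1)(m+p)\ge ap$. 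Also $k\le h-1$ is the same as $m\le a-1$.

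To conclude, I would split on $h$ versus $2a$. If $h\le 2a$, then $p\le a$; when $p\le 2m+1$ the inequality $(m+1)(m+2)+p(2m+1-p)\ge 0$ is immediate, and when $p\ge 2m+2$ we have $ap\ge 2a(m+1)$, so the hypothesis $(m+1)(m+p)\ge ap$ forces $m+p\ge 2a$, contradicting $m+p\le(a-1)+a=2a-1$ — so that sub-case cannot occur. If $h>2a$, then $p>a$, hence $p-a+1>0$; writing $(m+1)(m+2p+2)=(m+1)(m+p)+(m+1)(p+2)\ge ap+(m+1)(p+2)$, it suffices to prove $(m+1)(p+2)\ge a(p-a+1)$, and using $(m+1)\ge\tfrac{ap}{m+p}$ this reduces to $p(p+2)\ge(p-a+1)(m+p)$, whose right-hand side is at most $(p-a+1)(p+a-1)=p^2-(a-1)^2\le p^2+2p$ because $m+p\le p+a-1$.

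The hard part will be the first step: evaluating $\phi_i(S\cup\{i\})$ carefully via the permutation count and, in particular, getting the two boundary cases ($h\le 2a$ versus $h>2a$) right, since everything downstream depends on the exact closed forms. By contrast, once both the target inequality and the hypothesis are rewritten in the variables $m,p$ — and the hypothesis is put in the product form $(m+1)(m+p)\ge ap$ — the remaining algebra collapses, as the concluding paragraph shows.
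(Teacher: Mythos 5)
Your proposal is correct and follows essentially the same route as the paper: you derive the same closed forms for $\phi_i(S\cup\{i\})$ via the same permutation count with the same case split at $h$ versus $2a$, and you use the hypothesis $k\ge k^*$ in exactly the paper's form $k(k-h+a+1)\ge a(h-a)$, i.e.\ $(m+1)(m+p)\ge ap$. The only differences are cosmetic — your $(m,p)$ substitution reorganizes the final algebra, and you write out the $h>2a$ case in full where the paper dismisses it as ``similar'' — but the underlying argument is the same.
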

\begin{proof}
    
We will distinguish two cases, the case of $a\geq \frac{h}{2}$ and that of $a<\frac{h}{2}$. 

For the first case, $a\geq \frac{h}{2}$, we claim that the reward player $i$ would get, if she deviates to $S$, is:
\begin{align}
\label{Shapley_a>=h/2}
    \phi_i(S \cup \{i\})=\frac{(h-a)(h-a+1)k!+(k-h+a+1)(k-h+a+2)k!}{2((k+2)!)}.
\end{align}
The first term of the numerator, reflects the total number of permutations, where the deviator is positioned after the other large player with stake $a$ and has a marginal contribution equal to 1. These are the permutations, where the sum of the stakes of agents positioned before the deviator, is greater or equal to $a$ and lower or equal to $h-1$. Similarly, the second term represents the total number of permutations, where the deviator is positioned before the other large player and has a marginal contribution equal to 1. In these permutations the player with stake $a$ is positioned after the $(h-a+2)$-th position. This formula can be simplified to $\phi_i(S \cup \{i\})=\frac{(h-a)(h-a+1)+(k-h+a+1)(k-h+a+2)}{2(k+1)(k+2)}$.

By Lemma \ref{lem:typeA}, we also know that $\phi_i(S_i) = \frac{k-h+a+1}{k+1}$. Therefore, what we need is to show that the following inequality holds:

\begin{align*}
      \frac{(h-a)(h-a+1)+(k-h+a+1)(k-h+a+2)}{2(k+1)(k+2)}\leq \frac{k-h+a+1}{k+1},  
    \end{align*}

    To see this, the above inequality is equivalent to $(h-a)(h-a+1)+(k-h+a+1)(k-h+a+2)\leq 2(k+2)(k-h+a+1)$. In order to prove that this holds, it suffices to show the validity of the following two simpler inequalities, and then add them up. Namely, the first one is 
    \begin{align*}
         (k+2)(k-h+a+1)\geq (k-h+a+1)(k-h+a+2)\Leftrightarrow k+2\geq k-h+a+2
    \end{align*} 
    and the second one is 
    \begin{align*}
      (k+2)(k-h+a+1)\geq (h-a)(h-a+1)  
    \end{align*}
    For the first one, it is easy to see that it holds due to the fact that $h\geq a\Rightarrow k+2\geq k-h+a+2$. For the other inequality, for the values of $k$ that we are interested in, we know from Section \ref{subsec:shapley-1}, that for $k\geq k^*$, it holds that $\frac{k(k+1)}{h-a}\geq \frac{a(k+1)}{k-h+a+1}\Rightarrow k-h+a+1\geq \frac{a(h-a)}{k}$, so it suffices to show that $\frac{a(h-a)}{k}\geq \frac{(h-a)(h-a+1)}{k+2}$. This last inequality, is equivalent to $a(k+2)\geq k(h-a+1)\Leftrightarrow ak+2a\geq kh-ak+k\Leftrightarrow 2ak+2a\geq kh+k$, which holds due to the fact that $a\geq \frac{h}{2}\Rightarrow 2ak\geq kh$ and also since $k\leq h\Rightarrow k\leq 2a$. This completes the proof for the case of $a\geq \frac{h}{2}$.

For the second case, where $a<\frac{h}{2}$, the reward the deviator would get by moving to pool $S$ is:
\begin{align}
\label{Shapley_a<h/2}
    \phi_i(S \cup \{i\})=\frac{(a)(2h-3a+1)k!+(k-h+a+1)(k-h+a+2)k!}{2((k+2)!)}.
\end{align}
Again, the first term of the numerator, represents the total number of permutations,where the deviator is positioned after the other player with stake $a$ and has a marginal contribution equal to 1. Also, the second term represents the total number of permutations, where the deviator is positioned before the other large player and has a marginal contribution equal to 1. Hence, what we need to show in this case boils down to the following inequality.
\begin{align*}    
    \frac{(2h-3a+1)a+(k-h+a+1)(k-h+a+2)}{2(k+1)(k+2)}\leq \frac{k-h+a+1}{k+1},
    \end{align*}
This can be verified in a similar way as the analogous inequality in the first case of $a\geq h/2$.
\end{proof}

In addition to the above lemma, we also need to show that none of the $k$ small players from a Type A pool have an incentive to move to another Type A pool. But this easy to see, as by doing so, their reward would become $\frac{h-a}{k+2}$, which is strictly less than $\frac{h-a}{k+1}$. 

Hence, for the construction that we had in the proof of Theorem \ref{thm:PoS}, where there was only one Type A pool, we have now shown that even when there are more Type A pools, one per each large player, we are still at an equilibrium. Therefore, by adjusting the lower bound on the number of small players that we need, so that such a construction is feasible, we retain the same guarantee for the Price of Stability, and the proof of the theorem is complete.

 \section{Missing proofs from Section \ref{sec:Sybil}}

\subsection{Proof of Theorem \ref{thm:water-filling}}
\label{proof-water-filling}

Let  $C_1,\dots, C_m$ be the pools formed by the other players, and consider an agent $i$ with stake $a_i$.  First, as already mentioned before the statement of Theorem \ref{thm:water-filling}, the objective is a sum of concave functions, and thus concave, therefore we could solve it up to any desired accuracy (since this is equivalent to minimizing a convex function subject to linear constraints). 

A different and simpler way to solve this is to use a greedy algorithm. 
The objective function is $\sum_{j=1}^m \frac{s_{i_j}}{m(C_j) + s_{i_j}}$. This is an {\it additively separable} function, i.e, a sum of functions, each of which is concave w.r.t. a different variable $s_{i_j}$.
Therefore, one should start with allocating stake to the pool that corresponds to the term with the highest derivative. But this is precisely the pool with the lowest stake. Hence we can allocate stake to this pool until it equalizes the second lowest pool. From that point onwards, the player has to allocate her stake equally to these two pools, which can be thought of as water filling these two pools at the same rate. And this process has to continue in the same flavor, involving more and more pools, until all of $a_i$ is allocated.

To be more precise, we simply need to run a for loop at most $m$ times, and in each iteration, we commit a portion of stake to the pool (or pools) with the lowest total stake, since this offers the greater rate of reward per unit. 
At the first iteration let $C_p$ be the pool with the lowest total stake and $C_k$ be the pool with the second lowest total stake. Then player $i$ commits stake equal to $m(C_k) - m(C_p)$, as long as $a_i$ is greater or equal to this quantity, otherwise, she commits all her remaining stake to $C_p$ and the procedure terminates. 
Moving on, at any other iteration $t$, there will be more than one pool with the lowest total stake, namely $C_{p_1}, C_{p_2},\dots, C_{p_t}$. Then agent $i$ splits her remaining stake equally to all of them, until all of them reach a total stake equal to the pool with the second lowest stake, and then move to the next iteration. If the remaining stake $s_i'$ does not suffice to do so, she splits $s_i'$ equally among pools $C_{p_1}, C_{p_2},\dots, C_{p_t}$ and the procedure terminates.

\noindent The above procedure is obviously polynomial with respect to the number of pools $m$.

\section{Missing proofs from Section \ref{sec:alt}}
\subsection{Proof of Theorem \ref{thm:sq_lb}}
    
Consider an instance where there is only one large player with stake $a=h-1$ and $n-1$ small players with stake 1. In a formation that is an equilibrium, at least one small player needs to be in the same pool with the agent that has stake $a$. That small player will get reward lower or equal to $\frac{1}{(h-1)^2+1}$, hence in order to not have an incentive to move to another pool (consisting of only small players), all such pools need to have at least $(h-1)^2$ agents. This allows us to construct an equilibrium  for instances with $n=(h-1)^2+2$. We will have the large player form one pool together with one small player. All the other small players form one additional pool of stake $(h-1)^2$. By the preceding dicsussion, this is an equilibrium with only two pools. On the other hand, the optimal formation will have $\lfloor \frac{a + 1 + (h-1)^2}{h}\rfloor = \lfloor \frac{h + (h-1)^2}{h}\rfloor \geq h-1$. 
Therefore, the Price of Stability for the proportional-to-squares reward sharing scheme would be at least $\frac{h-1}{2} = \Omega(h)$.

\subsection{Proof of Theorem \ref{thm:PoS_square_roots}}
\label{appendix-square-roots}
 
We will show the existence of $(k, l)$-equilibria, for the proportional-to-square-roots scheme, as we did with the Shapley scheme in Theorem \ref{thm:pos-general}, but for a different range of values for $k$ and $l$. 
 The next lemma identifies sufficient conditions for the existence of a $(k, l)$-equilibrium, which are derived by the equations describing that no player has an incentive to deviate.
\begin{lemma}
\label{lem:NE_conditions_sqrt}[Sufficient conditions]
    A $(k, l)$-partition is a Nash equilibrium under the proportional-to-square-roots scheme, if the following conditions hold:
    \begin{enumerate}[noitemsep]
        \item $\frac{\sqrt{a}}{\sqrt{a}+k}\geq \frac{\sqrt{a}}{\sqrt{a}+l-1}\Leftrightarrow l-1\geq k $
        \item $\frac{1}{\sqrt{a}+k}\geq\frac{1}{l}\Leftrightarrow k\leq l-\sqrt{a}$
       \item $\frac{1}{l}\geq \frac{1}{\sqrt{a}+k+1}\Leftrightarrow k\geq l-\sqrt{a}-1$
        
    \end{enumerate}
\end{lemma}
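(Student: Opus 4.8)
The plan is to compute the payoff of every player in every type of pool of a $(k,l)$-partition under the proportional-to-square-roots rule, and then verify that, under the three stated conditions, no player has a profitable unilateral deviation: neither by moving to a pool of another type, nor to a pool of the same type, nor by opening a new singleton pool. This mirrors the structure of the Shapley analysis in Lemma~\ref{lem:NE_conditions}.

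First I would record the relevant rewards. In a Type~A pool the total square-root weight is $\sqrt{a}+k$, so the large player receives $\frac{\sqrt{a}}{\sqrt{a}+k}$ and each of the $k$ small players receives $\frac{1}{\sqrt{a}+k}$; in a Type~B pool each small player receives $\frac{1}{l}$, and in a Type~C pool each receives $\frac{1}{l-1}$. All three pool types are winning by the requirements $k+a\geq h$ and $l-1\geq h$ in Definition~\ref{def:k-l}, so these amounts are positive. The essential bookkeeping point is that a migrating player changes the composition of the pool she enters: a small player joining a Type~A, Type~B, or Type~C pool would receive $\frac{1}{\sqrt{a}+k+1}$, $\frac{1}{l+1}$, or $\frac{1}{l}$ respectively, while the large player joining a small-only pool becomes its unique large player. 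Opening a new pool alone is never profitable, since a small player has stake $1<h$ and the large player has stake $a<h$, so such a pool is losing and yields $0$.

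Next I would go through the deviations and attach each to the condition that rules it out. A small player in a Type~A pool can move to a Type~C pool (getting $\frac{1}{l}$), a Type~B pool (getting $\frac{1}{l+1}$), or another Type~A pool (getting $\frac{1}{\sqrt{a}+k+1}$); the second condition, $\frac{1}{\sqrt{a}+k}\geq\frac{1}{l}$, forbids the first, and the trivial bounds $\frac{1}{l+1}<\frac{1}{l}$ and $\frac{1}{\sqrt{a}+k+1}<\frac{1}{\sqrt{a}+k}$ handle the other two. A small player in a Type~B pool can profit only by moving to a Type~A pool (getting $\frac{1}{\sqrt{a}+k+1}$), which the third condition, $\frac{1}{l}\geq\frac{1}{\sqrt{a}+k+1}$, forbids; a small player in a Type~C pool, whose current share $\frac{1}{l-1}$ exceeds $\frac{1}{l}$, is then forbidden the analogous move a fortiori, and moves among the $B$ and $C$ pools do not increase a small player's share. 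Finally, the large player can move to a Type~C pool (getting $\frac{\sqrt{a}}{\sqrt{a}+l-1}$), a Type~B pool (getting $\frac{\sqrt{a}}{\sqrt{a}+l}$), or another large player's pool (getting $\frac{\sqrt{a}}{2\sqrt{a}+k}$); the first condition, $\frac{\sqrt{a}}{\sqrt{a}+k}\geq\frac{\sqrt{a}}{\sqrt{a}+l-1}$, forbids the first, and $\frac{\sqrt{a}}{\sqrt{a}+l}<\frac{\sqrt{a}}{\sqrt{a}+l-1}$ together with $\frac{\sqrt{a}}{2\sqrt{a}+k}<\frac{\sqrt{a}}{\sqrt{a}+k}$ dispose of the rest. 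This exhausts all unilateral deviations, so the partition is a Nash equilibrium.

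I do not expect any single step to be technically hard; the care needed is entirely in the exhaustiveness of the case split and in tracking the ``$+1$'' shifts in pool size correctly, since mishandling those is exactly what would make the stated inequalities come out with the wrong constants. A secondary point is to treat cleanly the degenerate shapes of a $(k,l)$-partition (for instance $n_b=0$, or no Type~B pools, or a single pool), where some of the deviations above simply do not arise and the corresponding conditions are vacuous; each such case reduces at once to a sub-case of the argument above or to the analog of Lemma~\ref{lem: 4.4}.
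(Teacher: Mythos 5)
Your proposal is correct and follows essentially the same route as the paper: each of the three stated inequalities is attached to exactly the same deviation (large player to Type C, small player from Type A to Type C, small player from Type B to Type A), and the remaining deviations — including a large player joining another large player's pool and any player opening a singleton pool — are dismissed by the same trivial monotonicity and $a<h$ observations. Your write-up is merely more explicit in enumerating the residual cases; no substantive difference.
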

\begin{proof}
First, note that no player has an incentive to abandon a pool and start a pool on her own, since $a< h$.
The first condition, represents the fact that a big player with stake $a$ should not have an incentive to move from a Type A pool to a Type C pool (the latter implies that she also has no incentive to move to a type B pool).  The second condition states that no small player in a type A pool has an incentive to move to a type C pool (and this again implies that she has no incentive to move to a type B pool). The last condition represents the absence of motive for a small player to move from a type B pool to a type A pool. This also implies the same for the small players who belong to a type C pool.   The condition, which represents the fact that a big player with stake $a$ does not have an incentive to move to a Type A pool of another big player, is omitted since $\frac{\sqrt{a}}{\sqrt{a}+k}\geq \frac{\sqrt{a}}{\sqrt{a}+\sqrt{a}+k}\Leftrightarrow \sqrt{a}\geq 0$, always holds.  
\end{proof}

Using the previous lemma, the next step is to derive a general upper bound on the Price of Stability as a function of the relevant parameters of the game.

\begin{theorem}
\label{PoS_square_roots}
    Consider instances with $n$ players ($n=n_s+n_b$), so that $n_s \geq h^2 + n_b(\lceil h-\sqrt{a}+1 \rceil)$. Then, the proportional-to-square-roots reward sharing scheme has Price of Stability bounded by 
    $$ PoS \leq \frac{h+1}{h}\cdot\frac{(n_b\cdot a+n_s)}{(n_b\cdot(\sqrt{a}-1)+n_s-h-1)}.$$
\end{theorem}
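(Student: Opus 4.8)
The plan is to reuse the template of Theorem~\ref{thm:pos-general}: exhibit one explicit $(k,l)$-equilibrium $\Pi$ in which each large player sits alone (among large players) in its pool, and compare its number of winning pools $W(\Pi)$ against the crude bound $OPT(G)\le (n_b a+n_s)/h$, which holds because every winning pool needs stake at least $h$ while the total stake is $n_b a+n_s$. Since $PoS=\min_{\Pi\in NE}OPT(G)/W(\Pi)$, one good equilibrium suffices. Concretely I would set $l=h+1$, so that a Type~C pool has $l-1=h$ small players (stake exactly $h$, hence winning) and a Type~B pool has $h+1$; and I would take $k$ to be any integer in the interval $[\,h-\sqrt a,\;h+1-\sqrt a\,]$, which is nonempty since it has length $1$ (e.g.\ $k=\lceil h-\sqrt a\rceil$). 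One then checks the three conditions of Lemma~\ref{lem:NE_conditions_sqrt}: (2) $k\le l-\sqrt a$ and (3) $k\ge l-\sqrt a-1$ hold by the choice of $k$, while (1) $k\le l-1$ follows from $a>1\Rightarrow\sqrt a>1$, so $k\le h+1-\sqrt a<h=l-1$. Each Type~A pool is winning because $a+k\ge a+h-\sqrt a\ge h$ (using $a\ge\sqrt a$). Hence, by Lemma~\ref{lem:NE_conditions_sqrt}, any $(k,l)$-partition with these parameters is a Nash equilibrium.

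Next I would argue the construction is realizable with the given population. Placing each of the $n_b$ large players in its own Type~A pool consumes $n_b k$ small players, and since $k\le\lceil h-\sqrt a+1\rceil$, the hypothesis $n_s\ge h^2+n_b\lceil h-\sqrt a+1\rceil$ leaves a residual mass $M:=n_s-n_b k\ge h^2=(l-1)^2$ of small players. As in the proof of Theorem~\ref{thm:PoS}, these can be partitioned into Type~B/C pools: form $\lfloor M/h\rfloor\ge h$ pools of size $h$ (Type~C) and then move the fewer-than-$h$ leftover players one by one into distinct such pools, converting them to Type~B pools. This yields a valid $(k,l)$-partition $\Pi$, which is therefore an equilibrium.

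It remains to count and conclude. The equilibrium $\Pi$ has $W(\Pi)=n_b+q$ winning pools, where $q$ is the number of Type~B/C pools; since each of those holds at most $l=h+1$ small players and together they hold all $M=n_s-n_b k$ residual small players, $q\ge (n_s-n_b k)/(h+1)$, so
\[
W(\Pi)\ \ge\ n_b+\frac{n_s-n_b k}{h+1}\ =\ \frac{n_b(h+1)+n_s-n_b k}{h+1}\ \ge\ \frac{n_s+n_b\sqrt a}{h+1}\ \ge\ \frac{n_b(\sqrt a-1)+n_s-h-1}{h+1},
\]
where the third inequality uses $k\le h+1-\sqrt a$ and the fourth is immediate. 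Combining with $OPT(G)\le (n_b a+n_s)/h$ gives
\[
PoS\ \le\ \frac{OPT(G)}{W(\Pi)}\ \le\ \frac{h+1}{h}\cdot\frac{n_b a+n_s}{n_b(\sqrt a-1)+n_s-h-1},
\]
the claimed bound (the denominator is positive since $h\ge 2$ and $n_s\ge h^2$). The only delicate point in the whole argument is the bookkeeping around the parameter choice — producing one integer $k$ that simultaneously satisfies all inequalities of Lemma~\ref{lem:NE_conditions_sqrt}, keeps both Type~A and Type~C pools winning, and is small enough (at most $\lceil h-\sqrt a+1\rceil$) that the leftover small-player mass still admits a clean Type~B/C partition; everything else is routine arithmetic.
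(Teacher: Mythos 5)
Your proof is correct and follows essentially the same route as the paper: fix $l=h+1$, pick an integer $k$ in the window determined by Lemma~\ref{lem:NE_conditions_sqrt}, realize the $(k,l)$-partition using the assumed population of small players, and divide the crude bound $OPT(G)\le (n_b a+n_s)/h$ by the pool count $W(\Pi)\ge (n_s+n_b\sqrt{a})/(h+1)$. The only difference is your choice $k=\lceil h-\sqrt{a}\rceil$ versus the paper's $\lceil h-\sqrt{a}+1\rceil$; yours sits squarely inside the interval $[\,l-\sqrt{a}-1,\;l-\sqrt{a}\,]$ forced by conditions (2)--(3) of the lemma (and so is, if anything, the more careful choice), and since it is no larger than the paper's value, the hypothesis on $n_s$ still suffices and the same final bound follows.
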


\begin{proof}

We will show how to set values for $k$ and $l$ so that the conditions of Lemma \ref{lem:NE_conditions_sqrt} are satisfied. From the last two conditions of Lemma \ref{lem:NE_conditions_sqrt}, we get that for the value of $k$ it must hold that $l-\sqrt{a}\leq k\leq l-\sqrt{a}+1$. Hence, there is room for $k$ to have an integer value. For $l=h+1$, which is the lowest possible value for $l$ (since it must hold that $l-1\geq h$), we get that the lowest integer value for $k$ is $\lceil h-\sqrt{a}+1 \rceil$. We note that for this value of $k$ it can be verified that the first condition of Lemma \ref{lem:NE_conditions_sqrt} is also satisfied, and also that $k+a\geq h$, which is a requirement of the lemma. 
Therefore, given the values of $l=h+1$ and $k= \lceil h-\sqrt{a}+1 \rceil $, and if the number of small players is sufficiently large, we can construct a $(k, l)$-partition in the same way as in the last part in the proof of Theorem \ref{thm:PoS}. 
Recalling that  $n_b$ is the number of the big players, the total number of pools created for the values of $k$ and $l$ we use is at least:  $n_b+ \lfloor \frac{n-n_b-n_b\cdot(h-\sqrt{a}+2)}{h+1} \rfloor = \lfloor \frac{n_s+n_b\cdot(\sqrt{a}-1)}{h+1}\rfloor$. 

\noindent Coming now to the optimal formation, we know it always satisfies that $OPT(G) \leq \lfloor (\sum s_i)/h \rfloor$. Hence this means that  $OPT(G) \leq \lfloor\frac{n_b\cdot a+n_s}{h}\rfloor$. This yields that the Price of Stability for the  proportional-to-square-roots reward sharing scheme is at most $\frac{\frac{n_b\cdot a+n_s}{h}}{\frac{n_s+n_b\cdot \sqrt{a}}{h+1}-1}= \frac{h+1}{h}\cdot\frac{(n_b\cdot a+n_s)}{(n_b\cdot(\sqrt{a}-1)+n_s-h-1)}$.
\end{proof}

\noindent From the previous theorem, we can actually extract a constant upper bound of 2 on PoS. Under the conditions of Theorem \ref{PoS_square_roots}, we show that there always exists an equilibrium with $l-2=h$ and $k=\lceil h-\sqrt{a}+1\rceil\leq h-\sqrt{a}+2\leq h+2 $. 
To establish that we have an equilibrium for these choices of $k$ and $l$, it suffices to show that for $l=h+2$ and $k=\lceil h-\sqrt{a}+1\rceil$, the conditions of Lemma \ref{lem:NE_conditions_sqrt} are met. Firstly, for $l=h+2$ the first condition is met, since $h+2>k+1$. Moreover, for the value of $k$ we have that $\lceil h-\sqrt{a}+1\rceil\geq h-\sqrt{a}+1=l-\sqrt{a}-$ and also $\lceil h-\sqrt{a}+1\rceil\leq h-\sqrt{a}+2= l-\sqrt{a}$. Hence, the last two conditions are also satisfied. 

Finally, it can be verified that all pools in this $(k, l)$-partition have total stake at most $2h$.
This is because $l=h+2\leq 2h$ for any $h\geq 2$ and also $k+a=\lceil h-\sqrt{a}+1\rceil+a\leq 2h$, for any $2\leq a<h$. Therefore, this implies that the Price of Stability is at most 2.

At the same time, we can also have a matching lower bound on PoS, making the previous upper bound tight.
A simple instance in which the scheme achieves Price of Stability equal to 2 is the following: there is one player with stake $a$ and $2h-a$ players with stake 1. The optimal formation consists of 2 pools with total stake exactly $h$. From Lemma \ref{lem:NE_conditions_sqrt}, in order for a formation with $l=h$ to be an equilibrium, it must hold that $k\geq l-\sqrt{a}-1=h-\sqrt{a}-1\ge h-a$, for any $a\geq 3$ and hence there are not enough small players for a Nash equilibrium with 2 pools to exist. Thus only the grand coalition is an equilibrium, and hence PoS = 2.

 \noindent This concludes the proof of Theorem \ref{thm:PoS_square_roots}.

\subsection{Proof of Theorem \ref{thm:sq_Sybil}}

In order to show this, we use the $(k, l)$-equilibrium identified in Appendix \ref{appendix-square-roots} with $k=\lceil h-\sqrt{a}-1\rceil$. The reward of each large agent then is $\frac{\sqrt{a}}{ \sqrt{a}+h-\lceil\sqrt{a}\rceil-1}$. Consider the scenario where a large agent $i$ splits her stake into single unit amounts and commits them to other Type A pools (we consider an instance where there are at least $a+1$ large players). The total reward she receives in that case is: $\frac{a}{\sqrt{a}+ h-\lceil\sqrt{a}\rceil }$.
We can now check that 
$\frac{a}{ \sqrt{a}+h-\lceil\sqrt{a}\rceil } > \frac{\sqrt{a}}{ \sqrt{a}+h-\lceil\sqrt{a}\rceil-1}$, for any $2\leq a<h$, thus the player receives more than committing to a single pool.

\ignore{ %
\section{A candidate definition for the skin in the game}
\label{app:sec-skin}

To arrive at a concrete example of Definition \ref{def:skin}, 
motivated by the discussion so far, we could have a threshold-based concept, where for a threshold $\beta$, the  skin in the game for a partition $(C_1,\dots, C_m)$ can be taken to equal:

    \setlength{\belowdisplayskip}{-1pt} \setlength{\belowdisplayshortskip}{-1pt}
    \setlength{\abovedisplayskip}{-1pt} \setlength{\abovedisplayshortskip}{-1pt}
    \begin{align*}
       \sum_{i=1}^m \max\{0, \max_{j\in C_i} \{s_j: s_j\geq \beta\}\} 
    \end{align*}   

The rationale for this formula is that if the maximum stakeholder in a coalition does not have significant stake, then the pool does not have any skin in the game. Therefore, in our model with large and small players, the ideal scenario 
is that each large player goes into a pool with only small players. This is an additional motivation for the $(k, l)$-equilibria that we have studied in this work. 
With this definition, the following is an easy corollary.

\begin{corollary}
\label{cor:skin}
    For all the considered schemes, there exists an equilibrium with maximum possible skin in the game, equal to $n_b\cdot a$, for any $\beta \in (1, a)$, for $n$ is sufficiently large. 
\end{corollary}

Although we do not observe any differentiation of the 4 schemes according to this definition, it would be interesting to investigate further if there are more refined ways to capture the skin in the game, and explore if this creates any differences among the schemes we studied here. 
}%

\end{document}